\DeclareMathAlphabet\mathbfcal{OMS}{cmsy}{b}{n}
\tikzset{snake it/.style={decorate, decoration=snake}}
\tikzset{
    >=stealth',
    punkt/.style={
           rectangle,
           rounded corners,
           draw=black, very thick,
           text width=6.5em,
           minimum height=2em,
           text centered},
    pil/.style={
           ->,
           thick,
           shorten <=2pt,
           shorten >=2pt,},
  on each segment/.style={
    decorate,
    decoration={
      show path construction,
      moveto code={},
      lineto code={
        \path [#1]
        (\tikzinputsegmentfirst) -- (\tikzinputsegmentlast);
      },
      curveto code={
        \path [#1] (\tikzinputsegmentfirst)
        .. controls
        (\tikzinputsegmentsupporta) and (\tikzinputsegmentsupportb)
        ..
        (\tikzinputsegmentlast);
      },
      closepath code={
        \path [#1]
        (\tikzinputsegmentfirst) -- (\tikzinputsegmentlast);
      },
    },
  },
  mid arrow/.style={postaction={decorate,decoration={
        markings,
        mark=at position .5 with {\arrow[#1]{stealth'}}
      }}}
}
\newtheorem{theorem}{Theorem}
\newtheorem{conjecture}[theorem]{Conjecture}
\newtheorem{corollary}[theorem]{Corollary}
\newtheorem{definition}[theorem]{Definition}
\newtheorem{lemma}[theorem]{Lemma}
\newtheorem{remark}[theorem]{Remark}
\newenvironment{proof}[1][Proof]{\noindent\textbf{#1.}}{\ \rule{0.5em}{0.5em}}
\begin{document}

\title{Relating non-local quantum computation to \newline information theoretic cryptography}

\author[3]{Rene Allerstorfer}
\email{Rene.Allerstorfer@cwi.nl}
\orcid{}

\author[3,4]{Harry Buhrman}
\email{Harry.Buhrman@cwi.nl}
\orcid{}

\author[2]{Alex May}
\email{amay@perimeterinstitute.ca}
\orcid{0000-0002-4030-5410}

\author[3,4]{Florian Speelman}
\email{f.speelman@uva.nl}
\orcid{}

\author[3]{Philip Verduyn Lunel}
\email{philip.verduyn.lunel@cwi.nl}
\orcid{}

\affiliation[2]{Perimeter Institute for Theoretical Physics}
\affiliation[3]{QuSoft, CWI Amsterdam}
\affiliation[4]{University of Amsterdam}

\abstract{
Non-local quantum computation (NLQC) is a cheating strategy for position-verification schemes, and has appeared in the context of the AdS/CFT correspondence. 
Here, we connect NLQC to the wider context of information theoretic cryptography by relating it to a number of other cryptographic primitives. 
We show one special case of NLQC, known as $f$-routing, is equivalent to the quantum analogue of the conditional disclosure of secrets (CDS) primitive, where by equivalent we mean that a protocol for one task gives a protocol for the other with only small overhead in resource costs. 
We further consider another special case of position-verification, which we call coherent function evaluation (CFE), and show CFE protocols induce similarly efficient protocols for the private simultaneous message passing (PSM) scenario. 
By relating position-verification to these cryptographic primitives, a number of results in the information theoretic cryptography literature give new implications for NLQC, and vice versa. 
These include the first sub-exponential upper bounds on the worst case cost of $f$-routing of $2^{O(\sqrt{n\log n})}$ entanglement, the first example of an efficient $f$-routing strategy for a problem believed to be outside $P/poly$, linear lower bounds on quantum resources for CDS in the quantum setting, linear lower bounds on communication cost of CFE, and efficient protocols for CDS in the quantum setting for functions that can be computed with quantum circuits of low $T$ depth.  
}
\maketitle

\pagebreak

\tableofcontents

\flushbottom

\section{Introduction}\label{sec:intro}

In a position-verification scenario, a verifier attempts to determine the location of a prover by communicating with them remotely \cite{chandran2009position,kent2011quantum,buhrman2014position}.
Position-verification may be of interest as a goal in itself, or may serve as an authentication mechanism for use towards further cryptographic goals. 
In the most widely studied setting, where the prover holds no secret key, an adversary may use a strategy known as non-local quantum computation to simulate the actions of the prover.
A non-local quantum computation replaces local actions within a designated spacetime region with actions outside that region along with entanglement shared across it.
The basic setting is shown in figure \ref{fig:non-localandlocal}. 

\begin{figure*}
    \centering
    \begin{subfigure}{0.45\textwidth}
    \centering
    \begin{tikzpicture}[scale=0.6]
    
    \draw[thick] (-1,-1) -- (-1,1) -- (1,1) -- (1,-1) -- (-1,-1);
    
    \draw[thick] (-3.5,-3) to [out=90,in=-90] (-0.5,-1);
    \draw[thick] (3.5,-3) to [out=90,in=-90] (0.5,-1);
    
    \draw[thick] (0.5,1) to [out=90,in=-90] (3.5,3);
    \draw[thick] (-0.5,1) to [out=90,in=-90] (-3.5,3);
    
    \node at (0,0) {$\mathbf{U}$};
    
    \node at (0,-5) {$ $};
    
    \end{tikzpicture}
    \caption{}
    \label{fig:local}
    \end{subfigure}
    \hfill
    \begin{subfigure}{0.45\textwidth}
    \centering
    \begin{tikzpicture}[scale=0.4]
    
    \draw[thick] (-5,-5) -- (-5,-3) -- (-3,-3) -- (-3,-5) -- (-5,-5);
    \node at (-4,-4) {$\mathbfcal{V}^L$};
    
    \draw[thick] (5,-5) -- (5,-3) -- (3,-3) -- (3,-5) -- (5,-5);
    \node at (4,-4) {$\mathbfcal{V}^R$};
    
    \draw[thick] (5,5) -- (5,3) -- (3,3) -- (3,5) -- (5,5);
    \node at (4,4) {$\mathbfcal{W}^R$};
    
    \draw[thick] (-5,5) -- (-5,3) -- (-3,3) -- (-3,5) -- (-5,5);
    \node at (-4,4) {$\mathbfcal{W}^L$};
    
    \draw[thick] (-4.5,-3) -- (-4.5,3);
    
    \draw[thick] (4.5,-3) -- (4.5,3);
    
    \draw[thick] (-3.5,-3) to [out=90,in=-90] (3.5,3);
    
    \draw[thick] (3.5,-3) to [out=90,in=-90] (-3.5,3);
    
    \draw[thick] (-3.5,-5) to [out=-90,in=-90] (3.5,-5);
    \draw[black] plot [mark=*, mark size=3] coordinates{(0,-7.05)};
    
    \draw[thick] (-4.5,-6) -- (-4.5,-5);
    \draw[thick] (4.5,-6) -- (4.5,-5);
    
    \draw[thick] (4.5,5) -- (4.5,6);
    \draw[thick] (-4.5,5) -- (-4.5,6);
    
    
    \end{tikzpicture}
    \caption{}
    \label{fig:non-localcomputation}
    \end{subfigure}
    \caption{(a) Circuit diagram showing the local implementation of a unitary in terms of a unitary $\mathbf{U}$. In position-verification, an honest prover implements the required unitary in this form. (b) Circuit diagram showing the non-local implementation of a unitary $\mathbf{U}$. $\mathbfcal{V}^L$, $\mathbfcal{V}^R$, $\mathbfcal{W}^L$, and $\mathbfcal{W}^R$ are quantum channels. The lower, bent wire represents an entangled state. In position-verification, a dishonest prover must use a circuit of this form to implement a required unitary.}
    \label{fig:non-localandlocal}
\end{figure*}
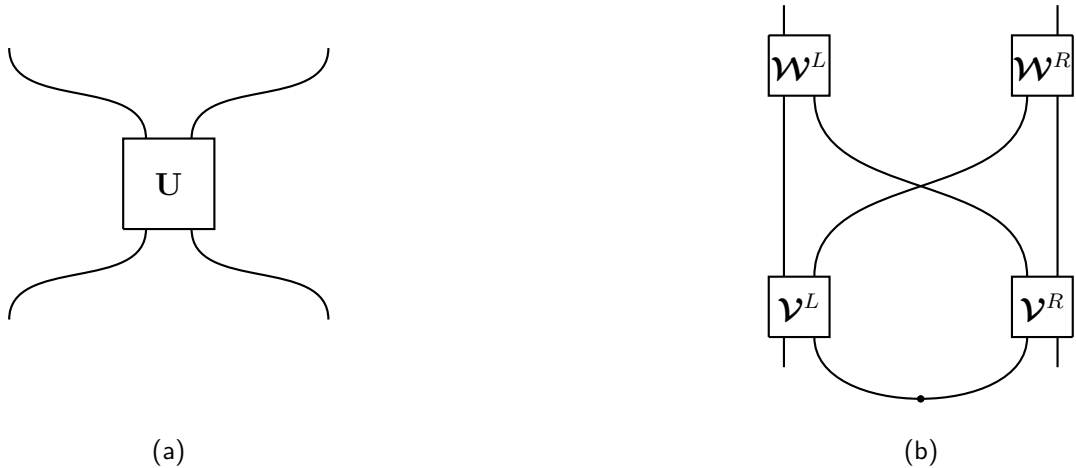

Non-local quantum computation has also been understood to arise naturally in the context of quantum gravity \cite{may2019quantum,dolev2022holography,may2023non}, in particular within the context of the AdS/CFT correspondence. 
There, a higher dimensional theory with gravity is given an equivalent description without gravity. 
In these two descriptions, processes that occur as local interactions in the higher dimensional theory are reproduced in the dual, lower dimensional description as non-local computations. 
This connection has lead to consequences for the gravitational theory \cite{may2020holographic,may2022connected}, and discussion around consequences for non-local computation  \cite{may2022complexity}. 

Because of the connections to position-verification and quantum gravity, position-verification and the related task of non-local computation have been studied by a number of authors, but basic questions remain open. 
In particular we have linear lower bounds on entanglement \cite{tomamichel2013monogamy} in a non-local computation, and exponential upper bounds \cite{beigi2011simplified}, with only a little known in between. 
For a special case of a non-local computation known as $f$-routing, where each instance is defined by a classical Boolean function $f$, the entanglement cost has been upper bounded by the size of span program computing $f$ \cite{cree2022code}, so that the class $Mod_kL$/poly\footnote{This class is reviewed in section \ref{sec:complexitymeasures}. It is inside of NC$^2$, the class of functions computed by $(\log(n))^2$ depth circuits.}  can be achieved efficiently.\footnote{This builds on earlier work \cite{buhrman2013garden} achieving the class $L$.} 
For general unitaries, Clifford unitaries can be implemented with linear entanglement, and circuits with $T$ depth of $\log n$ can be implemented with polynomial entanglement \cite{speelman2015instantaneous}. 

\begin{figure*}
    \centering
    \begin{subfigure}{0.45\textwidth}
    \centering
    \begin{tikzpicture}[scale=0.4]
    
    \draw[thick] (-5,-5) -- (-5,-3) -- (-3,-3) -- (-3,-5) -- (-5,-5);
    
    \draw[thick] (5,-5) -- (5,-3) -- (3,-3) -- (3,-5) -- (5,-5);
    
    \draw[thick] (5,5) -- (5,3) -- (3,3) -- (3,5) -- (5,5);
    
    \draw[thick] (4,-3) -- (4.5,3);
    
    \draw[thick] (-4,-3) to [out=90,in=-90] (3.5,3);
    
    \draw[thick,dashed] (-3.5,-5) to [out=-90,in=-90] (3.5,-5);
    \draw[black] plot [mark=*, mark size=3] coordinates{(0,-7.05)};
    
    \draw[thick] (-4.5,-6) -- (-4.5,-5);
    \node[below] at (-4.5,-6) {$x,s,r$};
    
    \draw[thick] (4.5,-6) -- (4.5,-5);
    \node[below] at (4.5,-6) {$y$};

    \node[below] at (3.5,-6) {$r$};

    \node[left] at (0,1) {$m_0(x,s,r)$};
    \node[right] at (4.5,0) {$m_1(y,r)$};
    
    \draw[thick] (4,5) -- (4,6);
    \node[above] at (4,6) {$s$};
    
    \end{tikzpicture}
    \caption{}
    \label{fig:CDS}
    \end{subfigure}
    \hfill
    \begin{subfigure}{0.45\textwidth}
    \centering
        \begin{tikzpicture}[scale=0.4]
    
    \draw[thick] (-5,-5) -- (-5,-3) -- (-3,-3) -- (-3,-5) -- (-5,-5);
    
    \draw[thick] (5,-5) -- (5,-3) -- (3,-3) -- (3,-5) -- (5,-5);
    
    \draw[thick] (5,5) -- (5,3) -- (3,3) -- (3,5) -- (5,5);
    
    \draw[thick] (4,-3) -- (4.5,3);
    
    \draw[thick] (-4,-3) to [out=90,in=-90] (3.5,3);
    
    \draw[thick,dashed] (-3.5,-5) to [out=-90,in=-90] (3.5,-5);
    \node[below] at (-3.5,-6) {$r$};
    \draw[black] plot [mark=*, mark size=3] coordinates{(0,-7.05)};
    \node[below] at (3.5,-6) {$r$};

    \node[left] at (0,1) {$m_0(x,r)$};
    \node[right] at (4.5,0) {$m_1(y,r)$};
    
    \draw[thick] (-4.5,-6) -- (-4.5,-5);
    \node[below] at (-4.5,-6) {$x$};
    
    \draw[thick] (4.5,-6) -- (4.5,-5);
    \node[below] at (4.5,-6) {$y$};
    
    \draw[thick] (4,5) -- (4,6);
    \node[above] at (4,6) {$f(x,y)$};
    
    \end{tikzpicture}
    \caption{}
    \label{fig:PSM}
    \end{subfigure}
    \caption{(a) A conditional disclosure of secrets (CDS) protocol. In the classical setting, Alice and Bob share randomness but do no communicate. They receive inputs $x$ and $y$ respectively. Alice additionally holds a secret $s$. They send messages to the referee. The protocol is correct if the referee can recover $s$ from the messages if and only if $f(x,y)=1$. 
    In the quantum setting, the randomness may be replaced by entanglement and the messages and secret can be quantum. 
    (b) A private simultaneous message protocol (PSM). Again Alice and Bob do not communicate but share randomness. They hold inputs $x$ and $y$ respectively. The referee should be able to learn $f(x,y)$ but nothing else about $(x,y)$. In the quantum setting the randomness is replaced with entanglement, and the messages can be quantum.}
    \label{fig:PSMandCDS}
\end{figure*}
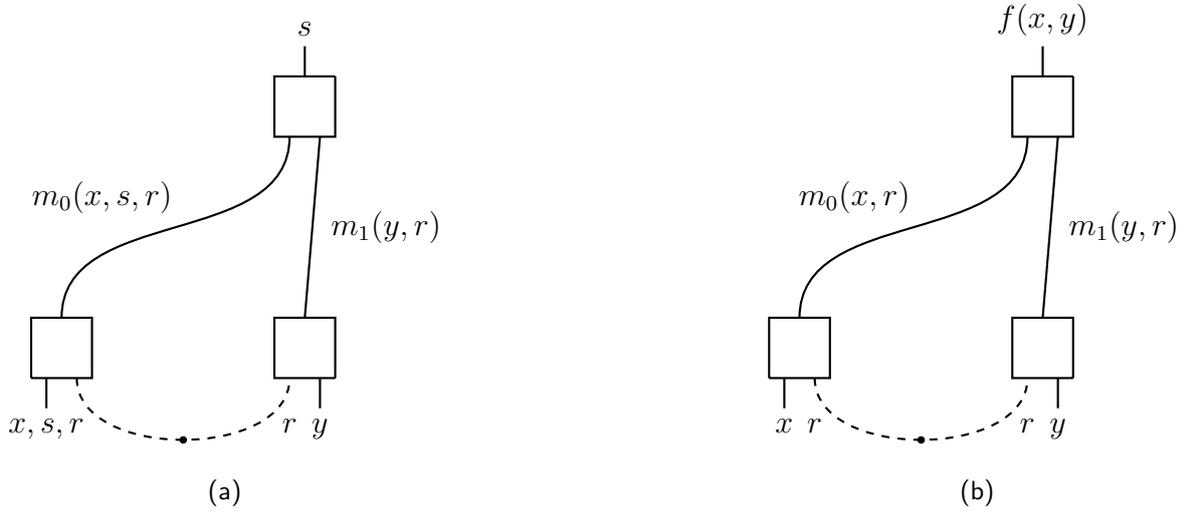

In this article we prove connections between two well studied cryptographic primitives, conditional disclosure of secrets (CDS) \cite{GERTNER2000592} and private simultaneous message passing (PSM) \cite{ishai1997private}, and non-local quantum computation. 
These primitives are studied in the context of information theoretic cryptography, in particular in their relationship to secure multiparty computation \cite{aiello2001priced,ishai2010secure}, private information retrieval \cite{GERTNER2000592}, secret sharing \cite{applebaum2020power}, and other cryptographic goals \cite{beimel2018complexity}. 
We illustrate their functionality in figure \ref{fig:PSMandCDS}. 
Both settings generally involve $k$ parties along with a referee, but in this work we focus on the $k=2$ case, which is the setting we relate to non-local computation. 
In CDS, two non-communicating parties, Alice and Bob, receive inputs $x$ and $y$ respectively. Alice additionally holds a secret $s$. 
Alice and Bob compute messages $m_0(x,s,r)$ and $m_1(y,r)$ based on their inputs and shared randomness, which are then sent to the referee. 
The referee should be able to recover the secret $s$ if and only if $f(x,y)=1$. 
PSM is a similar setting. 
There, Alice and Bob have inputs $x$ and $y$ along with shared randomness. 
They send messages $m_{0}(x,r)$ and $m_1(y,r)$ to the referee. 
The referee should be able to compute $f(x,y)$ from the messages, but not learn anything else about the inputs $(x,y)$ than is implied by the value of $f(x,y)$. 
We give formal definitions of both primitives in section \ref{sec:primitivedefinitions}. 

To relate these primitives to non-local computation, we first show that the natural quantum generalization of CDS, which we denote as conditional disclosure of quantum secrets (CDQS), is equivalent to the $f$-routing task.
More specifically, protocols for CDQS induce similarly efficient protocols for $f$-routing, and vice versa. 
Further, we show that classical CDS protocols induce similarly efficient quantum protocols. 
We also introduce a special case of non-local quantum computation known as a coherent function evaluation (CFE), which we show is closely related to the PSM model: efficient CFE protocols induce efficient PSM protocols using quantum resources (PSQM).
We also give a weak converse that shows good PSQM protocols induce CFE protocols that succeed with constant (independent of the input size) probability.\footnote{We only prove this in the exact setting, while all other implications allow for small errors in correctness and small security leakage.}
The status of the relationship among these primitives is shown in figure \ref{fig:web}.

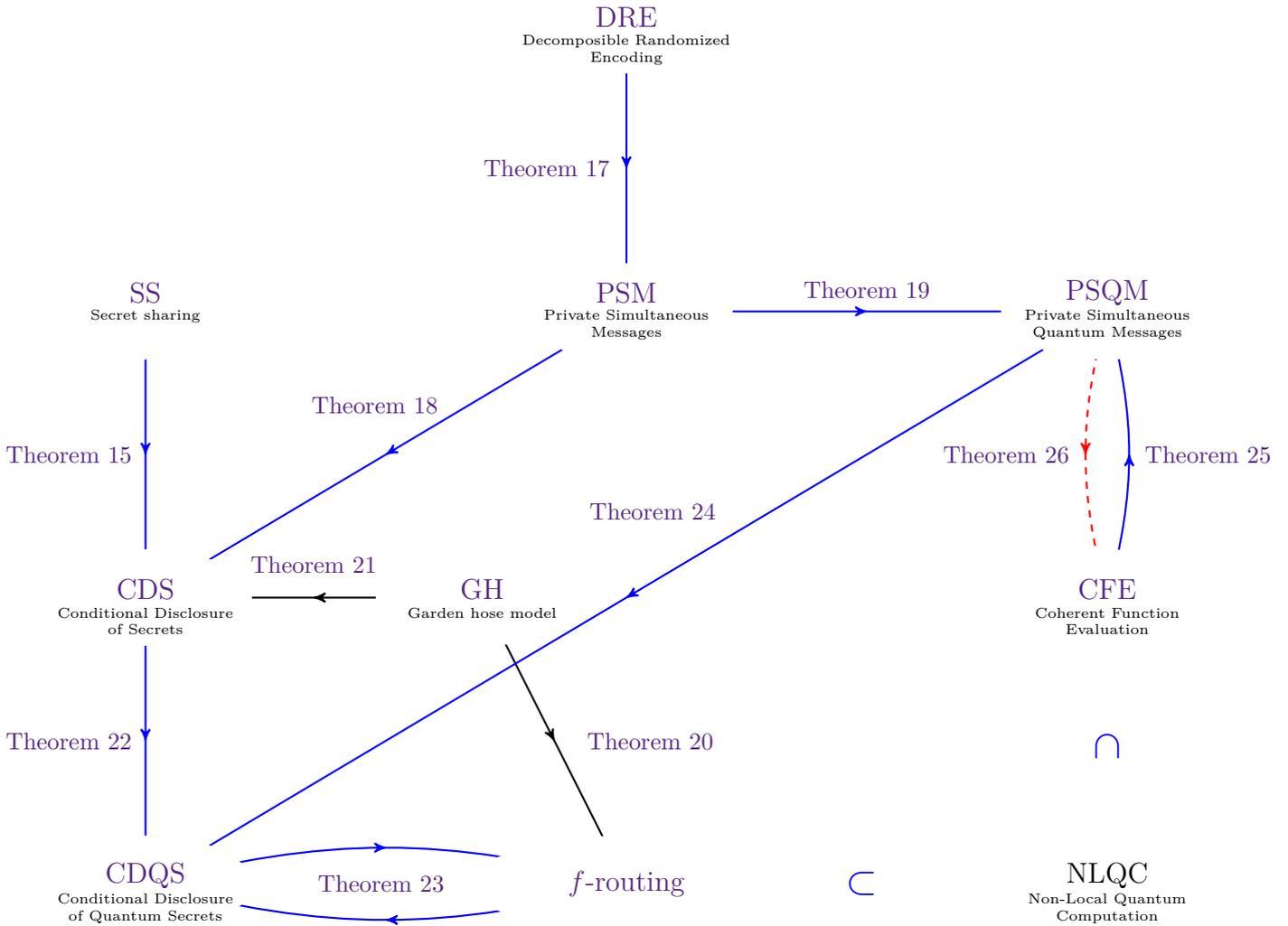
\begin{figure}
    \centering
    \begin{tikzpicture}[scale=1.4]

    \coordinate (CDS) at (0,3);
    \coordinate (PSQM) at (10,6);
    \coordinate (PSM) at (5,6);
    \coordinate (CDQS) at (0,0);
    \coordinate (fR) at (5,0); 
    \coordinate (NLQC) at (10,0);
    \coordinate (DRE) at (5,9);
    \coordinate (CFE) at (10,3);
    \coordinate (SS) at (0,6);
    \coordinate (QSS) at (-3,3);
    \coordinate (GH) at (3.5,3);

    \draw[thick, mid arrow] (GH) -- (fR);
    \node at (5.25,1.5) {\hyperref[thm:GHtofR]{\footnotesize Theorem~\ref*{thm:GHtofR}}};

    \draw[thick, mid arrow] (GH) to[out=180,in=0] (CDS);
    \node[font=\footnotesize] at (1.75,3.35) {\hyperref[thm:GHtoCDS]{\footnotesize Theorem~\ref*{thm:GHtoCDS}}};

    \draw[thick, mid arrow,blue] (CDQS) to[out=15,in=165] (fR);
    \node at (2.45,0) {\hyperref[thm:CDQSandfRouting]{\footnotesize Theorem~\ref*{thm:CDQSandfRouting}}};
    \draw[thick, mid arrow,blue] (fR) to[out=-165,in=-15] (CDQS);
    
    \draw[thick, mid arrow,blue] (PSM) -- (CDS) node[midway, left, font=\footnotesize, xshift=5ex, yshift=4ex] {\hyperref[thm:PSMgivesCDS]{Theorem~\ref*{thm:PSMgivesCDS}}};
    
    \draw[thick, mid arrow,blue] (PSM) -- (PSQM) node[midway, above, font=\footnotesize, yshift=0.25ex] {\hyperref[thm:PSMgivesPSQM]{Theorem~\ref*{thm:PSMgivesPSQM}}};;

    \path[draw=none, blue] (fR) edge[draw=none, transform shape] node[right, xshift=-2ex] {\rotatebox{0}{$\subset$}} (NLQC);

    \draw[thick, mid arrow, blue] (CDS) -- (CDQS) node[midway, left
    , font=\footnotesize, xshift=-0.25ex] {\hyperref[thm:CDStoCDQS]{Theorem~\ref*{thm:CDStoCDQS}}};

    \draw[thick, mid arrow,blue] (PSQM) -- (CDQS) node[midway, left, font=\footnotesize, xshift=8ex, yshift=7ex] {\hyperref[thm:PSQMtoCDQS]{Theorem~\ref*{thm:PSQMtoCDQS}}};

    \draw[thick, mid arrow,blue] (DRE) -- (PSM) node[midway, left
    , font=\footnotesize, xshift=-0.5ex] {\hyperref[thm:DREgivesPSM]{Theorem~\ref*{thm:DREgivesPSM}}};

    \path[draw=none, blue] (NLQC) edge[draw=none, transform shape] node[above, yshift=-2ex] {\rotatebox{270}{$\subset$}} (CFE);
    
    \draw[thick, mid arrow,blue] (CFE) to[out=75,in=-75] (PSQM) node at (11.05,4.5) {\hyperref[thm:CFEtoPSQM]{\footnotesize Theorem~\ref*{thm:CFEtoPSQM}}};
    
    \draw[thick, mid arrow,dashed,red] (PSQM) to[out=-105,in=105] (CFE) node at (8.95,4.5) {\hyperref[thm:PSQMtoCFEWeak]{\footnotesize Theorem~\ref*{thm:PSQMtoCFEWeak}}};

    \draw[thick, mid arrow,blue] (SS) -- (CDS) node[midway, left, xshift=-0.25ex] {\hyperref[thm:SSgivesCDS]{\footnotesize Theorem~\ref*{thm:SSgivesCDS}}};

    \filldraw[color=white, fill=white] (fR) ellipse (1.6 and 0.5);
    \filldraw[color=white, fill=white] (CDS) ellipse (1.1 and 0.5);
    \filldraw[color=white, fill=white] (CDQS) ellipse (1.1 and 0.5);
    \filldraw[color=white, fill=white] (CDS) ellipse (1.1 and 0.5);
    \filldraw[color=white, fill=white] (PSM) ellipse (1.1 and 0.5);
    \filldraw[color=white, fill=white] (PSQM) ellipse (1.1 and 0.5);
    \filldraw[color=white, fill=white] (NLQC) ellipse (1.1 and 0.5);
    \filldraw[color=white, fill=white] (DRE) ellipse (1.1 and 0.5);
    \filldraw[color=white, fill=white] (CFE) ellipse (1.1 and 0.5);
    \filldraw[color=white, fill=white] (SS) ellipse (1.1 and 0.5);
    \filldraw[color=white, fill=white] (GH) ellipse (1.1 and 0.5);

    \node at ([yshift=1.1ex] SS) {\hyperref[def:SS]{SS}};
    \node[below, align=center, font=\tiny, yshift=0.9ex] at (SS) {Secret sharing};

    \node at ([yshift=1.1ex] PSQM) {\hyperref[def:PSQM]{PSQM}};
    \node[below, align=center, font=\tiny, yshift=0.9ex] at (PSQM) {Private Simultaneous \\ Quantum Messages};
    
    \node at ([yshift=0.5ex] CDS) {\hyperref[def:CDS]{CDS}};
    \node[below, align=center, font=\tiny, yshift=0ex] at (CDS) {Conditional Disclosure \\ of Secrets};
    
    \node at ([yshift=1.1ex] PSM) {\hyperref[def:PSM]{PSM}};
    \node[below, align=center, font=\tiny, yshift=0.9ex] at (PSM) {Private Simultaneous \\ Messages};
        
    \node at ([yshift=0.5ex] CDQS) {\hyperref[def:CDQS]{CDQS}};
    \node[below, align=center, font=\tiny, yshift=0ex] at (CDQS) {Conditional Disclosure \\ of Quantum Secrets};
    
    \node at ([yshift=0.5ex] NLQC) {NLQC};
    \node[below, align=center, font=\tiny, yshift=0ex] at (NLQC) {Non-Local Quantum \\ Computation};
    
    \node at ([yshift=0.5ex] DRE) {\hyperref[def:DRE]{DRE}};
    \node[below, align=center, font=\tiny, yshift=0ex] at (DRE) {Decomposible Randomized \\ Encoding};
    
    \node at (fR) {\hyperref[def:frouting]{$f$-routing}};
    
    \node at ([yshift=0.5ex] CFE) {\hyperref[def:CFE]{CFE}};
    \node[below, align=center, font=\tiny, yshift=0ex] at (CFE) {Coherent Function \\ Evaluation};


    \node at ([yshift=0.5ex] GH) {\hyperref[def:gh]{GH}};
    \node[below, align=center, font=\tiny, yshift=0ex] at (GH) {Garden hose model};
            
    \end{tikzpicture}
    \caption{Implications among primitives: an arrow from X to Y says that a protocol for X implies a protocol for Y with the same efficiencies (up to constant overheads). All implications shown in blue hold in the robust setting where we allow small errors and leakages. The dashed red line indicates that a perfect PSQM protocol that succeeds with high probability implies a CFE protocol that succeeds with constant probability. The subset symbol $\subset$ indicates that $f$-routing and CFE are special cases of NLQC. Primitive abbreviations (DRE, PSM, ...) and theorem numbers link to relevant proofs or definitions.}
    \label{fig:web}
\end{figure}

Our results relate position-verification to the wider setting of information-theoretic cryptography.
This provides a partial explanation of the difficulty of finding better upper and lower bounds in non-local computation, since we now see that doing so would resolve other long-standing questions in cryptography\footnote{For example lower bounds on $f$-routing give lower bounds on (classical) CDS.}. 
In a positive direction, we use results in NLQC to give new results on CDS and PSM, and vice versa. 
Our key results are,
\begin{itemize}
    \item Sub-exponential upper bounds on entanglement cost in $f$-routing for an arbitrary function (corollary \ref{corollary:subexpfroute})
    \item Efficient CDQS and $f$-routing protocols for the quadratic residuosity problem, the first problem not known to be in P/poly with an efficient non-local computation protocol (corollaries \ref{corolary:CDSandCDQSoutsideP} and \ref{corollary:fRouteoutsideP})
\end{itemize}
These results represent significant changes in our understanding of the efficiency of $f$-routing protocols. 
Previously the best upper bounds for arbitrary functions were exponential, and the highest complexity functions with known efficient schemes were in Mod$_k$L/poly. 

From our connections between CDS, PSM, and NLQC, we also obtain a number of other implications, 
\begin{itemize}
    \item Linear lower bounds on communication complexity in CFE (corollary \ref{corollary:CFElowerbound})
    \item Linear lower bounds on entanglement in CDQS and PSQM for random functions (corollaries \ref{corollary:CDQSlowerbound} and \ref{corollary:PSQMlowerbound}), and logarithmic lower bounds on entanglement for the inner product function (corollary \ref{corollary:CDQSlogbound} and \ref{corollary:PSQMlogbound})
    \item An entanglement efficient protocol for CDQS and PSQM when the target function $f$ can be evaluated by a quantum circuit with low $T$-depth (corollaries \ref{corollary:Tdepthandf-route} and \ref{corollary:PSQMandTdepth})
\end{itemize}
More broadly our results take position-verification from being an `island' in the space of cryptographic primitives, with no known classical analogues or connections to other more standard notions, to being richly connected to a web of interrelated primitives, which themselves are related to central goals in information theoretic cryptography. 
We hope these results lead to new perspectives on position-verification, and new perspectives in the study of CDS, PSM and related primitives. 
In particular a number of classical results on CDS and PSM may find natural quantum extensions in the context of NLQC.
In the discussion we comment on some cases where quantum analogues in the NLQC setting of classical cryptographic results are not yet known. 

\vspace{0.2cm}
\noindent \textbf{Outline of this article} 
\vspace{0.2cm}

In section \ref{sec:background}, we present some relevant background. 
Section \ref{sec:QItools} gives a summary of the quantum information tools we exploit. 
Section \ref{sec:primitivedefinitions} summarizes the various cryptographic primitives which we study and relate. 
Section \ref{sec:existingrelationships} gives the already known relations among these primitives. 

In section \ref{eq:newrelationships} we prove new relationships among our set of cryptographic primitives. 
The full set of connections is presented as figure \ref{fig:web}. 
The relationships between CDS and CDQS, CDQS and $f$-routing, CFE and PSQM, and CDQS and PSQM are new to the best of our knowledge. 

In section \ref{sec:complexity} we summarize the known results on the complexity of efficiently achievable functions in the PSQM, CDQS and $f$-routing settings. 
The status of the complexity of efficiently achievable functions in the general case is not too changed by our results: existing CDS protocols give $f$-routing protocols, but in the existing literature on both $f$-routing and CDS the most efficient protocols have a cost like $(\log p) \cdot SP_p(f)$ where $SP(f)$ denotes the minimal size of a span program over $\mathbb{Z}_p$ computing $f$ \cite{GERTNER2000592,cree2022code}.

Sections \ref{sec:newlowerbounds} and \ref{sec:newprotocols} spell out the implications for non-local computation and its special cases that follow from known results in CDS and PSM, and conversely the implications for CDS and PSM that follow from known results in non-local computation. 
In section \ref{sec:newlowerbounds} we give new lower bounds that follow in this way. 
In section \ref{sec:newprotocols} we give new upper bounds. 
Our new upper bounds include the most significant implications that follow from the connections we find, which are sub-exponential upper bounds on $f$-routing for arbitrary functions and an efficient scheme for a function believed to be outside of $P/poly$. 

Section \ref{sec:discussion} concludes with some discussion and open problems, in particular commenting on connections to quantum gravity and to some results in the classical cryptography literature that may have quantum analogues relevant to the NLQC setting.  

\section{Background}\label{sec:background}

\subsection{Tools from quantum information theory}\label{sec:QItools}

In this section we briefly recall some standard tools of quantum information theory. 
We follow the conventions of \cite{wilde2013quantum}, where an overview of these tools and further references can also be found. 

\vspace{0.2cm}
\noindent \textbf{Probability distributions}
\vspace{0.2cm}

Given a random variable $X$, we label a probability distribution of $X$ by $P_X$. 
For the distribution of $X$ conditioned on $Y$, we use $P_{X|Y}$. 
When the conditioning distribution $Y$ takes the value $y\in Y$ we denote the resulting distribution on $X$ by $P_{X|y=Y}\equiv P_{X|y}$. 

\vspace{0.2cm}
\noindent \textbf{Quantum one-time pad}
\vspace{0.2cm}

The quantum one-time pad \cite{ambainis2000private} uses classical randomness to conceal quantum information. 
To understand this, suppose that Alice wishes to give Bob a quantum system $B$, but wants Bob to only obtain $B$ if he also knows a classical key $k$. 
Supposing that $B$ consists of qubits, Alice can do this by applying a random Pauli string $P^k_B$. 
If Bob does not know $k$, $B$ is hidden to him since
\begin{align}
    \frac{1}{2^{|k|}} \sum_k P^k_B \rho_{AB} P^k_B = \rho_A \otimes \frac{\mathcal{I}_B}{d_B}
\end{align}
where the index $k$ ranges over all choices of Pauli strings, and $\mathcal{I}$ represents the identity operator. 
On the other hand, if Bob knows $k$ he can undo the Pauli string and recover the $B$ system. 

\vspace{0.2cm}
\noindent \textbf{Distance measures and inequalities}
\vspace{0.2cm}

Let $\mathcal{D}(\mathcal{H}_A)$ be the set of density matrices on the Hilbert space $\mathcal{H}_A$. 
Given two density matrices $\rho$, $\sigma\in \mathcal{D}(\mathcal{H}_A)$,  define the fidelity,
\begin{align}
    F(\rho,\sigma) \equiv \left( \tr\left(\sqrt{\sqrt{\rho}\,\sigma\sqrt{\rho}}\right)\right)^2
\end{align}
which is related to the one norm distance $||\rho-\sigma||_1$ by the Fuchs van de Graff inequalities, 
\begin{align}
    1- \sqrt{F(\rho,\sigma)} \leq \frac{1}{2}||\rho-\sigma||_1 \leq \sqrt{1-F(\rho,\sigma)}.
\end{align}

It will also be useful to define the diamond norm distance, which is a distance measure on the space of quantum channels. 
\begin{definition} Let $\mathbfcal{N}_{B\rightarrow C}, \mathbfcal{M}_{B\rightarrow C}: \mathcal{L}(\mathcal{H}_A)\rightarrow \mathcal{L}(\mathcal{H}_B)$ be quantum channels. 
The \textbf{diamond norm distance} is defined by 
\begin{align}
    ||\mathbfcal{N}_{B\rightarrow C}-\mathbfcal{M}_{B\rightarrow C}||_\diamond = \sup_{d} \max_{\rho_{A_dB}}||\mathbfcal{N}_{B\rightarrow C}(\Psi_{A_dB}) - \mathbfcal{M}_{B\rightarrow C}(\Psi_{A_dB})||_1
\end{align}
where $\rho_{A_dB}\in \mathcal{D}(\mathcal{H}_{A_d}\otimes \mathcal{H}_B)$ and $\mathcal{H}_{A_d}$ is a $d$ dimensional Hilbert space. 
\end{definition}
The diamond norm distance has an operational interpretation in the terms of the maximal probability of distinguishing quantum channels \cite{kitaev2002classical,wilde2013quantum}. 

\vspace{0.2cm}
\noindent \textbf{Decoupling and recovery}
\vspace{0.2cm}

The basic idea underlying the connection between CDS and $f$-routing that we will give is the notion of decoupling and complementary recovery. 
To develop this, consider a quantum channel $\mathbfcal{N}_{B\rightarrow C}: \mathcal{L}(\mathcal{H}_B)\rightarrow \mathcal{L}(\mathcal{H}_C)$. 
We would like to understand when this channel has an (approximate) inverse. 
Consider any unitary extension of the channel, call it $\mathbf{V}_{BE'\rightarrow CE}$, which satisfies
\begin{align}
    \mathbfcal{N}_{B\rightarrow C}(\cdot) = \tr_{E} (\mathbf{V}_{BE'\rightarrow CE} (\cdot) \mathbf{V}^\dagger_{BE'\rightarrow CE}).
\end{align}
A classic result \cite{schumacher2002entanglement,schumacher2002approximate} says if we input a maximally entangled state $\ket{\Psi^+}_{AB}$ and find that $I(A:E)_{\mathbfcal{N}(\Psi^+)}$ is small, say less than $\epsilon$, then there exists an inverse channel $\mathbfcal{N}^{-1}_{B\rightarrow C}$ which works well in the sense that the fidelity
\begin{align} F(\Psi^+,\mathbfcal{N}^{-1}_{B\rightarrow C}\circ \mathbfcal{N}_{B\rightarrow C}(\Psi^+)) \geq 1-\sqrt{\epsilon}.
\end{align}
The inverse channel is succeeding when acting on the maximally entangled state, which can also be understood as acting correctly in an averaged (over input states) sense. 

We will make use of a stronger notion of decoupling, which shows that a worst case notion of decoupling implies the existence of an inverse channel that always succeeds. 
The theorem was proved in \cite{kretschmann2008information}.
\begin{theorem}\label{thm:newdecoupling}
    Let $\mathbfcal{N}_{A\rightarrow B}:\mathcal{L}(\mathcal{H}_A)\rightarrow \mathcal{L}(\mathcal{H}_B)$ be a quantum channel, and let $\mathbfcal{N}^c_{A\rightarrow E}$ be the complimentary channel. 
    Let $\mathbfcal{S}_{A \rightarrow E}$ be a completely depolarizing channel, which traces out the input and replaces it with a fixed state $\sigma_E$. 
    Then we have that
    \begin{align}
        \frac{1}{4}\inf_{\mathbfcal{D}_{B\rightarrow A}} || \mathbfcal{D}_{B\rightarrow A} \circ \mathbfcal{N}_{A\rightarrow B}-\mathbfcal{I}_{A\rightarrow A}||_{\diamond}^2 \leq ||\mathbfcal{N}^c_{A\rightarrow E} - \mathbfcal{S}_{A \rightarrow E} ||_\diamond \leq 2 \inf_{\mathbfcal{D}_{B\rightarrow A}} ||\mathbfcal{D}_{B\rightarrow A} \circ \mathbfcal{N}_{A\rightarrow B} - \mathbfcal{I}_{A\rightarrow A} ||_\diamond^{1/2}\nonumber 
    \end{align}
    where the infimum is over all quantum channels $\mathbfcal{D}_{B\rightarrow A}$. 
\end{theorem}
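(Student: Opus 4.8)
The plan is to reduce the entire statement to the continuity of the Stinespring dilation proved in \cite{kretschmann2008information}, which I would use as a black box: for any two channels $\mathbfcal{T}_1,\mathbfcal{T}_2$ with common input and output and Stinespring isometries $W_1,W_2$ into a common environment, one has the elementary Lipschitz bound $||\mathbfcal{T}_1-\mathbfcal{T}_2||_\diamond \le 2\inf||W_1-W_2||$ together with the substantive bound $\inf||W_1-W_2|| \le \sqrt{||\mathbfcal{T}_1-\mathbfcal{T}_2||_\diamond}$, the infimum running over the unitary freedom in the environment. The first follows from the triangle inequality, $||W_i||\le 1$, and monotonicity of the trace norm under partial trace; the second is the real content of \cite{kretschmann2008information}. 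I would then fix a Stinespring isometry $W_{A\to BE}$ for $\mathbfcal{N}$, so that $\mathbfcal{N}=\tr_E(W\cdot W^\dagger)$ and $\mathbfcal{N}^c=\tr_B(W\cdot W^\dagger)$, and record that the depolarizing channel $\mathbfcal{S}$ admits a distinguished dilation $V_{A\to BE}$ of the form $V\ket{a}=\ket{a}_{B_1}\ket{\phi}_{B_2E}$ that retains a perfect copy of $A$ in $B$ and is therefore exactly correctable by a fixed decoder $\mathbfcal{D}_0$ (discard $B_2$, relabel $B_1\to A$), i.e.\ $\mathbfcal{D}_0\circ\tr_E(V\cdot V^\dagger)=\mathbfcal{I}$.

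For the left inequality I would apply the hard direction of Stinespring continuity to the pair $(\mathbfcal{N}^c,\mathbfcal{S})$, viewed as channels $A\to E$ with environment $B$, producing dilations with $||W-V|| \le \sqrt{||\mathbfcal{N}^c-\mathbfcal{S}||_\diamond}$. Feeding $V$ into the fixed decoder $\mathbfcal{D}_0$ and combining the easy Lipschitz bound with monotonicity of $||\cdot||_\diamond$ under post-composition by $\mathbfcal{D}_0$ gives $||\mathbfcal{D}_0\circ\mathbfcal{N}-\mathbfcal{I}||_\diamond \le 2||W-V|| \le 2\sqrt{||\mathbfcal{N}^c-\mathbfcal{S}||_\diamond}$, whence $\inf_\mathbfcal{D}||\mathbfcal{D}\circ\mathbfcal{N}-\mathbfcal{I}||_\diamond \le 2\sqrt{||\mathbfcal{N}^c-\mathbfcal{S}||_\diamond}$, which is the left bound after squaring.

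For the right inequality I would run the argument in reverse. I would take a near-optimal decoder $\mathbfcal{D}$, so that $\mathbfcal{D}\circ\mathbfcal{N}$ is diamond-close to $\mathbfcal{I}_{A\to A}$, and apply the hard direction of Stinespring continuity to the pair $(\mathbfcal{D}\circ\mathbfcal{N},\mathbfcal{I})$ to obtain close dilations $\tilde W$ and $W_\mathbfcal{I}$. Applying the easy direction to the \emph{complementary} channels (tracing out the $A$-output instead of the environment) bounds $||(\mathbfcal{D}\circ\mathbfcal{N})^c-\mathbfcal{I}^c||_\diamond \le 2||\tilde W-W_\mathbfcal{I}|| \le 2\sqrt{||\mathbfcal{D}\circ\mathbfcal{N}-\mathbfcal{I}||_\diamond}$. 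The remaining step is bookkeeping: writing $\tilde W=(V_\mathbfcal{D}\otimes\id_E)W$ as a composite dilation, tracing out the environment $E_\mathbfcal{D}$ of $\mathbfcal{D}$ from $(\mathbfcal{D}\circ\mathbfcal{N})^c$ returns exactly $\mathbfcal{N}^c$, while the same partial trace sends $\mathbfcal{I}^c$ (a fixed state) to a depolarizing channel $\mathbfcal{S}$; monotonicity of $||\cdot||_\diamond$ under this partial trace then yields $||\mathbfcal{N}^c-\mathbfcal{S}||_\diamond \le 2\sqrt{\inf_\mathbfcal{D}||\mathbfcal{D}\circ\mathbfcal{N}-\mathbfcal{I}||_\diamond}$, the right bound.

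The main obstacle, beyond importing the hard direction of Stinespring continuity which does the real work, is organizational: arranging the dilations so that the two channels being compared genuinely share an environment, so that the operator-norm distance of isometries is meaningful, and identifying $\mathbfcal{S}$ as the \emph{same} depolarizing channel in both directions. I would fix $\sigma_E$ to be the distance-minimizing depolarizing output, and I expect the key structural identity to be $\mathbfcal{N}^c=\tr_{E_\mathbfcal{D}}\circ(\mathbfcal{D}\circ\mathbfcal{N})^c$, which is precisely what allows the complementary-channel estimate to descend to a statement about $\mathbfcal{N}^c$ itself.
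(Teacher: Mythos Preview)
The paper does not actually prove this theorem: it is stated in the background section and attributed directly to \cite{kretschmann2008information} with the sentence ``The theorem was proved in \cite{kretschmann2008information}.'' There is no proof in the paper to compare against.

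That said, your proposal is a correct reconstruction of the Kretschmann--Schlingemann--Werner argument, and it is essentially how the cited reference proves the result. Both directions are driven by their Stinespring continuity bound, applied once to the pair $(\mathbfcal{N}^c,\mathbfcal{S})$ and once to the pair $(\mathbfcal{D}\circ\mathbfcal{N},\mathbfcal{I})$, then transferred to the complementary side via the elementary Lipschitz estimate. Your identification of the two structural facts that make the transfer work---that any Stinespring dilation of a completely depolarizing channel has a perfectly correctable complement, and that $\tr_{E_{\mathbfcal{D}}}\circ(\mathbfcal{D}\circ\mathbfcal{N})^c=\mathbfcal{N}^c$ when the composite dilation is used---is exactly right. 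The only point to be careful about is the one you already flag: in the right-inequality step you must fix $\tilde W$ to be the composite dilation $(V_{\mathbfcal{D}}\otimes\id_E)W$ and let the unitary freedom act on the dilation of $\mathbfcal{I}$, so that the partial-trace identity applies; and $\sigma_E$ is then determined (or optimized) rather than arbitrary, consistent with how the paper later invokes the theorem (``there exists a completely depolarizing channel $\mathbfcal{S}$ such that\ldots'').
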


The above should be understood as saying that if there is a good inverse to the channel $\mathbfcal{N}_{A\rightarrow B}$, then the complementary channel is close to depolarizing, and vice versa. 
Intuitively, the depolarizing channel reveals no information about $A$, so this is saying the existence of an inverse is equivalent to not leaking information to the environment. 

\subsection{Definitions of the primitives}\label{sec:primitivedefinitions}

In this section we give the definitions of each of the primitives that we discuss in this article. 
Note that we focus on information theoretic definitions of security. 
In all cases there are meaningful versions of these primitives with computational security, but we have not explored their connections to non-local computation. 

\vspace{0.2cm}
\noindent \textbf{Conditional disclosure of secrets}
\vspace{0.2cm}

We first define the classical CDS setting, which we also illustrate in figure \ref{fig:CDS}. 
\begin{definition}\label{def:CDS}
    A \textbf{conditional disclosure of secrets (CDS)} task is defined by a choice of function $f:\{0,1\}^{2n}\rightarrow \{0,1\}$.
    The scheme involves inputs $x\in \{0,1\}^{n}$ given to Alice, and input $y\in \{0,1\}^{n}$ given to Bob.
    Alice and Bob share a random string $r\in R$.
    Additionally, Alice holds a string $s$ drawn from distribution $S$, which we call the secret. 
    Alice sends message $m_0(x,s,r)\in M_0$ to the referee, and Bob sends message $m_1(y,r)\in M_1$.  
    We require the following two conditions on a CDS protocol. 
    \begin{itemize}
        \item $\epsilon$\textbf{-correct:} There exists a decoding function $D(m_0,x,m_1,y)$ such that 
        \begin{align}
            \forall s\in S,\,\forall \,(x,y) \in X\times Y \,\,s.t.\,\,f(x,y)=1,\,\,\, \underset{r\leftarrow R}{\mathrm{Pr}}[D(m_0,x,m_1,y)=s] \geq 1-\epsilon
        \end{align}
        \item $\delta$\textbf{-secure:} There exists a simulator producing a distribution $Sim$ on the random variable $M=M_0M_1$ such that
        \begin{align}
            \forall s\in S,\,\forall \,(x,y) \in X\times Y \,\,s.t.\,\, f(x,y)=0,\,\,\, ||Sim_{M|xy} - P_{M|xys} ||_1\leq \delta
        \end{align}
    \end{itemize}
\end{definition}

Notice that in our definition of CDS we have imposed that the secret be held only by Alice. 
We can easily transform protocols that succeed with the secret held on both sides to one where the secret is held only on one side. 
This is a standard remark about CDS, though we don't know a reference where this is shown in the imperfect setting, so we give the simple proof of this fact here.  
\begin{remark}\label{remark:onesidedCDS}
A CDS task where $s$ is initially held by Alice and Bob can be turned into one where only Alice holds $s$ at the cost of $|s|$ shared random bits, and $|s|$ bits of communication. 
If the CDS protocol is $\epsilon$-correct and $\delta$-secure, the one-sided protocol will be $\epsilon$-correct and $O(\delta)$ secure.
\end{remark}
\begin{proof}\,
    To see this, suppose we have a perfectly correct and secure CDS protocol which works when $s$ is held on both sides. 
    Then run this protocol on a randomly chosen $s'$, and have Alice send $s'\oplus s$ to the referee. 
    Only Alice needs to know $s$ to run this protocol. 
    
    Suppose our initial CDS protocol is $\epsilon$-correct and $\delta$-secure. 
    Then the new CDS will also be $\epsilon$-correct, since $s$ can be computed deterministically from $s'$ and the bit $\tilde{s}=s\oplus s'$. 
    To understand security, note that $\delta$-security of the original protocol implies
    \begin{align}
        ||P_{S'M}-P_{S'}P_{M} ||_1\leq \delta
    \end{align}
    Using this, $P_{S\tilde{S}}=P_{S}P_{\tilde{S}}$ (from the properties of the one-time pad), and that $S$ and $M$ are independent conditioned on $\tilde{S}$, we have
    \begin{align}
        ||P_{S\tilde{S}M} - P_{S}P_{\tilde{S}}P_{M}||_1&= ||P_{S|\tilde{S}M}P_{\tilde{S}M} - P_{S}P_{\tilde{S}}P_{M}||_1\nonumber \\
        &= ||P_{S|\tilde{S}}P_{\tilde{S}M} - P_{S}P_{\tilde{S}}P_{M}||_1\nonumber \\
        &\leq ||P_{S|\tilde{S}}P_{\tilde{S}}P_{M} - P_{S}P_{\tilde{S}}P_{M}||_1+ \delta \nonumber \\
        &= ||P_{S\tilde{S}}P_{M} - P_{S}P_{\tilde{S}}P_{M}||_1+ \delta \nonumber\\
        &= \delta
    \end{align}
    which is exactly $\delta$ security of the one sided CDS protocol. 
\end{proof}

Finally, we remark that a CDS for secret $s_1$ and a CDS for secret $s_2$ can be run in parallel using fresh randomness while maintaining security and correctness of each CDS scheme. 
To see this, call the message for the first CDS $M_1$ and the message for the second CDS $M_2$. 
If we consider how much the referee can learn about the secret $s_1$, message $M_2$ doesn't reveal anything, because it depends only on the randomness $r_2$, the inputs (which the referee knows already as part of the CDS for $s_1$), and $s_2$. 
All of these variables are already known by the referee as part of the CDS for $s_1$, or are uncorrelated with $s_1$. 
More succinctly, the distribution on $s_1$ is independent of $M_2$ when conditioning on $XY$, so revealing $M_2$ doesn't help the referee learn $s_1$, given that they already know $XY$, or in notation
\begin{align}\label{eq:independentCDS}
    P_{M_1M_2|xys} = P_{M_1|xys_1}P_{M_2|xys_2}
\end{align}
A similar statement establishes security of the CDS hiding $s_2$ in the presence of message $M_1$. 

As a consequence of the above comments, the CDS hiding secret $s=(s_1,s_2)$ given by running the CDS for each secret in parallel has good security and correctness, as we capture in the next lemma.\footnote{This is a simple, but weak, method of obtaining a CDS for a long secret from CDS for a short secret. It will suffice for our purposes, but see \cite{applebaum2017conditional,applebaum2020power} for improved results.} 
\begin{lemma}
    Suppose we have a CDS for function $f$ which is $\epsilon$-correct and $\delta$-secure, and hides $k$ bits, and uses $r$ bits of randomness and $c$ bits of communication. 
    Then we can build a CDS for function $f$ that hides $mk$ bits, is $m\epsilon $ correct and $m\delta$ secure and which uses $m r$ bits of randomness and $m c$ bits of communication.
\end{lemma}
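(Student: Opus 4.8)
The plan is to run $m$ independent copies of the given CDS in parallel, one for each block of the long secret. Write the $mk$-bit secret as $s=(s_1,\dots,s_m)$ with each $s_i$ a $k$-bit string, draw independent randomness $r_1,\dots,r_m$ for the $m$ copies, and have Alice and Bob send the concatenations of their messages, so that the total message is $M=(M_1,\dots,M_m)$ where $M_i$ is the pair of messages produced by the $i$-th copy on $(x,y,s_i,r_i)$. With this construction the resource claims are immediate: the scheme uses $m$ fresh copies of the base randomness and communication, i.e.\ $mr$ bits of randomness and $mc$ bits of communication, and it hides $mk$ bits.

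For correctness I would have the referee decode each block separately, applying the base decoder $D$ to $(M_i,x,y)$ to recover $s_i$. When $f(x,y)=1$, each block is recovered with probability at least $1-\epsilon$, so a union bound over the $m$ blocks shows the probability that some block fails is at most $m\epsilon$, giving an $m\epsilon$-correct composed scheme.

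The security argument is the step that requires care. The simulator for the composed scheme runs the $m$ base simulators in parallel with independent randomness, producing $Sim_{M|xy}=\prod_{i=1}^m Sim^{(i)}_{M_i|xy}$. Because each copy uses fresh randomness, $M_i$ depends only on $(x,y,s_i,r_i)$, so the real message distribution factorizes exactly as in \eqref{eq:independentCDS} generalized to $m$ blocks, namely $P_{M|xys}=\prod_{i=1}^m P_{M_i|xys_i}$. When $f(x,y)=0$, each factor obeys $||P_{M_i|xys_i}-Sim^{(i)}_{M_i|xy}||_1\leq \delta$ by $\delta$-security of the base CDS. The main obstacle is bounding the one-norm distance between these two product distributions; I expect to close it with a standard hybrid argument, interpolating one factor at a time and using subadditivity of the one-norm over products,
\begin{align}
    \left\|\prod_{i=1}^m P_i-\prod_{i=1}^m Q_i\right\|_1 \leq \sum_{i=1}^m ||P_i-Q_i||_1,
\end{align}
which follows from the triangle inequality applied to the hybrids $Q_1\otimes\cdots\otimes Q_j\otimes P_{j+1}\otimes\cdots\otimes P_m$ together with the facts that $||X\otimes Y||_1=||X||_1\,||Y||_1$ and that probability distributions have unit one-norm. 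Summing the $m$ per-block bounds of $\delta$ then yields $m\delta$-security of the composed protocol, completing the argument.
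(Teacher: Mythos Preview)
Your proposal is correct and follows essentially the same approach as the paper: parallel repetition with fresh randomness, a union-bound/Bernoulli argument for correctness, and a product simulator combined with the triangle inequality (hybrid argument) for security. Your write-up is in fact slightly more explicit than the paper's about the hybrid step, but the underlying argument is identical.
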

\begin{proof}\, The strategy is to repeat the CDS protocol that hides $k$ bits $m$ times in parallel. 
To understand correctness of the new protocol, notice that on $1$ instances the probability of the referee guessing $s_i$ correctly is at least $1-\epsilon$, so their probability of guessing all $m$ strings $s_i$ correctly is at least $(1-\epsilon)^m \geq (1-mk)$. 
To understand security, we define a simulator for the composed protocol by taking the product of the distributions for a single instance of the protocol, 
\begin{align}
    Sim_{M_1...M_m|xy} \equiv Sim_{M_1|xy}...Sim_{M_m|xy}.
\end{align}
We also note that, using fresh randomness for each instance of the CDS, we can extend equation \ref{eq:independentCDS} to 
\begin{align}
    P_{M_1...M_m|xys} = P_{M_1|xys_1}...P_{M_m|xys_m}.
\end{align}
Then by repeated application of the triangle inequality, and using security of each instance of the CDS, we have that on $0$ instances
\begin{align}
    ||Sim_{M_1...M_m|xy} - P_{M_1...M_m|xys}||_1  = ||Sim_{M_1|xy}...Sim_{M_m|xy} - P_{M_1|xys_1}...P_{M_m|xys}||_1 \leq m\delta\nonumber
\end{align}
as claimed. 
\end{proof}

\vspace{0.2cm}
\noindent \textbf{Conditional disclosure of quantum secrets}
\vspace{0.2cm}

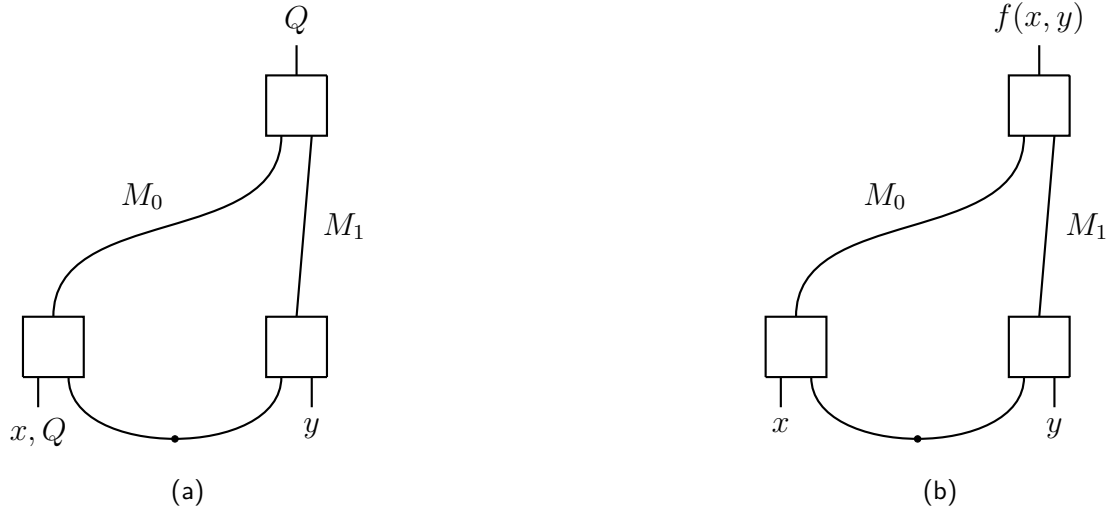
\begin{figure*}
    \centering
    \begin{subfigure}{0.45\textwidth}
    \centering
    \begin{tikzpicture}[scale=0.4]
    
    \draw[thick] (-5,-5) -- (-5,-3) -- (-3,-3) -- (-3,-5) -- (-5,-5);
    
    \draw[thick] (5,-5) -- (5,-3) -- (3,-3) -- (3,-5) -- (5,-5);
    
    \draw[thick] (5,5) -- (5,3) -- (3,3) -- (3,5) -- (5,5);
    
    \draw[thick] (4,-3) -- (4.5,3);
    
    \draw[thick] (-4,-3) to [out=90,in=-90] (3.5,3);
    
    \draw[thick] (-3.5,-5) to [out=-90,in=-90] (3.5,-5);
    \draw[black] plot [mark=*, mark size=3] coordinates{(0,-7.05)};
    
    \draw[thick] (-4.5,-6) -- (-4.5,-5);
    \node[below] at (-4.5,-6) {$x,Q$};
    
    \draw[thick] (4.5,-6) -- (4.5,-5);
    \node[below] at (4.5,-6) {$y$};

    \node[left] at (0,1) {$M_0$};
    \node[right] at (4.5,0) {$M_1$};
    
    \draw[thick] (4,5) -- (4,6);
    \node[above] at (4,6) {$Q$};
    
    \end{tikzpicture}
    \caption{}
    \label{fig:CDQS}
    \end{subfigure}
    \hfill
    \begin{subfigure}{0.45\textwidth}
    \centering
        \begin{tikzpicture}[scale=0.4]
    
    \draw[thick] (-5,-5) -- (-5,-3) -- (-3,-3) -- (-3,-5) -- (-5,-5);
    
    \draw[thick] (5,-5) -- (5,-3) -- (3,-3) -- (3,-5) -- (5,-5);
    
    \draw[thick] (5,5) -- (5,3) -- (3,3) -- (3,5) -- (5,5);
    
    \draw[thick] (4,-3) -- (4.5,3);
    
    \draw[thick] (-4,-3) to [out=90,in=-90] (3.5,3);
    
    \draw[thick] (-3.5,-5) to [out=-90,in=-90] (3.5,-5);
    \draw[black] plot [mark=*, mark size=3] coordinates{(0,-7.05)};

    \node[left] at (0,1) {$M_0$};
    \node[right] at (4.5,0) {$M_1$};
    
    \draw[thick] (-4.5,-6) -- (-4.5,-5);
    \node[below] at (-4.5,-6) {$x$};
    
    \draw[thick] (4.5,-6) -- (4.5,-5);
    \node[below] at (4.5,-6) {$y$};
    
    \draw[thick] (4,5) -- (4,6);
    \node[above] at (4,6) {$f(x,y)$};
    
    \end{tikzpicture}
    \caption{}
    \label{fig:PSQMintro}
    \end{subfigure}
    \caption{(a) Illustration of a CDQS protocol. Alice and Bob share an entangled resource state, illustrated as the solid curved line. Alice receives the classical string $x\in \{0,1\}^{n}$ as input, and a quantum system $Q$, which we take to be maximally entangled with a reference $R$. Bob receives input $y\in \{0,1\}^{n}$. Alice and Bob prepare quantum systems $M_0$ and $M_1$, which they pass to the referee. The protocol is correct if when $f(x,y)=1$ the map from $Q$ to $M_0M_1$ can be reversed, and secure when for $f(x,y)=0$ the $M=M_0M_1$ system is independent of the input state on $Q$. See definition \ref{def:CDQS}. (b) A PSQM protocol. Again Alice and Bob share an entangled resource state. Alice receives input $x\in \{0,1\}^{n}$, Bob receives input $y\in \{0,1\}^{n}$.  Alice and Bob prepare quantum systems $M_0$ and $M_1$, which they pass to the referee. The protocol succeeds if the referee can determine $f(x,y)$, but the system $M=M_0M_1$ otherwise reveals nothing about the inputs $x,y$. See definition \ref{def:PSQM}.}
    \label{fig:PSQMandCDQS}
\end{figure*}

To the best of our knowledge the quantum analogue of the CDS model has not been studied explicitly in the literature.\footnote{It has been studied in an indirect way, since (as we show later) it is equivalent to $f$-routing.} 
We give a definition here which features quantum resources and a quantum secret. 
The CDQS primitive is illustrated in figure \ref{fig:CDQS}. 

\begin{definition}\label{def:CDQS}
    A \textbf{conditional disclosure of quantum secrets (CDQS)} task is defined by a choice of function $f:\{0,1\}^{2n}\rightarrow \{0,1\}$, and a $d_Q$ dimensional Hilbert space $\mathcal{H}_Q$ which holds the secret.
    The task involves inputs $x\in \{0,1\}^{n}$ and system $Q$ given to Alice, and input $y\in \{0,1\}^{n}$ given to Bob.
    Alice sends message system $M_0$ to the referee, and Bob sends message system $M_1$. 
    Label the combined message systems as $M=M_0M_1$.
    Label the quantum channel defined by Alice and Bob's combined actions $\mathbfcal{N}_{Q\rightarrow M}^{xy}$. 
    We put the following two conditions on a CDQS protocol. 
    \begin{itemize}
        \item $\epsilon$\textbf{-correct:} There exists a channel $\mathbfcal{D}^{x,y}_{M\rightarrow Q}$, called the decoder, such that
        \begin{align}
            \forall (x,y)\in X\times Y \,\,\, s.t. \,\, f(x,y)=1,\,\,\, ||\mathbfcal{D}^{x,y}_{M\rightarrow Q}\circ \mathbfcal{N}^{x,y}_{Q\rightarrow M} - \mathbfcal{I}_{Q\rightarrow Q}||_\diamond \leq \epsilon
        \end{align}
        \item $\delta$\textbf{-secure:} There exists a quantum channel $\mathbfcal{S}_{\varnothing \rightarrow M}^{x,y}$, called the simulator, such that
        \begin{align}
            \forall (x,y)\in X\times Y \,\,\, s.t. \,\, f(x,y)=0,\,\,\, ||\mathbfcal{S}_{\varnothing \rightarrow M}^{x,y} \circ \tr_Q - \mathbfcal{N}_{Q\rightarrow M}^{x,y}||_\diamond \leq \delta
        \end{align}
    \end{itemize}
\end{definition}
The notions of $\epsilon$-correctness and $\delta$-security given here mimic the classical ones closely. 
In words, the correctness condition is saying that when $f(x,y)=1$ the referee can reverse the effect of Alice and Bob's actions on the $Q$ system.
The security condition is saying that when $f(x,y)=0$ the system $M$ seen by the referee is close to one that they could have prepared with no access to $Q$. 

In our definition of CDQS, we require a quantum system $Q$ be taken as the secret, and allow the use of quantum resources. 
Another quantum variant of CDS we could have defined would allow quantum resources but restrict to a classical secret. 
We could call this CDQS'.
This variant is in fact equivalent to the above definition.
This follows from our proof below that classical CDS protocols gives quantum CDS protocols, which is easily modified to show a CDQS' gives CDQS with similar resources. 
Then one can observe that a CDQS protocol can be modified to a CDQS' protocol by choosing the secret to be a state in a chosen basis. 
Taken together these observations give that CDQS' and CDQS are equivalent. 

\vspace{0.2cm}
\noindent \textbf{Private simultaneous message passing}
\vspace{0.2cm}

Next we move on to discuss another basic cryptographic primitive of interest in this article, which is private simultaneous message passing. 
This primitive is illustrated in figure \ref{fig:PSM}. 

\begin{definition}\label{def:PSM}
    A \textbf{private simultaneous message (PSM)} task is defined by a choice of function $f:X\times Y\rightarrow Z$. 
    The inputs to the task are $n$ bit strings $x$ and $y$ given to Alice and Bob, respectively. 
    Alice then sends a message $m_0(x,r)$ to the referee, and Bob sends message $m_1(y,r)$. 
    From these inputs, the referee prepares an output bit $z$. 
    We require the task be completed in a way that satisfies the following two properties.
    \begin{itemize}
        \item \textbf{$\epsilon$-correctness:} There exists a decoder $Dec$ such that \begin{align}
            \forall (x,y)\in X\times Y,\, \,\,\,\mathrm{Pr}[Dec(m_0,m_1)=f(x,y)] \geq 1-\epsilon.
        \end{align}
        \item \textbf{$\delta$-security:} There exists a simulator producing a distribution $Sim$ on the random variable $M=M_0M_1$ such that 
        \begin{align}
            \forall (x,y)\in X\times Y,\, \,\,\,||Sim_{M|f(x,y)} - P_{M|xy}||_1\leq \delta.
        \end{align}
        Stated differently, the distribution of the message systems is $\delta$-close to one that depends only on the function value, for every choice of $x,y$. 
    \end{itemize}
\end{definition}

In PSM we can allow the function $f$ to take Boolean or other values. 
For instance we can take $f$ to be natural number valued and defined by a counting problem. 
Another comment is that PSM protocols can be run in parallel, in the sense that $\epsilon$-correct and $\delta$-secure protocols for $f_1(x,y)$ and $f_2(x,y)$ can be run together to give a $2\epsilon$-correct and $2\delta$-secure protocol for the function $f(x,y)=(f_1(x,y),f_2(x,y))$. 
This is straightforward to show from the security definition. 

\vspace{0.2cm}
\noindent \textbf{Private simultaneous quantum message passing (PSQM)}
\vspace{0.2cm}

As with CDS, there is a natural quantum version of PSM. In this case the functionality of the protocol is unchanged, but the allowed resources are now quantum mechanical. 
A PSQM protocol is shown in figure \ref{fig:PSQM}.

\begin{definition}\label{def:PSQM}
    A \textbf{private simultaneous quantum message (PSQM)} task is defined by a choice of function $f:X\times Y\rightarrow Z$. 
    The inputs to the task are $n$ bit strings $x$ and $y$ given to Alice and Bob, respectively, each of which are chosen independently and at random. 
    Alice then sends a quantum message system $M_0$ to the referee, and Bob sends quantum message system $M_1$. 
    From the combined message system $M=M_0M_1$, the referee prepares an output bit $z$.  
    We require the task be completed in a way that satisfies the following two properties.
    \begin{itemize}
        \item \textbf{$\epsilon$-correctness:} There exists a decoding map $\mathbf{V}_{M \rightarrow Z\tilde{M}}$ such that 
        \begin{align}
            \forall (x,y)\in X\times Y, \,\,\,\,\, \left|\left|\tr_{\tilde{M}}(\mathbf{V}_{M \rightarrow Z\tilde{M}} \rho_{M}(x,y) \mathbf{V}_{M \rightarrow Z\tilde{M}}^\dagger ) - \ketbra{f_{xy}}{f_{xy}}_Z\right|\right|_1 \leq \epsilon.
        \end{align}
        where $\rho_M(x,y)$ is the density matrix on $M$ produced on inputs $x,y$.
        \item \textbf{$\delta$-security:} There exists a simulator, which is a quantum channel $\mathbfcal{S}_{Z\rightarrow M}(\cdot)$, such that 
        \begin{align}
            \forall (x,y)\in X\times Y,\,\,\,\,\,\left|\left|\rho_{M}(x,y) - \mathbfcal{S}_{Z\rightarrow M}(\ketbra{f_{xy}}{f_{xy}}_Z)\right|\right|_1 \leq \delta.
        \end{align}
        Stated differently, the state of the message systems is $\delta$-close to one that depends only on the function value, for every choice of input.
    \end{itemize}
\end{definition}

Just like in the classical case, PSQM protocols can be run in parallel with only small relaxations in security and correctness. 

\vspace{0.2cm}
\noindent \textbf{Decomposable randomized encodings}
\vspace{0.2cm}

A related primitive, which we will make briefer use of, is the notion of a decomposable randomized encoding. 
We recall some definitions given in \cite{computation2013randomization}. 

\begin{definition} Let $X,Y,\hat{Y},R$ be finite sets and let $f:X_1\times ... \times X_n \rightarrow Y$. A function $\hat{f}:X\times R\rightarrow \hat{Y}$ is an $\epsilon$-correct and $\delta$-private \textbf{randomized encoding} for $f$ if it satisfies
\begin{itemize}
    \item $\epsilon$-\textbf{correctness:} There exists a function $Dec$ called a decoder such that for every $x\in X$ we have 
    \begin{align}
        \mathrm{Pr}[Dec(\hat{f}(x,r))=f(x)]\geq 1 - \epsilon.
    \end{align}  
    where the probability is over $R$ and any randomness in the decoder $Dec$. 
    \item $\delta$-\textbf{privacy:} There exists a randomized function, called a simulator, producing the random variable $Sim$ such that
    \begin{align}
        ||Sim_{\hat{Y}|Y} - P_{\hat{Y}|X} ||_1\leq \delta.
    \end{align}
\end{itemize}
\end{definition}

\begin{definition}\label{def:DRE} A \textbf{decomposable randomized encoding (DRE)} for a function $f:X_1\times ... \times X_n \rightarrow Y$ is a randomized encoding of $f$ that has the form
\begin{align}
    \hat{f}(x_1,...,x_n;r)= (\hat{f}_1(x_1,r),...,\hat{f}_n(x_n,r))
\end{align}
A DRE is $\epsilon$-correct and $\delta$-secure under the same conditions as a randomized encoding, given above.
\end{definition}
\noindent We will in fact only use that certain randomized encodings are decomposable across a single splitting of the inputs. 
That is we are interested in functions $f:X\times Y\rightarrow Z$ and need the randomized encoding to take the form
\begin{align}
    \hat{f}(x,y;r)= (\hat{f}_1(x,r),\hat{f}_2(y,r))
\end{align}
In this setting we will say $f(x,y)$ has a randomized encoding decomposable across $X\times Y$. 

\vspace{0.2cm}
\noindent \textbf{Non-local computation}
\vspace{0.2cm}

Finally we come to the notion of a non-local computation, which was first studied in the context of cheating strategies for position-verification tasks. 
The general setting is shown in figure \ref{fig:non-localandlocal}.
A non-local computation takes the form shown in figure \ref{fig:non-localcomputation}, with the goal being to simulate the action of a local unitary (figure \ref{fig:local}).

We will not give a formal definition of a fully general NLQC here, but instead focus on two special cases. 
The first, $f$-routing, was introduced in \cite{kent2011quantum} and studied further in \cite{buhrman2013garden}. 
It has been especially well studied in the non-local computation literature because it is of interest in developing practical position-verification schemes. 
We will also see that it is closely related to the CDQS primitive.\footnote{Our definition here gives a particular notion of an $\epsilon$-correct $f$-routing scheme, which requires the protocol route an arbitrary quantum state correctly. Other definitions \cite{bluhm2022single} require correct action on only the maximally entangled state. For inputs of a fixed size these are equivalent.} 

\begin{definition}\label{def:frouting}
    A \textbf{$f$-routing} task is defined by a choice of Boolean function $f:\{ 0,1\}^{2n}\rightarrow \{0,1\}$, and a $d$ dimensional Hilbert space $\mathcal{H}_Q$.
    Inputs $x\in \{0,1\}^{n}$ and system $Q$ are given to Alice, and input $y\in \{0,1\}^{n}$ is given to Bob.
    Alice and Bob exchange one round of communication, with the combined systems received or kept by Bob labelled $M$ and the systems received or kept by Alice labelled $M'$.
    Label the combined actions of Alice and Bob in the first round as $\mathbfcal{N}^{x,y}_{Q\rightarrow MM'}$. 
    The $f$-routing task is completed $\epsilon$-correctly if there exists a channel $\mathbfcal{D}^{x,y}_{M\rightarrow Q}$ such that,
    \begin{align}
        \forall (x,y)\in X\times Y \,\,\, s.t. \,\, f(x,y)=1,\,\,\, ||\mathbfcal{D}^{x,y}_{M\rightarrow Q} \circ\tr_{M'} \circ\mathbfcal{N}^{x,y}_{Q\rightarrow MM'} -\mathbfcal{I}_{Q\rightarrow Q}||_\diamond \leq \epsilon
    \end{align}
    and there exists a channel $\mathbfcal{D}^{x,y}_{M'\rightarrow Q}$ such that
    \begin{align}
        \forall (x,y)\in X\times Y \,\,\, s.t. \,\, f(x,y)=0,\,\,\,||\mathbfcal{D}^{x,y}_{M'\rightarrow Q} \circ\tr_{M} \circ\mathbfcal{N}^{x,y}_{Q\rightarrow MM'} -\mathbfcal{I}_{Q\rightarrow Q}||_\diamond \leq \epsilon
    \end{align}
    In words, Bob can recover $Q$ if $f(x,y)=1$ and Alice can recover $Q$ if $f(x,y)=0$. 
\end{definition}

The second special case we study is coherent function evaluation. 
We introduce this as the special case of NLQC that implies the PSQM primitive, as we show below. 
As well, it is similar to non-local computations studied in \cite{junge2021geometry}, which used Banach space techniques to study lower bounds on quantum resources in these non-local computations. 

\begin{definition}\label{def:CFE}
    A \textbf{coherent function evaluation (CFE)} task is defined by a choice of Boolean function $f:\{0,1\}^{2n}\rightarrow \{0,1\}$.
    The task is to implement the isometry
    \begin{align}
        \mathbf{V}_f = \sum_{xy} \ket{xy}_{Z'} \ket{f_{xy}}_{Z} \bra{x}_X\bra{y}_Y
    \end{align}
    in the non-local form of figure \ref{fig:non-localcomputation}. 
    We say a CFE protocol is $\epsilon$-correct if the diamond norm distance between $\mathbf{V}_f$ and the implemented channel is not larger than $\epsilon$.  
\end{definition}

\vspace{0.2cm}
\noindent \textbf{Secret sharing}
\vspace{0.2cm}

An important tool throughout cryptography, and in particular in our context, is the notion of a secret sharing scheme.
We introduce this next. 

\begin{definition} \label{def:SS}
    A \textbf{secret sharing scheme} $\mathbf{S}$ is a map from a domain $K$ and randomness $R$ to variables $S_1,...,S_n$, here called shares.
    Let $A$ be a subset of the $S_i$, $\mathcal{S}_A$ the distribution on the shares $A$, and $\mathbf{A}$ a set of subsets of the $S_i$. 
    Then a scheme $\mathbf{S}$ realizes access structure $\mathbf{A}$ with \textbf{$\epsilon$-correctness} if, for each subset of shares $A\in \mathbf{A}$ there exists a decoding map $D_A:A\rightarrow K$ such that
    \begin{align}
        \forall s\in K, \,\,\,\,\, \mathrm{Pr}[D_A(S_A)=s]\geq 1-\epsilon. 
    \end{align}
    A scheme $\mathbf{S}$ is \textbf{$\delta$-secure} if, whenever $U\notin \mathbf{A}$, there exists a map producing a distribution $\text{Sim}$ on $U$ such that
    \begin{align}
        ||\text{Sim}_{U} - \mathcal{S}_{U|K}||_1 \leq \delta
    \end{align}
    If $\epsilon=\delta=0$ we say that the scheme $\mathbf{S}$ is perfect. 
\end{definition}
The access structure of a secret scheme can be specified as a set of subsets of shares, as in the above definition, or equivalently in terms of an \textbf{indicator function}. 
The indicator function is defined by
\begin{align}
    f_I(x) = \begin{cases}
		1 & \text{if} \,\,\,\,\{S_i:x_i=1\} \in \mathcal{A}\\
            0 & \text{otherwise}
		 \end{cases}
\end{align}
We can observe that if $A\in \mathcal{A}$ then necessarily $A\cup S_{i}\in\mathcal{A}$. 
This follows because if we can reconstruct the secret from $A$, we can also reconstruct it from a larger set.
This means that valid indicator functions will always be monotone. 

\vspace{0.2cm}
\noindent \textbf{The garden hose game}
\vspace{0.2cm}

The garden hose game \cite{buhrman2013garden} is a model of communication complexity defined, informally, as follows. 
Alice and Bob are neighbours, and wish to compute a function $f(x,y)$, where Alice holds the input $x$ and Bob the input $y$. 
They have a set of $m$ pipes that run through their fence and connect the two yards. 
Alice has a tap, which she can connect to any of the pipe openings on her side of the fence. 
Alice and Bob additionally have hoses, which they can use to connect ends of pipes on the same side of the fence. 
Their strategy is to choose how to connect the tap to the pipes, and connect pipes to each other with hoses, in a way that depends on their respective inputs. 
Then, Alice turns on the tap. 
Alice and Bob win the garden hose game if the water spills on Alice's side of the fence when $f(x,y)=0$, and on Bob's side of the fence when $f(x,y)=1$. 
For a formal definition of the garden-hose game, we refer the reader to \cite{buhrman2013garden}. 

The garden hose game gives an interesting notion of the communication complexity of a function, which we formalize next. 

\begin{definition}\label{def:gh}
The \textbf{garden hose complexity} of a function $f:\{0,1\}^n\times \{0,1 \}^n\rightarrow \{0,1\}$ is the minimal number of pipes needed to complete the garden hose game for the function $f(x,y)$ deterministically. 
\end{definition}

All functions can be computed in the garden hose game. 
To see why, observe that for any $f(x,y)$ Alice and Bob can carry out the following strategy. 
They prepare $2^{n+1}$ pipes, which we label as $\{p_i,p_i'\}_{i=1}^{n}$. 
Upon receiving input $x$, Alice connects her tap to pipe $p_x$. 
Bob connects pipe $p_i$ to $p_i'$ whenever $f(i,y)=0$, and leaves it open otherwise. 
Upon turning on the tap then, water flows through pipe $p_x$, then back to Alice if $f(x,y)=0$ and spills on the right otherwise, as needed. 
A sightly smarter strategy lowers the worst case garden hose complexity to $2^n+1$. 
See \cite{buhrman2013garden}.

\vspace{0.2cm}
\noindent \textbf{Other related primitives}
\vspace{0.2cm}
 
Each of the primitives discussed above is in turn related to others in various ways. 
Reviewing these further connections is outside the scope of this article. 
Instead, we have included in our discussion only new connections among primitives, or primitives for which we have found the connection to NLQC gives a new result on NLQC, or for which NLQC implies a new result on the primitive.
We briefly mention however some settings with natural relationships to the ones discussed here; our list and references are not exhaustive.  
CDS and PSM are related to zero-knowledge proofs \cite{applebaum2017private}, secret sharing \cite{applebaum2020better}, communication complexity \cite{applebaum2021placing},
private information retrieval \cite{ishai1997private}, and secure multiparty computation \cite{ishai1997private}. 
A useful review of these primitives and the broader context of information theoretic cryptography is given in \cite{BIUschool}. 
Quantum secret sharing was related to $f$-routing in \cite{cree2022code}. 
All of these connections may be interesting to revisit in the quantum setting, and in light of the connection to non-local computation and position-verification. 

\subsection{Existing relations among primitives}\label{sec:existingrelationships}

\vspace{0.2cm}
\noindent \textbf{SS gives CDS}
\vspace{0.2cm}

In \cite{GERTNER2000592}, the authors upper bound the randomness complexity of a CDS scheme in terms of the size of a secret sharing scheme whose access structure is related to $f$. 
We recall their result next, narrowing their result to the two player case for simplicity.  
\begin{theorem}\label{thm:SSgivesCDS} Let $f_M:\{0,1\}^m\times \{0,1\}^m \rightarrow \{0,1\}$ be a monotone Boolean function and let $f:\{0,1\}^n\times \{0,1\}^n \rightarrow \{0,1\}$ be a projection of $f_M$, that is $f(x,y) = f_M(g_1(x),g_2(y))$.
Let $\mathbf{S}$ be a perfect secret sharing scheme realizing the access structure $f_M$, in which the total share size is $c$, and let $s$ denote a secret (from the domain of $\mathbf{S}$) which is known to all players.
Then there exists a CDS protocol for disclosing $s$ subject to the condition $f$ with randomness $c$, and a (perhaps different) protocol with communication complexity bounded above by $c$. 
\end{theorem}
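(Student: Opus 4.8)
The plan is to build both protocols on a single primitive: \emph{reveal a subset of the shares of $\mathbf{S}$, where the subset is selected by the inputs}. First I would fix the correspondence underlying the phrase ``$\mathbf{S}$ realizes the access structure $f_M$'': the $2m$ input coordinates of $f_M$ are identified with the $2m$ parties of $\mathbf{S}$, the first $m$ assigned to Alice and the last $m$ to Bob, and a subset $A$ of parties is authorized exactly when $f_M(\chi_A)=1$, where $\chi_A\in\{0,1\}^{2m}$ is the indicator vector of $A$. On inputs $x,y$ the two players compute $a=g_1(x)\in\{0,1\}^m$ and $b=g_2(y)\in\{0,1\}^m$; the pair $(a,b)$ will always be the indicator of the subset of shares sent to the referee, so that this subset is authorized precisely when $f_M(a,b)=f(x,y)=1$.

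For the communication-efficient protocol I would take the shared randomness to be the randomness of $\mathbf{S}$. Since the secret $s$ is known to both players, each can compute all shares $S_1,\dots,S_{2m}$; Alice then sends $\{S_i : i\le m,\ a_i=1\}$ and Bob sends $\{S_{m+j} : b_j=1\}$. Correctness (with $\epsilon=0$) is immediate: when $f(x,y)=1$ the referee holds an authorized set and applies the reconstruction map $D_A$ of $\mathbf{S}$ to obtain $s$. For $\delta=0$ security I would invoke perfect security of $\mathbf{S}$: when $f(x,y)=0$ the revealed set is unauthorized, so its distribution is independent of $s$ and the $\mathbf{S}$-simulator of Definition~\ref{def:SS} serves directly as the CDS simulator of Definition~\ref{def:CDS} (the referee's knowledge of $x,y$, hence of which shares were sent, is harmless, as security controls the share \emph{values}). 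The total communication is at most the total share length $c$.

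For the randomness-efficient protocol I would instead make the \emph{shares themselves} the shared randomness. Let the common random string be a fresh sharing $(U_1,\dots,U_{2m})$ of a uniformly random secret $u$ under $\mathbf{S}$, a string of length $c$. Alice sends $s\oplus u$ together with $\{U_i : i\le m,\ a_i=1\}$, and Bob sends $\{U_{m+j} : b_j=1\}$. When $f(x,y)=1$ the referee reconstructs $u$ from the authorized shares and outputs $(s\oplus u)\oplus u=s$; when $f(x,y)=0$ the revealed shares are unauthorized and carry no information about $u$, so $s\oplus u$ is a one-time pad and the messages are simulable without $s$. The shared string has length $c$, giving randomness complexity $c$.

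I expect the only genuinely non-routine step to be this second construction: recognizing that one should share a \emph{random} secret $u$, use its shares as the common randomness, and one-time-pad $s$ with $u$, rather than trying to make the direct share-revealing protocol frugal in randomness. Everything else is bookkeeping — monotonicity of $f_M$ is exactly what makes the access structure realizable, the projections $g_1,g_2$ are what let an access structure on $2m$ parties handle a general $f$ on $n$-bit inputs, and perfect security of $\mathbf{S}$ propagates to perfect ($0$-correct, $0$-secure) CDS through the simulators above. Should one prefer the secret held on a single side only, the two-sided-to-one-sided reduction of Remark~\ref{remark:onesidedCDS} applies.
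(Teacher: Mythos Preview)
Your communication-efficient protocol is exactly the paper's: both players use the scheme's internal randomness as the shared string, each computes all shares locally, and each forwards the shares selected by $g_1(x)$ or $g_2(y)$.

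Your randomness-efficient protocol is correct but takes a genuinely different route from the paper. The paper has only Alice generate the shares of $s$, using \emph{private} randomness for the scheme; for every one of Bob's $m$ share slots she sends $S_{m+j}\oplus r_j$, where the $r_j$ are fresh uniform strings of length $|S_{m+j}|$ that constitute the shared randomness, and Bob forwards $r_j$ precisely when $b_j=1$. The shared randomness is thus a \emph{uniform} string of length $\sum_j|S_{m+j}|\le c$. Your construction instead shares the tuple of shares $(U_1,\dots,U_{2m})$ of an independent random secret $u$ and one-time pads $s$ with $u$. This is sound (unauthorized shares are independent of $u$, so $s\oplus u$ is a perfect pad), but the shared string, while of length $c$, is not uniformly distributed over $\{0,1\}^c$; your claim ``randomness complexity $c$'' therefore only holds if one measures the length of the shared correlated string rather than the number of uniform random bits needed to sample it. The paper's masking construction avoids this ambiguity by using genuinely uniform pads, and incidentally uses shared randomness bounded by the size of Bob's shares alone rather than all $2m$ shares.
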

The protocol which establishes this theorem is, heuristically, the following. 
We start by illustrating the case where $f=f_M$ is already a monotone function, and so can be realized as the indicator function of some secret sharing scheme $\mathbf{S}$.
Then the protocol is as follows.
Without loss of generality take Alice and Bob to both hold the secret $s$ (see Remark \ref{remark:onesidedCDS}). 
To carry out the protocol, both parties prepare a secret sharing scheme $\mathbf{S}$ which has indicator function $f_M$, using their shared randomness as the randomness $R$ needed to prepare the scheme. 
Then, Alice sends those shares $S_i$ to the referee for which $x_i=1$, and Bob sends those shares $S_{i+n}$ for which $y_i=1$. 
Then if $f_M(x,y)=1$, following this local rule they will have collectively sent an authorized set of shares and the referee can reconstruct the secret $s$. 
If $f_M(x,y)=0$, they will have sent an unauthorized set of shares and the referee cannot learn the secret. 
To extend this to non-monotone functions, Alice and Bob first locally compute $g_1$ and $g_2$ respectively, and then perform the same secret sharing protocol now with bits of $g_1(x)$ or $g_2(y)$ controlling which shares are sent to the referee. 
Notice that the communication complexity is at most the total size of the shares of the secret sharing scheme. 

To see the protocol that gives an upper bound for the randomness complexity\footnote{This protocol is not given in \cite{GERTNER2000592}, but is a straightforward extension of their idea.}, we now have only Alice prepare the shares of the secret sharing scheme. 
For shares $i\leq n$, she sends share $S_i$ if $x_i=1$ as before. 
For shares $i>n$, she sends $S_i\oplus r_i$, where the XOR is taken bitwise with a random string $r_i$ of length $|S_i|$. 
Bob then sends $r_i$ iff $y_i=1$. 
Notice that the randomness complexity is now at most $\sum_i r_i \leq \sum_i |S_i|$, which is just the size of the scheme. 
The communication complexity is now somewhat larger, but is bounded by twice the size. 

We can also generalize the above theorem to the case of approximate secret sharing schemes. 
In particular, if we use an approximate secret sharing scheme in the second of the protocols above we find that an $\epsilon$-correct and $\delta$-secure secret sharing scheme of size $c$ for an indicator function $f_I$ leads to an $\epsilon$-correct and $\delta$-secure CDS for the same function, using randomness complexity $c$.
A similar observation holds for the protocol bounding the communication complexity. 
We collect these observations as the following remark. 

\begin{remark}\label{thm:robustCDSfromSS}
Let $f_M:\{0,1\}^m\times \{0,1\}^m \rightarrow \{0,1\}$ be a monotone Boolean function and let $f:\{0,1\}^n\times \{0,1\}^n \rightarrow \{0,1\}$ be a projection of $f_M$, that is $f(x,y) = f_M(g_1(x),g_2(y))$.
Let $\mathbf{S}$ be an $\epsilon$-correct and $\delta$-secure secret sharing scheme realizing the access structure $f_M$, in which the total share size is $c$, and let $s$ denote a secret (from the domain of $\mathbf{S}$) which is known to all players.
Then there exists an $\epsilon$-correct and $\delta$-secure CDS protocol disclosing $s$ subject to the condition $f$ with randomness $c$, and a (perhaps different) $\epsilon$-correct and $\delta$-secure protocol with communication complexity bounded above by $c$. 
\end{remark}

\vspace{0.2cm}
\noindent \textbf{DRE gives PSM}
\vspace{0.2cm}

See for example \cite{computation2013randomization} for the connection between DRE and PSM. 
We give a robust version of this connection as the next theorem. 

\begin{theorem} \label{thm:DREgivesPSM}
Suppose that $f:X\times Y\rightarrow Z$ has an $\epsilon$-correct and $\delta$-secure decomposable randomized encoding using $n_R$ bits of randomness, and $n_M$ message bits. 
Then there is an $\epsilon$-correct and $\delta$-secure PSM protocol for $f$ that uses the same amount of randomness and message bits. 
\end{theorem}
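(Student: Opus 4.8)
The plan is to observe that a decomposable randomized encoding already \emph{is} a PSM protocol once one reads off the right correspondence between the two sets of definitions; the decomposability hypothesis is exactly what allows the two parties to produce their messages without communicating. Concretely, I would let the PSM shared randomness be the encoding randomness $r$, instruct Alice to send $m_0(x,r) = \hat{f}_1(x,r)$ and Bob to send $m_1(y,r) = \hat{f}_2(y,r)$, so that the referee receives the pair $(m_0,m_1) = (\hat{f}_1(x,r), \hat{f}_2(y,r)) = \hat{f}(x,y;r)$, i.e.\ the full encoding. Because the encoding is decomposable across $X\times Y$, Alice's message depends only on $x$ and $r$ and Bob's only on $y$ and $r$, so this is a legal PSM protocol, and by construction it uses exactly $n_R$ bits of randomness and $n_M$ message bits.

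For correctness, I would simply reuse the DRE decoder $Dec$ as the PSM referee's decoder: since $Dec(\hat{f}(x,y;r))$ recovers $f(x,y)$ with probability at least $1-\epsilon$ over $r$ by $\epsilon$-correctness of the encoding, and the referee holds precisely $\hat{f}(x,y;r)$, the resulting PSM protocol is $\epsilon$-correct. For security, I would take the PSM simulator to be the DRE simulator. After identifying the encoding alphabet $\hat{Y}$ with the PSM message variable $M$, the encoding input $X$ with the joint input $(x,y)$, and the function output $Y$ with the value $f(x,y)$, the DRE privacy guarantee $||Sim_{\hat{Y}|Y}-P_{\hat{Y}|X}||_1 \leq \delta$ becomes exactly the PSM security requirement $||Sim_{M|f(x,y)} - P_{M|xy}||_1 \leq \delta$ for every $(x,y)$.

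The only real content is this last bookkeeping step: checking that the DRE notion of privacy---in which the simulator is given only the function value and must reproduce the distribution of the encoding conditioned on the input---lines up verbatim with the PSM notion of security, where the simulator is given $f(x,y)$ and must reproduce $P_{M|xy}$. Since the encoding is a function of $(x,y)$ together with independent randomness $r$, the conditional distribution $P_{\hat{Y}|X=(x,y)}$ is precisely the law of $M$ produced by the protocol on input $(x,y)$, so the identification is immediate and no approximation is lost. The main obstacle, such as it is, is thus purely notational rather than mathematical: there is no quantitative degradation in either parameter, which is why both $\epsilon$ and $\delta$ carry over unchanged.
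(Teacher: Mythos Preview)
Your proposal is correct and follows essentially the same approach as the paper: define the PSM messages to be the components $\hat{f}_1(x,r)$ and $\hat{f}_2(y,r)$ of the decomposable encoding, reuse the DRE decoder for correctness and the DRE simulator for security, and observe that the two pairs of definitions coincide under this identification. Your write-up is in fact more explicit than the paper's own proof, which simply notes that ``the conditions on the DRE and PSM are in fact exactly the same under these identifications.''
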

\begin{proof}
\,Let the DRE for $f$ be
\begin{align}
    \hat{f}(x,y;r) = (\hat{f}_X(x,r),\hat{f}_Y(y,r)) 
\end{align}
To implement the PSM protocol, Alice prepares $\hat{f}_X(x,r)$ and sends this to the referee, while Bob prepares $\hat{f}_Y(y,r)$ and sends this to the referee. 
The referee then uses the decoder for the DRE to determine $f(x)$. 
Noticing that the conditions on the DRE and PSM are in fact exactly the same under these identifications, we have that the PSM is also $\epsilon$-correct and $\delta$-secure. 
\end{proof}

Notice that a PSM for $f$ also gives a randomized encoding for the function $f$, albeit one that is decomposable across a particular splitting of the input bits into $X\times Y$, and not necessarily decomposable bitwise, as required in the definition of a DRE. 

\vspace{0.2cm}
\noindent \textbf{PSM gives CDS}
\vspace{0.2cm}

Next, we relate the PSM and CDS primitives. 
See for example \cite{GERTNER2000592,applebaum2017private}.

\begin{theorem}\label{thm:PSMgivesCDS}
Suppose that an $\epsilon$-correct and $\delta$-private PSM protocol exists for $f(x,y)$ using messages of at most $n_M$ bits and no more than $n_E$ shared random bits. 
Then a CDS protocol using $n_M+1$ bits of message and $n_E$ random bits exists which is $\epsilon$-correct and $O(\delta \log d_R)$ private, and hides one bit. 
\end{theorem}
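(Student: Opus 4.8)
The plan is to reduce the construction of the CDS to building a PSM for the augmented function $g((x,s),y) := s\cdot f(x,y)$, where Alice's input is enlarged by the one-bit secret $s$. The point is that \emph{any} PSM for $g$ is already a CDS for $f$ that hides $s$: the referee applies the PSM decoder to recover $g((x,s),y)$, which equals $s$ whenever $f(x,y)=1$, giving $\epsilon$-correctness; and whenever $f(x,y)=0$ we have $g((x,s),y)=0$ for \emph{both} values of $s$, so the PSM simulator $Sim_{M|g}$ depends only on the output value $0$ and not on $s$. Using that simulator as the CDS simulator, the PSM $\delta$-privacy bound together with one triangle inequality between the $s=0$ and $s=1$ message distributions yields security against learning $s$. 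Since Alice's input grows by a single bit and the output is one bit, this also accounts for the single extra message bit relative to the PSM for $f$.

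The central step is therefore to turn the given PSM for $f$ into a PSM for $s\cdot f$. The natural attempt is to keep Bob's message equal to his PSM message $m_1(y,r)$, and to have Alice use $s$ to select her contribution: when $s=1$ she sends her honest PSM message $m_0(x,r)$, so the transcript decodes to $f(x,y)=g$; when $s=0$ she instead uses the shared randomness to produce a message consistent with the PSM simulator for output $0$, so the transcript decodes to $0=g$ and reveals nothing beyond the output. The one extra communicated bit can be viewed as a one-time pad of $s$ against a mask that the referee can reconstruct precisely in the disclosing case $f(x,y)=1$.

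I expect the main obstacle to be exactly that Bob's message must be independent of $s$, since Bob does not hold the secret. Consequently, in the $s=0$ branch Alice's simulated message must remain jointly consistent with Bob's \emph{honestly} generated message $m_1(y,r)$, whereas the PSM privacy guarantee only certifies closeness of a \emph{fully} simulated message pair to the real pair, not of a simulated marginal glued onto a real marginal. The heart of the proof is controlling this hybrid, i.e.\ bounding the statistical distance between the $s=0$ and $s=1$ transcripts conditioned on $f(x,y)=0$. I anticipate the quantitative loss entering here: propagating the $\delta$-privacy bound through the simulator and averaging over the shared randomness, whose domain has size $d_R$, is what I expect to degrade a clean $O(\delta)$ bound to the stated $O(\delta\log d_R)$ security. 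Finally I would verify that the randomness and communication counts are preserved up to the single additional bit, completing the claimed parameters.
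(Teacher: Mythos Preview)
Your high-level plan---reduce CDS to a PSM for $g((x,s),y)=s\wedge f(x,y)$ and read off correctness and security from the PSM guarantees---is exactly the paper's plan. The gap is in how you build the PSM for $g$ from the PSM for $f$.

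You propose that when $s=0$ Alice alone ``simulates'' a zero-output message while Bob still sends his honest $m_1(y,r)$. You correctly flag that this hybrid (a simulated $m_0$ glued to an honest $m_1$) is not controlled by PSM privacy, and then hope that averaging over the randomness domain of size $d_R$ repairs the joint at cost $O(\delta\log d_R)$. That last step is not justified, and I do not see how to make it go through: Alice does not know $y$, so she cannot target any particular $m_1(y,r)$; the PSM simulator takes only the function value as input, not the shared randomness $r$, so there is no handle by which Alice can produce an $m_0$ that is jointly consistent with Bob's honestly-generated $m_1$ for every $y$. Replacing $x$ by a fixed $x_*$ does not help either, since $f(x_*,y)$ need not be $0$ for all $y$.

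The paper sidesteps this entirely. The missing idea is to first arrange (via the one-time-pad conversion of Remark~\ref{remark:onesidedCDS}) for \emph{both} parties to hold the secret bit, call it $s'$. Then fix any $(x_*,y_*)$ with $f(x_*,y_*)=0$ and have the two parties run the \emph{honest} PSM on substituted inputs $x'=s'x+(1-s')x_*$ and $y'=s'y+(1-s')y_*$, so that $f(x',y')=s'\wedge f(x,y)$. The message pair is now always a genuine PSM transcript on a genuine input, so PSM correctness and privacy transfer directly with no hybrid argument at all; security comes out as $\delta$ (the $O(\delta\log d_R)$ in the statement is slack). The one extra communicated bit is Alice's one-time-pad $s\oplus s'$, and the randomness is unchanged up to that one shared bit.
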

\begin{proof}\,
We wish to carry out the CDS task using the given PSM protocol. 
First, we note that by adding one bit of randomness we can assume $s$ is held by both Alice and Bob. 
This is because of remark \ref{remark:onesidedCDS}. 

Next, we show that given the PSM protocol for $f$ there is a similarly efficient PSM for the function $f(x,y)\wedge s$, with $s$ held on both sides. 
To show this, first consider the case where $f(x,y)$ is a constant function. 
Then Alice and Bob can follow a fixed strategy (reveal $s$ or not) and we are done. 
Thus we assume $f(x,y)$ is non-constant, and choose any input values for which it is $0$ and label them $(x_*,y_*)$. 
Run the PSM on inputs $x'=sx+ (1-s) x_*$ and $y'=sy + (1-s) y_*$.
Then notice that $f(x',y')=f(x,y)\wedge s$. 

To see $\epsilon$-correctness, we have the referee output the outcome of the modified PSM protocol as their guess for the secret $s$. 
Then their success probability conditioned on $f(x,y)=1$ is exactly $1-\epsilon$, so the CDS protocol is $1-\epsilon$ correct. 

Next consider security. Let the distribution of values of $f(x,y)$ be $F$, the distribution of values of $f(x',y')$ be $F'$, and the distribution of $x'$ and $y'$ be $X'$ and $Y'$ respectively. 
Security of the original PSM protocol implies
\begin{align}
    ||Sim_{M|F'} - P_{M|X'Y'}||_1\leq \delta.
\end{align}
Then notice that because $X'Y'$ are determined by $XYS$, we have $P_{M|X'Y'}=P_{M|XYS}$.
Next, restrict to the distributions where $f(x,y)=0$, leading to
\begin{align}
    ||Sim_{M|F'=0} - P_{M|XYS}||_1\leq \delta
\end{align}
which is $\delta$ security of the CDS. 
\end{proof}

\vspace{0.2cm}
\noindent \textbf{PSM gives PSQM}
\vspace{0.2cm}

Next, we prove that a protocol for PSM also gives a protocol for PSQM.
This might seem trivial, since the quantum resources available in the PSQM can simulate the classical resources used in the PSM, but establishing security requires we show the classical security definition is strong enough to enforce the quantum security definition. 
As far as we are aware this is not written in the literature (but see \cite{kawachi2021communication} for the introduction of PSQM), but is straightforward enough we include it in this section. 

\begin{theorem}\label{thm:PSMgivesPSQM}
    Suppose we have a PSM protocol which is $\epsilon$-correct and $\delta$-secure. 
    Then we can construct a PSQM protocol which is $2\sqrt{\epsilon}$ correct and $\delta$-secure. 
\end{theorem}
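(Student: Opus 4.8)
The plan is to emulate the given PSM protocol using quantum resources, sending the classical messages as computational-basis states, and then to check that classical correctness and security survive the quantum embedding. Since shared classical randomness is a special case of the resources available in PSQM (Alice and Bob may, for instance, share $\sum_r \sqrt{P_R(r)}\ket{r}\ket{r}$ and measure in the computational basis), Alice computes $m_0(x,r)$ and Bob $m_1(y,r)$ exactly as in the PSM, and each sends their message to the referee encoded in the computational-basis state $\ket{m_0(x,r)}$ or $\ket{m_1(y,r)}$. Averaging over $r$, the message state is
\begin{align}
    \rho_M(x,y) = \sum_{m} P_{M|xy}(m)\,\ketbra{m}{m}_M,
\end{align}
which is diagonal in the computational basis with diagonal equal to the PSM message distribution $P_{M|xy}$. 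This diagonality is the feature that makes the rest of the argument go through.

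For correctness, the referee implements the classical decoder $Dec$ coherently as the isometry $\mathbf{V}_{M\rightarrow Z\tilde{M}}:\ket{m}_M \mapsto \ket{m}_{\tilde{M}}\ket{Dec(m)}_Z$, which is well defined and independent of $(x,y)$ (as required, since the referee does not know the inputs). Tracing out $\tilde{M}$, the output on $Z$ is the diagonal state $\sum_z q_{xy}(z)\ketbra{z}{z}_Z$, where $q_{xy}(z)=\mathrm{Pr}[Dec(M)=z]$ over the PSM randomness. Classical $\epsilon$-correctness gives $q_{xy}(f(x,y))\geq 1-\epsilon$, which is exactly the fidelity of the output with the target pure state $\ketbra{f_{xy}}{f_{xy}}_Z$. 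Feeding $F\geq 1-\epsilon$ into the Fuchs--van de Graaf inequality recalled above yields $\tfrac{1}{2}\|\cdot\|_1 \leq \sqrt{1-F}\leq \sqrt{\epsilon}$, so the PSQM protocol is $2\sqrt{\epsilon}$-correct.

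For security, I would lift the classical simulator to a measure-and-prepare channel $\mathbfcal{S}_{Z\rightarrow M}$ that, on input $\ketbra{z}{z}_Z$, prepares the diagonal state $\sum_m Sim_{M|z}(m)\ketbra{m}{m}_M$. Then both $\rho_M(x,y)$ and $\mathbfcal{S}_{Z\rightarrow M}(\ketbra{f_{xy}}{f_{xy}}_Z)$ are diagonal in the computational basis, and the trace distance of two such states equals the $\ell_1$ distance of their diagonals, so
\begin{align}
    \left\|\rho_M(x,y) - \mathbfcal{S}_{Z\rightarrow M}(\ketbra{f_{xy}}{f_{xy}}_Z)\right\|_1 = \left\|P_{M|xy} - Sim_{M|f(x,y)}\right\|_1 \leq \delta
\end{align}
by classical $\delta$-security, giving $\delta$-security of the PSQM.

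The conceptually delicate step, flagged in the surrounding text, is the security claim: a priori the classical definition controls only the classical message distribution, whereas quantum security is a statement about density matrices in trace norm. The obstacle is resolved by the observation that encoding messages in a fixed orthonormal basis produces jointly diagonal states, so no coherence is introduced and the quantum trace distance collapses to the classical total-variation distance; hence no additional leakage can arise from the embedding. The only genuine loss relative to the classical guarantees is in correctness, where converting a success probability into a trace-distance bound through fidelity and Fuchs--van de Graaf costs a square root, producing the stated $2\sqrt{\epsilon}$.
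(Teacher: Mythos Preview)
Your proof is correct and follows essentially the same approach as the paper: embed the classical messages as computational-basis states so that $\rho_M(x,y)$ is diagonal, lift the classical decoder and simulator to quantum maps, identify the fidelity with the classical success probability to obtain $2\sqrt{\epsilon}$-correctness via Fuchs--van de Graaf, and use that trace distance of diagonal states equals the $\ell_1$ distance of the diagonals for $\delta$-security. If anything, your version is slightly tidier in that you explicitly construct the decoding isometry $\mathbf{V}_{M\rightarrow Z\tilde{M}}$ to match the PSQM definition, whereas the paper writes the decoder as a channel $\mathbfcal{D}_{M\rightarrow Z}$.
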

\begin{proof}\,
Correctness of the PSM protocol implies that there exists a decoder $Dec(m_0,m_1)$ such that
\begin{align}
    \forall (x,y)\in X\times Y\,\,\,\, \mathrm{Pr}[Dec(m_0,m_1) = f(x,y)]\geq 1-\epsilon
\end{align}
where the probability is over choices of the random string $r$. 
In quantum notation, we have that the message system is described by the density matrix
\begin{align}
    \rho_M(x,y) = \sum_{m} p(m|x,y) \ketbra{m}{m}
\end{align}
and can write the output of the decoder as
\begin{align}
    \mathbfcal{D}_{M\rightarrow Z}(\rho_M(x,y)) = \sum_{m} p(m|x,y) \ketbra{D(m)}{D(m)}.
\end{align}
Then notice that
\begin{align}
    F( \mathbfcal{D}_{M\rightarrow Z}(\rho_M(x,y)),\ket{f_{xy}}) = \sum_m p(m|x,y) |\braket{D(m)}{f_{xy}}|^2 \geq 1-\epsilon
\end{align}
where the last line follows because we see the fidelity is exactly the guessing probability, which is bounded from below by the classical correctness definition. 
Using the Fuchs van de Graff inequalities, we get that 
\begin{align}
    ||\mathbfcal{D}_{M\rightarrow Z}(\rho_M(x,y)) - \ketbra{f_{xy}}{f_{xy}}||_1\leq 2\sqrt{\epsilon}
\end{align}
as needed. 

Next recall security of the PSM means that there exists a simulator which takes in $f(x,y)$ and produces output distribution $Sim$ on the message system such that
\begin{align}
    \forall (x,y)\in X\times Y,\, \,\,\,||Sim_{M|f(x,y)} - P_{M|xy}||_1 \leq \delta.
\end{align}
To get security of the PSQM, we need to upgrade this simulator to a quantum channel. 
In particular if the simulator is defined by the conditional probability distribution $p(m|f)$, define the Kraus operators
\begin{align}
    S_{m,f} = \sqrt{p(m|f)} \ketbra{m}{f}. 
\end{align}
Calling the corresponding simulator channel $\mathbfcal{S}$, we have that
\begin{align}
    ||\mathbfcal{S}(\ketbra{f_{xy}}{f_{xy}}) - \rho_M(x,y)||_1= ||Sim_{M|f(x,y)} - P_{M|xy}||_1\leq \delta
\end{align}
so we have exactly $\delta$ security of the PSQM. 
\end{proof}

\vspace{0.2cm}
\noindent \textbf{GH gives $f$-routing}
\vspace{0.2cm}

In \cite{buhrman2013garden}, the following statement is shown. 
\begin{theorem}\label{thm:GHtofR}
    The number of EPR pairs needed to implement a $f$-routing protocol for a function $f$ is upper bounded by the garden hose complexity of $f$.
\end{theorem}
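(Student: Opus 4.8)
The plan is to simulate an optimal garden hose strategy for $f$ using teleportation and entanglement swapping, converting each pipe into an EPR pair and each hose connection into a Bell measurement. I would fix a deterministic garden hose protocol for $f$ that achieves the garden hose complexity, using $m$ pipes, and associate to each pipe $i$ a maximally entangled pair $\ket{\Phi^+}_{A_iB_i}$, with Alice holding the qubit $A_i$ (her side of the pipe) and Bob holding $B_i$ (his side). This uses exactly $m$ EPR pairs, matching the pipe count, so if the garden hose strategy is optimal the entanglement equals the garden hose complexity.

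Next I would translate the connection pattern into local Bell measurements. Upon receiving $x$, Alice knows the opening $t=t(x)$ to which her tap attaches and which pairs of her openings she joins with hoses; she performs a Bell measurement of the secret system $Q$ against $A_t$ (injecting $Q$ into the chain exactly as in teleportation) and a Bell measurement on each pair $(A_i,A_j)$ that one of her hoses joins. Symmetrically, upon receiving $y$ Bob performs a Bell measurement on each pair $(B_i,B_j)$ joined by one of his hoses. All of these measurements act on disjoint qubits, so they commute and may be carried out locally and simultaneously, each producing a classical Pauli label as its outcome.

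The key step is to track the state through this network. A single teleportation moves $Q$ from $A_t$ across pipe $t$ to $B_t$, and each subsequent Bell measurement on a joined pair performs an entanglement swap that relays the state from the measured qubit to the partner of the other measured qubit, exactly mirroring water passing through a pipe and being redirected by a hose. Crucially, which qubits get linked is fixed by the connection pattern, hence by $x$ and $y$, and is independent of the measurement outcomes; only the accumulated Pauli byproduct depends on the outcomes. Consequently the physical location of $Q$ traces the water's path and terminates at the unique open end where the water spills: on Bob's side precisely when $f(x,y)=1$ and on Alice's side precisely when $f(x,y)=0$, by correctness of the garden hose strategy.

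It remains to handle the Pauli corrections and the communication. In the single allowed round, Alice sends Bob all of her measurement outcomes and Bob sends Alice all of his; this is the classical message exchange defining the systems $M$ and $M'$. Whichever party holds the terminal qubit then knows every outcome and can compute the net Pauli byproduct as an ordered product of the single-step corrections along the path, then undo it to recover $Q$ exactly. Thus when $f(x,y)=1$ Bob recovers $Q$ from $M$, and when $f(x,y)=0$ Alice recovers $Q$ from $M'$, yielding a perfectly ($\epsilon=0$) correct $f$-routing protocol using $m$ EPR pairs, which proves the bound. I expect the main obstacle to be the bookkeeping in the third step: verifying that commuting Bell measurements on disjoint qubits compose into a single teleportation along the water path whose byproduct operator is a deterministic function of the outcomes, so that the recovering party can always invert it.
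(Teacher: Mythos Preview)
Your proposal is correct and follows essentially the same approach as the paper, which itself only sketches the argument and defers to \cite{buhrman2013garden}: pipes become EPR pairs, hose connections become Bell measurements, the water path becomes a teleportation chain carrying $Q$ to the spilling end, and the single communication round transmits the measurement outcomes needed to undo the accumulated Pauli byproduct. Your write-up is in fact more detailed than the paper's sketch; the only minor omission is that along with the outcomes each party must also communicate (explicitly or implicitly) which pairs they measured so the recovering party can reconstruct the path, but this is a bookkeeping point you already flag as the main thing to check.
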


We won't reproduce a careful proof of this, but it is easy to see: each pipe in the garden hose protocol is replaced with an EPR pair in the f-routing strategy. 
Connecting pipes corresponds to measuring pairs of systems in the Bell basis.
Doing so, the input system $Q$ will end up recorded into the Hilbert space corresponding to spilling end of one of the pipes. 
Pauli corrections appear on this state, but the one round of communication in the $f$-routing strategy can be used to communicate all the measurement outcomes and then undo the corresponding corrections. 

\section{New relations among primitives}\label{eq:newrelationships}

This section begins our study of the relationships among the cryptographic primitives introduced in section \ref{sec:primitivedefinitions}. 

\subsection{Garden hose strategies give CDS}

We point out that the garden hose game defines strategies for CDS.
\begin{theorem}\label{thm:GHtoCDS}
    The garden hose complexity of a function $f(x,y)$ upper bounds the CDS cost, 
    \begin{align}
        CDS(f) \leq GH(f)
    \end{align}
\end{theorem}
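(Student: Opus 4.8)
The plan is to turn an optimal garden hose strategy for $f$, say with $m = GH(f)$ pipes, into a \emph{perfect} CDS by one-time padding the secret along the water path. First I would fix the honest garden hose strategy: on inputs $x,y$ the players deterministically decide where to attach the tap and which pipe ends to join with hoses, after which the water traces a single path from the tap to an open end, spilling on Bob's side exactly when $f(x,y)=1$. To each of the $m$ pipes I associate an independent uniformly random string $r_i$ of length $|s|$ taken from the shared randomness, thought of as a key attached to both ends of pipe $i$. Since the garden hose connections depend only on $x,y$ and the referee knows $x,y$, the referee knows the whole path structure and is only missing the keys.

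Next I would specify the messages. Alice, who connects the tap to the (left) end of some pipe $j_1=j_1(x)$, sends $s \oplus r_{j_1}$, and for every hose she lays joining pipes $a$ and $b$ she sends $r_a \oplus r_b$. Bob, for every hose he lays joining pipes $a$ and $b$, sends $r_a \oplus r_b$, and for every right end he leaves open he sends the corresponding key $r_i$. Since only Alice ever uses $s$, this is automatically a one-sided CDS (consistent with Remark~\ref{remark:onesidedCDS}). The randomness cost is exactly the $m$ keys, and the communication is one reveal per tap, hose, or open end, i.e.\ $O(m)$ symbols; this yields $CDS(f) \le GH(f)$ (the randomness cost being precisely $m = GH(f)$ strings, and communication within a constant factor).

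For correctness and security I would pass to the linear-algebra picture over $\mathbb{Z}_2$: introduce a variable for each key $r_i$, a variable for the secret $s$, and a grounded variable fixed to $0$. The revealed messages are then exactly a system of XOR equations, one per edge of a graph $G$ whose vertices are the pipes together with a secret-vertex $\star$ (joined to $j_1$ and carrying $s$) and a ground vertex $g$ (joined to every open right end). Because each pipe end takes part in at most one connection, every pipe has degree at most two in $G$ and $\star$ has degree one, so the connected component of $\star$ is a simple path, and this path coincides with the garden hose water path. When $f(x,y)=1$ the path terminates at an open right end and hence reaches $g$, so telescoping the revealed XORs from $\star$ to $g$ recovers $s$ exactly, giving $0$-correctness. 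When $f(x,y)=0$ the path terminates at an open \emph{left} end, which is not joined to $g$; thus $\star$'s component contains no grounded vertex, the keys in that component are jointly uniform given the rest of the transcript, and $s=(s\oplus r_{j_1})\oplus r_{j_1}$ is uniform and independent of the messages, giving $0$-security. I expect the security step to be the main point to get right: one must verify that $s$ can leak only through a path to a grounded open Bob-side end, which I would argue reduces cleanly to the degree-at-most-two structure forcing $\star$'s component in $G$ to be exactly the water path.
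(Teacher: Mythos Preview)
Your proposal is correct and follows essentially the same approach as the paper: assign an independent random key to each pipe, have Alice send $s\oplus r_{j_1}$ for the tap connection and $r_a\oplus r_b$ for each hose, have Bob send $r_a\oplus r_b$ for each hose and reveal $r_i$ for every open right end, then argue correctness by telescoping XORs along the water path and security from the fact that when $f(x,y)=0$ the path terminates at an open left end whose key is never revealed. Your graph formulation with the ground vertex $g$ is a slightly cleaner packaging of the same chain-following argument the paper gives; the only minor omission is the degenerate case where the tap is left unconnected (chain length zero), which the paper handles explicitly but which is trivial.
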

\begin{proof}\,
To show this, we construct a CDS protocol given a garden-hose protocol that uses a number of shared random bits equal to the number of pipes in the garden hose protocol. 

Label the set of pipes used in the garden hose game $p_i$ with the tap labelled $p_0$, the connections on Alice's side by $C_x=\{(p_i,p_j) \}$, and the connections on Bob's side by $C_y=\{(p_i,p_j)\}$.
Note that because no pipe can be connected to two hoses, each $p_i$ appears in $C_x$ at most once, and in $C_y$ at most once. 
Correctness of the garden hose protocol means that for all $(x,y)$, there is a path from the tap to the side labelled by $f(x,y)$. 

To turn this into a CDS protocol, we proceed as follows. 
Each pipe $p_i$, $i>0$, becomes a shared random bit held by Alice and Bob. 
The secret $s$ corresponds to the tap $p_0$. 
For each connection in $C_x$, say $(p_i,p_j)$, Alice computes $c_{ij}=p_i\oplus p_j$ and sends this to the referee. 
Bob does the same for each connection in $C_y$. 
Finally, Bob sends each shared random bit $p_k$ not appearing in any connection in $C_y$ to the referee. 
In contrast, Alice's unused random bits are kept hidden from the referee. 

To see why this is correct and secure, consider the chain of connection bits $c_{i_ki_{k+1}}=p_{i_k}\oplus p_{i_{k+1}}$, where $p_{i_0}=s$ is the secret. 
If the chain is of length 0, this corresponds to an unconnected tap in the garden hose picture, so that $f(x,y)=0$ and the water spills on the left. 
In the CDS protocol, the secret, being an un-XOR'd bit, is not sent to the referee, so that the referee cannot learn the secret, as needed. 
Now suppose the chain has length $>1$. 
Then $c_{i_0i_1}=s\oplus p_{i_1}$ is sent to the referee, and no other bits which are computed from $s$ are sent, so that the referee learns $s$ if and only if they learn $p_{i_1}$. 
Continuing in this way down the chain of connection bits, we see that the referee learns $s$ if and only if they learn $p_{i_m}$, the final random bit (corresponding to the final pipe in the waters path). 
But then $p_{i_m}$ is not used to compute any other bits (by virtue of being at the end of the chain), and is sent if and only if it is unused on the right.
But it is unused on the right if and only if the corresponding pipe spills on the right, which by our assumption of correctness of the garden hose strategy is if and only if $f(x,y)=1$
\end{proof}

\subsection{Classical CDS gives quantum CDS}

In this section we observe that a classical CDS scheme immediately gives a quantum CDS scheme, via a use of the one-time pad. 

\begin{theorem}\label{thm:CDStoCDQS}
    An $\epsilon$-correct and $\delta$-secure CDS protocol hiding $2n$ bits and using $n_M$ bits of message and $n_E$ bits of randomness gives a CDQS protocol which hides $n$ qubits, is $2\sqrt{\epsilon}$ correct and $\delta$-secure using $n_M$ classical bits of message plus $n$ qubits of message, and $n_E$ classical bits of randomness.
\end{theorem}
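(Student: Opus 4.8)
The plan is to use the quantum one-time pad to reduce hiding $n$ qubits to hiding a $2n$-bit classical key, and then run the given classical CDS on that key. Concretely, Alice chooses a uniformly random $2n$-bit string $k$ as her private randomness, interprets it as a Pauli string $P^k_Q$ on the $n$-qubit secret $Q$, applies $P^k$, and sends the encrypted qubits to the referee as the $n$ qubits of quantum message. She then treats $k$ as the classical secret of the CDS and, together with Bob, runs the $\epsilon$-correct and $\delta$-secure classical CDS hiding $2n$ bits, producing the classical messages $m_0(x,k,r)$ and $m_1(y,r)$. The shared randomness is exactly the $n_E$ bits used by the classical CDS (the key $k$ is Alice's local randomness and need not be shared, since the CDS treats it as Alice's secret), and the communication is $n_M$ classical bits plus $n$ encrypted qubits, matching the claimed resource counts.

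For correctness I would argue as follows. When $f(x,y)=1$ the classical decoder $D(m_0,x,m_1,y)$ outputs the correct key $k'=k$ with probability at least $1-\epsilon$ over the CDS randomness $r$, for each fixed $k$, and this failure probability is independent of the input state on $Q$. The quantum decoder first runs $D$ to obtain $k'$ and then applies $P^{-k'}$ to the encrypted qubits. Averaging over $k$ and $r$, the composed decode--encode channel has the exact form $(1-\eta)\,\mathbfcal{I}_{Q\to Q} + \eta\,\mathbfcal{E}$ for some channel $\mathbfcal{E}$ (a mixture of nontrivial Pauli conjugations) and some $\eta \le \epsilon$ that does not depend on the input. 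Its diamond distance from the identity is therefore $\eta\,\|\mathbfcal{E}-\mathbfcal{I}_{Q\to Q}\|_\diamond \le 2\eta \le 2\epsilon$, which in particular is at most the claimed $2\sqrt{\epsilon}$; alternatively one can compute the entanglement fidelity on the maximally entangled input directly and convert via the Fuchs--van de Graaf inequalities exactly as in the proof of Theorem~\ref{thm:PSMgivesPSQM}.

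For security I would combine the classical guarantee with the one-time pad. Fix $(x,y)$ with $f(x,y)=0$ and an arbitrary input $\rho_{RQ}$ on the secret together with a reference $R$. The real message state is block-diagonal in the classical register,
\begin{align}
    \mathbfcal{N}^{x,y}(\rho_{RQ}) = \sum_{k}\frac{1}{2^{2n}}\sum_{m} P_{M|k}(m)\,\ketbra{m}{m}\otimes (I_R\otimes P^k_Q)\,\rho_{RQ}\,(I_R\otimes P^{-k}_Q),
\end{align}
where $P_{M|k}$ is the induced classical message distribution for key $k$. The proposed CDQS simulator outputs $\rho_R \otimes Sim_{M|xy}\otimes (\mathcal{I}_{M_Q}/d_Q)$, i.e. the classical CDS simulator tensored with a maximally mixed encrypted register, using no access to $Q$. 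The argument has two steps: first, classical $\delta$-security gives $\|P_{M|k}-Sim_{M|xy}\|_1\le\delta$ for every $k$, and since the classical register is block-diagonal this lets me replace $P_{M|k}$ by the $k$-independent $Sim_{M|xy}$ at a total trace-norm cost of at most $\delta$; second, once the classical register no longer depends on $k$ I factor it out and apply the Pauli twirl $\frac{1}{2^{2n}}\sum_k (I_R\otimes P^k)\,\rho_{RQ}\,(I_R\otimes P^{-k}) = \rho_R\otimes \mathcal{I}_{Q}/d_Q$, which turns the remaining state into exactly the simulator output. This bounds the trace distance by $\delta$ for every $\rho_{RQ}$, hence the diamond distance by $\delta$.

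The step I expect to be the main obstacle is the security direction, specifically decoupling the classical messages from the Pauli key before invoking the one-time pad: in the real state $P_{M|k}$ is genuinely correlated with $k$, and it is precisely the classical CDS security that makes this correlation small, after which the twirl acts on a product. Getting the quantifiers right --- that the replacement cost is $\delta$ uniformly in $k$, and that the bound holds for an arbitrary input with reference so that it upgrades to the diamond norm --- is where the care is needed; the correctness direction and the resource bookkeeping are comparatively routine.
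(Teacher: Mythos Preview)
Your proposal is correct and follows essentially the same route as the paper: quantum one-time pad the secret, run the classical CDS on the key, and for security replace $P_{M|k}$ by the simulator at cost $\delta$ before applying the Pauli twirl. Your correctness analysis is in fact slightly tighter than the paper's---you obtain $2\epsilon$ directly by writing the decoded channel as a $(1-\eta)$-$\eta$ mixture, whereas the paper passes through fidelity and a Fuchs--van de Graaf step that introduces an unnecessary square root.
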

\begin{proof}\,
Let the quantum system to be hidden in the CDQS be labelled $Q$. 
The basic idea is to use the CDS protocol to hide the key of a one-time pad applied to the system $Q$.
The encoded system $Q$ is sent to the referee.
The one-time pad key, call it $s$, consists of $2\log d_Q$ bits, which we choose independently and at random and hide in the CDS. 
The channel applied by Alice and Bob's combined actions is then
\begin{align}\label{eq:CDQSchannel}
    \mathbfcal{N}^{xy}_{Q\rightarrow QM}(\cdot) = \frac{1}{2^{|s|}} \sum_{m,s} P_Q^s(\cdot)P_Q^s \otimes p_{m|xys}\ketbra{m}{m}_M
\end{align}
We first study correctness.
To do this, we recall that correctness of the classical CDS guarantees the existence of a decoder which produces an outcome which is equal to the secret value with probability $1-\epsilon$. 
In quantum notation, we can describe this channel as
\begin{align}
    \text{Dec}^{xy}_{M\rightarrow S}(\cdot) = \sum_{m,s'} p_{s'|mxy} \ket{s'}_S\bra{m}_M \cdot \ket{m}_M\bra{s'}_S.
\end{align}
The correctness condition for CDS states that, for $(x,y)\in f^{-1}(1)$ this produces a guess $s'$ which agrees with the secret $s$, or more precisely,
\begin{align}
    F\left(\text{Dec}^{xy}(\sum_m p_{m|sxy}\ketbra{m}{m}),\ketbra{s}{s}\right) \geq 1-\epsilon.
\end{align}
Relating this to the trace distance via the Fuchs van de Graff inequalities, this becomes, 
\begin{align}\label{eq:CDSTDcorrectness}
    \sum_{s'}\left(\sum_{m}p_{s'|mxy}p_{m|sxy} - \delta_{s'|s} \right)\leq 2\sqrt{\epsilon},
\end{align}
where $\delta_{s|s'}=1$ if $s=s'$ and is zero otherwise.
We will use this statement in establishing correctness of the CDQS. 

Define the decoding channel for the CDQS by combining the classical decoder with a conditional application of $P^{s'}_Q$, then a trace over the register $S$ holding the secret, so that our decoder is
\begin{align}
    \mathbfcal{D}^{xy}_{QM\rightarrow Q}(\cdot) = \sum_{m,s'} p_{s'|mxy} P^{s'}_Q \otimes\bra{m}_M \cdot P^{s'}_Q\otimes \ket{m}_M
\end{align}
We need to bound the diamond norm $||\mathbfcal{D}^{xy}_{QM\rightarrow Q} \circ \mathbfcal{N}^{xy}_{Q\rightarrow M'} - \mathbfcal{I}_{Q\rightarrow Q}||_\diamond$ from above. 
From the definition of the diamond norm and the channels $\mathbfcal{D}^{xy}_{QM\rightarrow Q},\mathbfcal{N}^{xy}_{Q\rightarrow M'}$, this is
\begin{align}
    ||\mathbfcal{D}_{QM\rightarrow Q} &\circ \mathbfcal{N}^{xy}_{Q\rightarrow M'} - \mathbfcal{I}_{Q\rightarrow Q}||_\diamond \nonumber \\
    &= \sup_n \max_{\Psi_{R_nQ}} ||\frac{1}{2^{|s|}}\sum_{m,s,s'} p_{s'|mxy}p_{m|sxy}P^{s+s'}_Q \Psi_{R_nQ}P^{s+s'}_Q - \Psi_{R_nQ}||_1 \nonumber \\
    &= \sup_n \max_{\Psi_{R_nQ}} ||\frac{1}{2^{|s|}}\sum_{m,s,s'} p_{s'|mxy}p_{m|sxy}P^{s+s'}_Q \Psi_{R_nQ}P^{s+s'}_Q - \frac{1}{2^{|s|}}\sum_{s,s'} \delta_{s|s'}P^{s+s'}_Q\Psi_{R_nQ}P^{s+s'}_Q||_1 \nonumber \\
    &=  \frac{1}{2^{|s|}}\sum_{s,s'}(\sum_m p_{s'|mxy}p_{m|sxy} -\delta_{s'|s}) \sup_n \max_{\Psi_{R_nQ}} || P^{s+s'}_Q \Psi_{R_nQ}P^{s+s'}_Q||_1 \nonumber \\
    &=\frac{1}{2^{|s|}}\sum_{s,s'}(\sum_m p_{s'|mxy}p_{m|sxy} -\delta_{s'|s}) \nonumber \\
    &\leq 2\sqrt{\epsilon}
\end{align}
where we used equation \ref{eq:CDSTDcorrectness} in the last line, which recall held for all $(x,y)\in f^{-1}(1)$. 

To establish security of the CDQS, we define the simulator channel as\footnote{Notice a potential confusion around the notation here: $M$ is the message system of the CDS, $M'=MQ$ is the message system of the CDQS.}
\begin{align}
    \mathbfcal{S}_{\varnothing \rightarrow MQ}^{xy} = \frac{\mathcal{I}_Q}{d_Q}\otimes \sum_m \text{Sim}_{M|xy}\ketbra{m}{m}_M
\end{align}
We need to show $\mathbfcal{S}_{\varnothing \rightarrow MQ}^{xy}\circ \tr_Q$ is close to the channel \ref{eq:CDQSchannel} in diamond norm for all $(x,y)\in f^{-1}(0)$. 
This follows from security of the CDQS and a simple calculation. 
Start with the definition of the diamond norm,
\begin{align}
    ||\mathbfcal{S}_{\varnothing \rightarrow MQ}^{xy}&\circ \tr_Q - \mathbfcal{N}_{Q\rightarrow QM}||_\diamond \nonumber \\
    &= \sup_{n}\max_{\Psi_{R_nQ}}||\mathbfcal{S}_{\varnothing \rightarrow MQ}^{xy}\circ \tr_Q(\Psi_{R_nQ}) - \mathbfcal{N}^{xy}_{Q\rightarrow QM}(\Psi_{R_nQ})||_1 \nonumber \\
    &=\sup_{n}\max_{\Psi_{R_nQ}}||\Psi_{R_n}\otimes \frac{\mathcal{I}_Q}{d_Q}\otimes \sum_m \text{Sim}_{m|xy}\ketbra{m}{m}_M - \frac{1}{2^{|s|}}\sum_{m,s}P^s_Q\Psi_{R_nQ}P^s_Q\otimes p_{m|xys}\ketbra{m}{m}_M||_1 \nonumber \\
    &= \sup_{n}\max_{\Psi_{R_nQ}}||\frac{1}{2^{|s|}}\sum_{m,s}P^s_Q\Psi_{R_nQ}P^s_Q\otimes \text{Sim}_{m|xy}\ketbra{m}{m}_M-\frac{1}{2^{|s|}}\sum_{m,s}P^s_Q\Psi_{R_nQ}P^s_Q\otimes p_{m|xys}\ketbra{m}{m}_M||_1 \nonumber \\
\end{align}
where we used that
\begin{align}
    \Psi_{R_n}\otimes \frac{\mathcal{I}}{d_Q}=\frac{1}{2^{|s|}}\sum_{s}P^s_Q\Psi_{R_nQ}P^s_Q.
\end{align}
To bound our remaining expression, we take the sum over $s$ out of the trace distance and find
\begin{align}
    ||\mathbfcal{S}_{\varnothing \rightarrow MQ}^{xy}&\circ \tr_Q - \mathbfcal{N}_{Q\rightarrow QM}||_\diamond \nonumber \\
    &= \frac{1}{2^{|s|}} \sum_s ||(P^s_Q \Psi_{R_nQ}P^{s}_Q)\otimes  \left(\sum_m \text{Sim}_{m|xy}\ketbra{m}{m}_M - \sum_m p_{m|xys}\ketbra{m}{m}_M\right) ||_1\nonumber \\
    &= \frac{1}{2^{|s|}} \sum_s ||\text{Sim}_{M|xys}-p_{M|xys} ||_1 \nonumber \\
    &\leq \delta
\end{align}
where the last inequality is coming from security of the classical CDS. 
\end{proof}

\subsection{Equivalence of \texorpdfstring{$f$}{TEXT}-routing and CDQS}

Our main claim of this section is that the CDQS and $f$-routing scenarios are equivalent, in that a protocol for one induces a protocol for the other using similar resources. 
The basic idea underlying the equivalence, and labelling of the various subsystems used in the proof, is illustrated in figure \ref{fig:NLQCandCDQS}.

\begin{figure*}
    \centering
    \begin{subfigure}{0.45\textwidth}
    \centering
    \begin{tikzpicture}[scale=0.4]
    
    \draw[thick] (-5,-5) -- (-5,-3) -- (-3,-3) -- (-3,-5) -- (-5,-5);
    
    \draw[thick] (5,-5) -- (5,-3) -- (3,-3) -- (3,-5) -- (5,-5);
    
    \draw[thick] (5,5) -- (5,3) -- (3,3) -- (3,5) -- (5,5);
    
    \draw[thick] (4.5,-3) -- (4.5,3);
    
    \draw[thick] (-3.5,-3) to [out=90,in=-90] (3.5,3);
    
    \draw[thick] (-3.5,-5) to [out=-90,in=-90] (3.5,-5);
    \draw[black] plot [mark=*, mark size=3] coordinates{(0,-7.05)};
    
    \draw[thick] (-4.5,-6) -- (-4.5,-5);
    \node[below] at (-4.5,-6) {$x,Q$};
    
    \draw[thick] (4.5,-6) -- (4.5,-5);
    \node[below] at (4.5,-6) {$y$};

    \node[left] at (-1,-2) {$M_0$};
    \node[right] at (4.5,0) {$M_1$};
    
    \draw[thick] (4,5) -- (4,6);
    \node[above] at (4,6) {$Q$};

    \node at (-4,-4) {$\mathbf{\mathbfcal{N}}^L$};
    \node at (4,-4) {$\mathbf{\mathbfcal{N}}^R$};
    \node at (4,4) {$\mathbf{W}^R$};
    
    \end{tikzpicture}
    \caption{}
    \label{fig:CDQSagain}
    \end{subfigure}
    \hfill
    \begin{subfigure}{0.45\textwidth}
    \centering
    \begin{tikzpicture}[scale=0.4]
    
    \draw[thick] (-5,-5) -- (-5,-3) -- (-3,-3) -- (-3,-5) -- (-5,-5);
    \node at (-4,-4) {$\mathbf{V}^L$};
    
    \draw[thick] (5,-5) -- (5,-3) -- (3,-3) -- (3,-5) -- (5,-5);
    \node at (4,-4) {$\mathbf{V}^R$};
    
    \draw[thick] (5,5) -- (5,3) -- (3,3) -- (3,5) -- (5,5);
    \node at (4,4) {$\mathbf{W}^R$};
    
    \draw[thick] (-5,5) -- (-5,3) -- (-3,3) -- (-3,5) -- (-5,5);
    \node at (-4,4) {$\mathbf{W}^L$};
    
    \draw[thick] (-4.5,-3) -- (-4.5,3);
    
    \draw[thick] (4.5,-3) -- (4.5,3);
    
    \draw[thick] (-3.5,-3) to [out=90,in=-90] (3.5,3);
    
    \draw[thick] (3.5,-3) to [out=90,in=-90] (-3.5,3);
    
    \draw[thick] (-3.5,-5) to [out=-90,in=-90] (3.5,-5);
    \draw[black] plot [mark=*, mark size=3] coordinates{(0,-7.05)};
    
    \draw[thick] (-4.5,-6) -- (-4.5,-5);
    \draw[thick] (4.5,-6) -- (4.5,-5);
    
    \draw[thick] (4.5,5) -- (4.5,6);
    \draw[thick] (-4.5,5) -- (-4.5,6);
    
    \draw[thick] (3.5,5) -- (3.5,6);
    \draw[thick] (-3.5,5) -- (-3.5,6);

    \node[left] at (-1,-2) {$M_0$};
    \node[right] at (4.5,0) {$M_1$};

    \node[right] at (1,-2) {$M_1'$};
    \node[left] at (-4.5,0) {$M_0'$};
    
    \end{tikzpicture}
    \caption{}
    \label{fig:f-routing}
    \end{subfigure}
    \caption{Corresponding CDQS (left) and $f$-routing (right) protocols. To define the CDQS protocol from the $f$-routing protocol, we have Alice and Bob trace out systems $M_0'$ and $M_1'$. Systems $M_0$ and $M_1$ are sent to the referee rather than to Bob. To define the $f$-routing protocol from the CDQS, purify the local channels $\mathbf{\mathbfcal{N}}^L$ and $\mathbf{\mathbfcal{N}}^R$ to isometries $\mathbf{V}^L$ and $\mathbf{V}^R$. Send the original outputs of the channel to Bob on the right, and the purifying systems to Alice on the left. We adopt the notation $M=M_0M_1$ and $M'=M_0'M_1'$.}
    \label{fig:NLQCandCDQS}
\end{figure*}
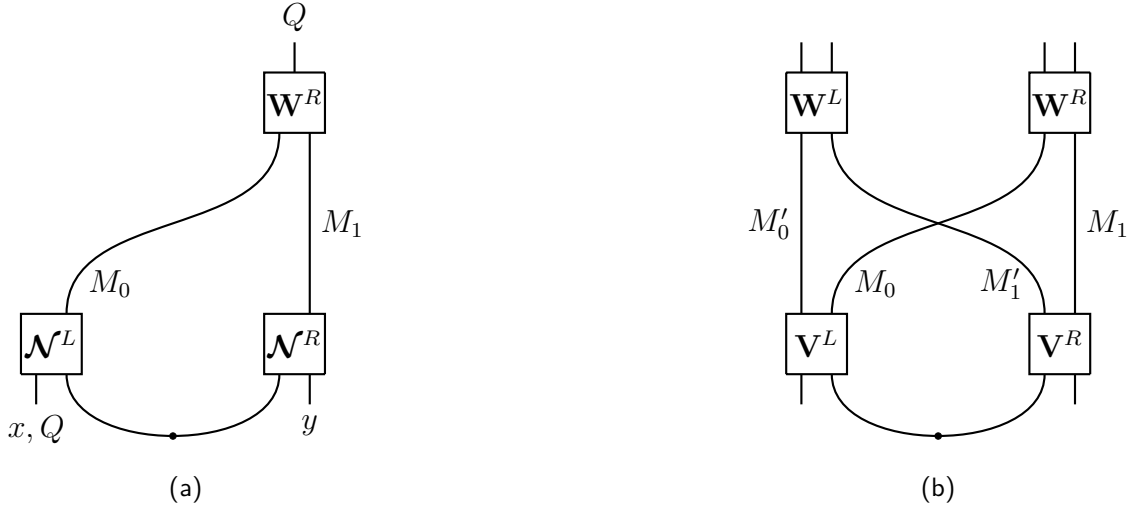

\begin{theorem}\label{thm:CDQSandfRouting}
    an $\epsilon$-correct $f$-routing protocol that routes $n$ qubits implies the existence of an $\epsilon$-correct and $\delta=2\sqrt{\epsilon}$-secure CDQS protocol that hides $n$ qubits using the same entangled resource state and the same message size. 
    An $\epsilon$-correct and $\delta$-secure CDQS protocol hiding secret $Q$ using a $n_E$ qubit resource state $n_M$ qubit messages implies the existence of a $\max\{\epsilon,2\sqrt{\delta} \}$-correct $f$-routing protocol that routes system $Q$ using $n_E$ qubits of resource state and $4(n_M+n_E)$ qubits of message. 
\end{theorem}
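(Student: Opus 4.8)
The plan is to prove both directions by splitting on the value of $f(x,y)$, handling the $f(x,y)=1$ case directly (the decoder is simply inherited) and invoking the decoupling theorem (Theorem~\ref{thm:newdecoupling}) in the $f(x,y)=0$ case to trade recoverability against decoupling/security. The conceptual core, in both directions, is that Alice's ability to recover $Q$ and the referee's inability to learn $Q$ are the two sides of the complementary-channel dichotomy quantified by Theorem~\ref{thm:newdecoupling}.

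For the forward direction ($f$-routing $\Rightarrow$ CDQS), I would set the CDQS channel to be $\mathbfcal{N}^{xy}_{Q\rightarrow M}\defi \tr_{M'}\circ\mathbfcal{N}^{xy}_{Q\rightarrow MM'}$: Alice and Bob run the $f$-routing operations but forward $M=M_0M_1$ to the referee and discard $M'=M_0'M_1'$. Correctness is immediate, since for $f(x,y)=1$ the $f$-routing decoder $\mathbfcal{D}^{xy}_{M\rightarrow Q}$ is already a CDQS decoder, giving $\epsilon$-correctness. For security, fix $(x,y)$ with $f(x,y)=0$ and pass to a Stinespring isometry $\mathbf{V}:Q\rightarrow MM'E$ of the $f$-routing channel. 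The $f$-routing condition for $f(x,y)=0$ supplies a decoder recovering $Q$ from $M'$ with diamond-norm error $\le\epsilon$; applying the upper bound of Theorem~\ref{thm:newdecoupling} to the channel $Q\rightarrow M'$ (whose complement outputs $ME$) shows that this complement is within $2\sqrt{\epsilon}$ in diamond norm of a completely depolarizing channel $\mathbfcal{S}_{Q\rightarrow ME}$. Composing both arguments with $\tr_E$, which cannot increase the diamond norm, collapses $Q\rightarrow ME$ to the CDQS channel $Q\rightarrow M$ and sends $\mathbfcal{S}_{Q\rightarrow ME}$ to a channel of simulator form $\mathbfcal{S}^{xy}_{\varnothing\rightarrow M}\circ\tr_Q$, establishing $\delta=2\sqrt{\epsilon}$ security.

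For the reverse direction (CDQS $\Rightarrow$ $f$-routing), I would purify Alice's and Bob's local operations to isometries $\mathbf{V}^L,\mathbf{V}^R$, producing purifying registers $M_0',M_1'$; since the shared resource is pure, the global map $\mathbf{V}:Q\rightarrow MM'$ is an isometry, so the channels $Q\rightarrow M$ and $Q\rightarrow M'$ are \emph{exactly} complementary. The $f$-routing protocol sends $M_0$ to Bob and $M_1'$ to Alice, so Bob holds $M=M_0M_1$ and Alice holds $M'=M_0'M_1'$. For $f(x,y)=1$ the CDQS decoder recovers $Q$ from $M$, giving $\epsilon$-correctness on Bob's side. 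For $f(x,y)=0$, CDQS security gives $||\mathbfcal{N}^{xy}_{Q\rightarrow M}-\mathbfcal{S}\circ\tr_Q||_\diamond\le\delta$, i.e.\ $Q\rightarrow M$ is $\delta$-close to completely depolarizing; since $Q\rightarrow M$ is the complement of $Q\rightarrow M'$, the lower bound of Theorem~\ref{thm:newdecoupling} yields a decoder recovering $Q$ from $M'$ with error $\le 2\sqrt{\delta}$, giving correctness on Alice's side. The resulting $f$-routing protocol is $\max\{\epsilon,2\sqrt{\delta}\}$-correct and reuses the $n_E$-qubit resource, with communication $|M_0|+|M_1'|$; bounding each purifying register by the input-plus-output dimension of its Stinespring dilation gives the stated $4(n_M+n_E)$ message bound.

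The main obstacle I anticipate is the bookkeeping of the complementary-channel structure and the assignment of purifying/environment registers, so that ``$Q$ is recoverable from $M'$'' and ``$M$ is decoupled from $Q$'' are provably the two inequalities of Theorem~\ref{thm:newdecoupling}. Concretely, the forward direction must carry the extra environment $E$ of a possibly-noisy $f$-routing channel and discard it by monotonicity of the diamond norm under composition, whereas the reverse direction must first purify to a clean isometry with no residual environment so that $M$ and $M'$ are genuinely complementary; getting these two dilation setups right, rather than the routine error propagation through the Fuchs--van de Graaf-type estimates, is where the care is needed.
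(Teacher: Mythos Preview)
Your proposal is correct and follows essentially the same approach as the paper: both directions split on $f(x,y)$, inherit the decoder directly in the $f(x,y)=1$ case, and in the $f(x,y)=0$ case invoke Theorem~\ref{thm:newdecoupling} (the upper bound for $f$-routing $\Rightarrow$ CDQS, carrying the extra environment $E$ and tracing it out; the lower bound for CDQS $\Rightarrow$ $f$-routing, after purifying so that $Q\to M$ and $Q\to M'$ are exact complements). Your anticipated obstacle—the distinct dilation setups in the two directions—is precisely the one the paper handles, and your Stinespring-dimension bound on the purifying registers matches the paper's derivation of the $4(n_M+n_E)$ message estimate.
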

\begin{proof} \,Begin by considering an $f$-routing protocol. 
Figure \ref{fig:NLQCandCDQS} establishes the subsystem labels we will use here. 
We will first show that an $f$-routing protocol is easily modified to construct a CDQS protocol. 
To do so, we send systems $M_0$ and $M_1$ that Bob would receive in the second round of the $f$-routing protocol to the referee of the CDQS protocol. 
Then, if $f(x,y)=1$, $\epsilon$-correctness of the $f$-routing scheme is immediately $\epsilon$-correctness of the CDQS.  

To show secrecy of the CDQS protocol, we first establish some notation. 
We label the channel realized by the first round operations of Alice and Bob $\mathbfcal{N}_{Q\rightarrow MM'}$, and let $\mathbf{V}_{Q\rightarrow MM'E}$ be a isometric extension of this channel. 
By correctness in $0$ instances of the $f$-routing scheme, we have that there exists a channel $\mathbfcal{D}^{xy}_{M'\rightarrow Q}$ such that
\begin{align}
    ||\mathbfcal{D}^{xy}_{M'\rightarrow Q}\circ [\tr_M \circ \mathbfcal{N}_{Q\rightarrow M'M}^{x,y}] - \mathbfcal{I}_Q ||_\diamond = ||\mathbfcal{D}^{xy}_{M'\rightarrow Q}\circ [\tr_{ME} (\mathbf{V}^{xy}_{Q\rightarrow MM'E} \cdot (\mathbf{V}^{xy}_{Q\rightarrow MM'E})^\dagger)] - \mathbfcal{I}_Q ||_\diamond \leq \epsilon \nonumber 
\end{align}
Then the decoupling theorem \ref{thm:newdecoupling} tells us that there exists a completely depolarizing channel $\mathbfcal{S}_{Q\rightarrow ME}$ such that
\begin{align}
    ||\tr_{M'} (\mathbf{V}^{xy}_{Q\rightarrow MM'E} \cdot \mathbf{V}^{xy}_{Q\rightarrow MM'E}) - \mathbfcal{S}^{xy}_{Q\rightarrow ME}||_\diamond \leq 2\sqrt{\epsilon}
\end{align}
Adding a trace over part of the outputs of channels can only make the channels less distinguishable, and hence the diamond norm smaller, so that
\begin{align}
    ||\tr_{M'E} (\mathbf{V}^{xy}_{Q\rightarrow MM'E} \cdot \mathbf{V}^{xy}_{Q\rightarrow MM'E}) - \mathbfcal{S}^{xy}_{Q\rightarrow M}||_\diamond \leq 2\sqrt{\epsilon}
\end{align}
but this is just
\begin{align}
    ||\mathbfcal{N}^{xy}_{Q\rightarrow M} - \mathbfcal{S}^{xy}_{Q\rightarrow M}||_\diamond \leq 2\sqrt{\epsilon}
\end{align}
which is exactly $2\sqrt{\epsilon}$-security of the CDQS. 
Note that the CDQS protocol defined by the $f$-routing protocol uses the same entangled resource state and no more communication. 

Now suppose we have a CDQS protocol which is $\epsilon$-correct and $\delta$-secure. 
Then to build the $f$-routing protocol, purify the channels Alice and Bob perform to isometries, and send the original message systems of the CDQS to Bob and their purifications to Alice. Then by $\epsilon$-correctness of the CDQS protocol, we immediately have $\epsilon$-correctness of the $f$-routing protocol when $f(x,y)=1$.

Next consider the case where $f(x,y)=0$. 
Then security of the CDQS implies that there exists a simulator channel $\mathbfcal{S}_{\varnothing \rightarrow M}^{xy}$ such that
\begin{align}
    ||\mathbfcal{S}_{\varnothing \rightarrow M}^{xy} \circ \tr_Q - \mathbfcal{N}^{xy}_{Q\rightarrow M} ||_\diamond \leq \delta
\end{align}
We will again apply the decoupling theorem. 
Notice that now, because of how we have defined the $f$-routing protocol, the map from $Q$ to $MM'$ is isometric, so $(\mathbfcal{N}^{xy})^c_{Q\rightarrow M} = (\mathbfcal{N}^{xy})_{Q\rightarrow M'}$.
Then the decoupling theorem implies the existence of a decoding channel $\mathbfcal{D}_{M'\rightarrow Q}^{xy}$ such that
\begin{align}
    ||\mathbfcal{D}^{xy}_{M'\rightarrow Q} \circ (\mathbfcal{N}^{xy})^c_{Q\rightarrow M'} -\mathbfcal{I}_Q||_\diamond \leq \sqrt{4 || \mathbfcal{S}_{\varnothing \rightarrow M}^{xy} \circ \tr_Q - \mathbfcal{N}^{xy}_{Q\rightarrow M}||} \leq 2\sqrt{\delta}
\end{align}
which gives $2\sqrt{\delta}$ correctness on $0$ instances. 
The protocol is then $\max\{2\sqrt{\delta},\epsilon \}$-correct. 

To see how the communication in the resulting $f$-routing protocol is related to the communication in the original CDQS protocol, we can use that a channel $\mathbfcal{N}_{A\rightarrow B}$ can always be purified by an isometry $\mathbf{V}_{A\rightarrow B C}$ where $d_C\leq d_Ad_B$. 
Let CDQS have messages that each consist of at most $n_{M}$ qubits, and use an $n_{E}$ qubit resource system on systems $LR$.
Then the most general possible protocol is defined by families of channels 
\begin{align}
\{\mathbfcal{N}^x_{L\rightarrow M_{0}}\},\,\,\,\, \{\mathbfcal{N}^y_{R\rightarrow M_{1}}\}
\end{align}
applied on the left and right respectively. 
We define purifications of these, 
\begin{align}
\{\mathbf{V}^x_{L\rightarrow M_{0}M_{0}'}\},\,\,\,\, \{\mathbf{V}^y_{R\rightarrow M_{1}M_1'}\}
\end{align}
We see that the message sizes are now at most $n_M + n_E$
qubits, so the total size of the communication is at most $4(n_M+n_E)$.
The entangled resource system used in the $f$-routing protocol is identical to the one used in the CDQS.
\end{proof}

\vspace{0.2cm}
\noindent \textbf{Explicit reconstruction procedure:}
\vspace{0.2cm}

It is perhaps counter-intuitive that the $f$-routing protocol built from the CDQS protocol succeeds in the case when $f(x,y)=0$. 
This is implied by the general physics of decoupling as captured by theorem \ref{thm:newdecoupling}, but for intuition we give a more explicit description in a special case here. 

Let's suppose the CDQS protocol is perfectly correct, and works in the following way. 
Assume the quantum secret is a single qubit and is stored in system $Q$.
To hide the quantum state on $Q$, Alice applies the one-time pad using a classical string $s=(s_1,s_2)$ as key. 
Explicitly she has applied
\begin{align}
    \ket{s_1,s_2}_A \ket{\psi}_Q\rightarrow \ket{s_1,s_2}_A (i)^{s_1\cdot s_2} X^{s_1}Z^{s_2}\ket{\psi}_Q.
\end{align}
A message system $M$ is sent to Bob, which reveals the key if and only if $f(x,y)=1$. The system $A$ must be sent to Alice on the left. 
The full state of the message systems then has the form
\begin{align}
    \frac{1}{2}\sum_{s_1,s_2,m_L,m_R} p(m_L,m_R|x,y,s) \ket{m_L}_{M'}\ket{s_1,s_2}_A (i)^{s_1\cdot s_2} X^{s_1}Z^{s_2}\ket{\psi}_Q \ket{m_R}_{M}.
\end{align}
Suppose we are in the case where $f(x,y)=0$. 
Then by security, the state on $M$ is independent of $s$. 
We can trace it out and the $M'$ system out and obtain the pure state
\begin{align}
    \frac{1}{2}\sum_{s_1,s_2} \ket{s_1,s_2}_A (i)^{s_1\cdot s_2} X^{s_1}Z^{s_2}\ket{\psi}_Q .
\end{align}
The claim is that Alice can recover the state on $Q$ from the $A$ system. 
To do this, she maps $\ket{s_1,s_2}$ to the Bell basis, obtaining
\begin{align}
    \frac{1}{2}(III+IXX+IZZ+IYY) \ket{\Psi^+}_{A_1A_2}\ket{\psi}_Q.
\end{align}
Then notice that
\begin{align}
    \frac{1}{2}(I_{A_2}I_{Q}+X_{A_2}X_{Q}+Z_{A_2}Z_{Q}+Y_{A_2}Y_{Q}) = SWAP_{A_2Q}
\end{align}
so that mapping $A_1A_2$ into the Bell basis actually swaps the state on $Q$ into $A_2$, so that Alice recovers the state on $Q$. 

\subsection{PSQM gives CDQS}

Analogous to the observation that PSM gives CDS, we can also show that PSQM gives CDQS. 
\begin{theorem}\label{thm:PSQMtoCDQS}
Suppose that an $\epsilon$-correct and $\delta$-private PSQM protocol exists for $f(x,y)\in\{0,1\}$ using messages of at most $n_M$ bits and an entangled state of no more than $n_E$ qubits. 
Then there exists a CDQS protocol hiding one qubit using $n_M+1$ bits of message and $n_E$ qubits of entangled state which is $2\epsilon$ correct and $\delta$ private. 
\end{theorem}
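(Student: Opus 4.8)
The plan is to mirror the classical argument of Theorem~\ref{thm:PSMgivesCDS} (PSM gives CDS), and to deal with the fact that the secret is now quantum by combining it with the quantum one-time pad exactly as in Theorem~\ref{thm:CDStoCDQS} (classical CDS gives CDQS). In other words, I would first reduce the task of conditionally disclosing a qubit to the task of conditionally disclosing a classical key, and then extract that conditional classical disclosure from the given PSQM.

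Concretely, Alice draws a random two-bit string $k=(k_1,k_2)$, applies the one-time pad $P^k_Q = X^{k_1}Z^{k_2}$ to the secret qubit, and includes the encoded qubit in her message to the referee. By the one-time-pad identity already used in the excerpt, recovering $Q$ is equivalent to the referee learning $k$ (so that it may apply $(P^k_Q)^\dagger$), and hiding $Q$ is equivalent to hiding $k$. To disclose $k$ precisely when $f(x,y)=1$ I would reuse the trick of Theorem~\ref{thm:PSMgivesCDS}: that construction only modifies the inputs handed to the protocol — running it on $x'=k_i x + (1-k_i)x_*$ and $y'=k_i y + (1-k_i)y_*$, with $(x_*,y_*)$ a fixed $0$-input, so that the recovered value equals $f(x,y)\wedge k_i$ — and then reads off the referee's decoded output. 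Nothing in that argument uses that the messages are classical, so it applies verbatim when the underlying protocol is a PSQM rather than a PSM; applying it to each bit of $k$ discloses the key to the referee exactly on the $f=1$ instances.

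For correctness, when $f(x,y)=1$ each key bit is decoded with error at most $\epsilon$, so a union bound gives the whole key with error at most $2\epsilon$, after which the referee applies $(P^k_Q)^\dagger$ and recovers $Q$; promoting this guessing statement to the diamond-norm bound of Definition~\ref{def:CDQS} proceeds exactly as the fidelity-to-trace-distance steps in Theorems~\ref{thm:PSMgivesPSQM} and~\ref{thm:CDStoCDQS}. For security, when $f(x,y)=0$ the PSQM guarantee says the disclosure messages are $\delta$-close to a state independent of $k$; feeding this into the Pauli twirl $\tfrac14\sum_k P^k_Q \rho_Q (P^k_Q)^\dagger = \mathcal{I}_Q/2$ shows that the referee's joint state (the encoded qubit together with the messages) is close to a fixed state independent of $\rho_Q$, which is the $\delta$-security demanded by Definition~\ref{def:CDQS}.

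The delicate direction is security. The PSQM security definition only constrains the messages through their closeness to a simulator that sees the function value, so the main thing to check is that this classical-looking promise, once composed with the one-time pad, is genuinely strong enough to conceal a quantum secret — the same promotion from a distributional statement to a statement about quantum channels that was required in Theorem~\ref{thm:PSMgivesPSQM}, here additionally chained through the Pauli twirl. The second subtlety, and the bulk of the bookkeeping, is resource accounting: ensuring that disclosing both bits of the key does not inflate the message length beyond $n_M+1$ or the leakage beyond the claimed $\delta$, which forces the two disclosures to reuse the shared entanglement and randomness carefully rather than being run as independent copies. I expect this accounting, rather than any conceptual step, to be the main obstacle.
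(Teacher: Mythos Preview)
Your proposal is essentially the paper's proof: take a \emph{shared} two-bit key $s$ (it must be shared, since Bob needs $s_i$ to form $y_i'$), one-time-pad the qubit with it, run the PSQM twice on the modified inputs $(x_i',y_i')$ to conditionally reveal each key bit, and argue correctness by a union bound and security via the PSQM simulator composed with the Pauli twirl. Your worry about resource accounting is not resolved by any clever reuse in the paper either---it simply runs two independent PSQM instances in parallel, so the honest constants are $2n_M$ message qubits plus one for the padded secret, $2n_E$ entanglement, and $2\delta$ leakage; the stated constants in the theorem appear to drop these factors of two rather than avoid them.
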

\begin{proof}\,
If the function $f(x,y)$ is constant then the CDQS protocol is trivial, so we assume without loss of generality that $f(x,y)$ is non-constant. 

Given the PSQM protocol, we build a CDQS protocol as follows. 
We introduce two random shared bits which we call $s=(s_1,s_2)$, which are held by Alice and Bob. 
Alice and Bob also pre-agree on a pair of inputs $(x,y)$ where $f(x,y)=0$, call them $(x_*,y_*)$, which exist because $f$ is non-constant by assumption. 
Upon receiving inputs $x,y$ Alice and Bob compute 
\begin{align}
    x_i' &= s_i x + (1-s_i) x_* \nonumber \\
    y_i' &= s_i y + (1-s_i) y_*
\end{align}
for $i=1,2$.
They run the PSQM protocol for $f$ on inputs $(x_1,y_1)$ and $(x_2,y_2)$ in parallel. 
Note that following the remark made after definition \ref{def:PSQM}, the PSQM for $F(x,y,s) = (f(x_1,y_1),f(x_2,y_2))$ is $2\epsilon$ correct and $2\delta$ secure. 
Notice that
\begin{align}\label{eq:fsrelation}
    f(x_i',y_i') = f(x,y)\wedge s_i
\end{align}
This means that by running the PSM for $f(x_i',y_i')$, the referee will learn $s_i$ when $f(x,y)=1$.
In the CDQS protocol, we have Alice act on the quantum secret $Q$ with the one time pad using the key $s=(s_1,s_2)$. 
Then the referee will be able to undo the one time pad when $f(x,y)=1$ (and so they know $s$), but not otherwise. 

Next we establish correctness more carefully. 
First note that the encoding channel for the CDQS defined by the above protocol is
\begin{align}
    \mathbfcal{N}_{Q\rightarrow MQ}^{xy}(\cdot) = \frac{1}{2^{|s|}}\sum_s P^s_Q \cdot P^s_Q \otimes \rho_M(x,y,s),
\end{align}
where $\rho_M$ is the state of the message systems prepared by the PSQM. 
Correctness of the CDQS requires we establish the existence of a channel which approximately inverts this. 
Note that by $2\epsilon$ correctness of the PSQM, we have that there exists a channel $\mathbfcal{V}_{M\rightarrow Z}$ such that
\begin{align}\label{eq:PSQMsecuritysuccint}
    ||\mathbfcal{V}_{M\rightarrow Z}(\rho_M(x,y,s)) -\ketbra{F'}{F'}_Z||_1\leq 2\epsilon
\end{align} 
where we defined $F'=(f(x_1',y_1'),f(x_2',y_2'))$.
We define our decoding channel to apply $\mathbfcal{V}_{M\rightarrow Z}$, measure the $Z$ system, then apply a Pauli conditioned on the outcome, 
\begin{align}
    \mathbfcal{D}_{MQ\rightarrow Q}(\cdot) = \sum_{F} P_Q^{F}\otimes \bra{F}_Z\mathbfcal{V}_{M\rightarrow Z}(\cdot) \ket{F}_Z\otimes P_Q^{F}.
\end{align}
We claim this is an approximate inverse to $\mathbfcal{N}_{Q\rightarrow MQ}^{xy}$. 
Using the definitions of $\mathbfcal{N}_{Q\rightarrow MQ}^{xy}$,  $\mathbfcal{D}_{MQ\rightarrow Q}$ and the diamond norm, we obtain
\begin{align}
    ||\mathbfcal{D}_{MQ\rightarrow Q}&\circ \mathbfcal{N}_{Q\rightarrow MQ}^{xy} - \mathbfcal{I}_Q ||_\diamond \nonumber \\
    &= \sup_n \max_{\Psi_{R_nQ}} || \frac{1}{2^{|s|}} \sum_{s,F} P^{s+F}_Q \Psi_{R_nQ}P^{s+F}_Q \otimes \bra{F}_Z\mathbfcal{V}_{M\rightarrow\bar{M}Z}(\rho_{M}(x,y,s))\ket{F}_Z - \Psi_{R_nQ}||_1 \nonumber \\
    &\leq 2\epsilon + \sup_n \max_{\Psi_{R_nQ}} || \frac{1}{2^{|s|}} \sum_{s,F} P^{s+F}_Q \Psi_{R_nQ}P^{s+F}_Q \otimes \bra{F}_Z\ketbra{F'}{F'}\ket{F}_Z - \Psi_{R_nQ}||_1 \nonumber 
\end{align}
where we replaced the $\mathbfcal{V}_{M\rightarrow\bar{M}Z}(\rho_{M}(x,y))$ with $\ketbra{F'}{F'}$ at the expense of the added $2\epsilon$, which is justified by equation \ref{eq:PSQMsecuritysuccint}. 
Continuing, we can see that the second term is actually zero, since (from equation \ref{eq:fsrelation}) $F'$ is just $s$ when $f(x_1,y_1)=f(x_2,y_2)=1$, which removes the Pauli's and so the full diamond norm is bounded by $2\epsilon$. 

Next we study security of the CDQS protocol. 
Recall that security of the PSQM implies that there exists a channel $\mathbfcal{S}_{Z\rightarrow M}$ such that
\begin{align}\label{eq:rhoV}
    ||\rho_M(x,y,s) - \mathbfcal{S}_{Z\rightarrow M}(\ketbra{F'}{F'})||_1 \leq \delta.
\end{align}
In the definition of security for CDQS, we need to show the existence of a channel $\mathbfcal{S'}^{x,y}_{\varnothing\rightarrow M}$ such that $\mathbfcal{S'}^{x,y}_{\varnothing\rightarrow M} \circ \tr_Q$ is close to the action of the protocol $\mathbfcal{N}_{Q\rightarrow MQ}^{xy}$. 
We define
\begin{align}
    \mathbfcal{S'}^{x,y}_{\varnothing\rightarrow MQ} = \mathbfcal{S}_{Z\rightarrow M}(\ketbra{F'}{F'}) \otimes \frac{\mathcal{I}_Q}{d_Q},
\end{align}
then consider, 
\begin{align}
    ||\mathbfcal{S'}^{x,y}_{\varnothing\rightarrow MQ}\circ \tr_Q &-\mathbfcal{N}_{Q\rightarrow MQ}^{xy}||_\diamond \nonumber\\
    &= \sup_n \max_{\Psi_{R_nQ}} || \frac{\mathcal{I}_Q}{d_Q}\otimes \mathbfcal{S}_{Z\rightarrow M}(\ketbra{F'}{F'}) - \frac{1}{2^{|s|}} \sum_s P^s_Q \Psi_{R_nQ}P^s_Q \otimes \rho_M(x,y,s)||_1 \nonumber \\
    &\leq \sup_n \max_{\Psi_{R_nQ}} || \frac{\mathcal{I}_Q}{d_Q}\otimes \mathbfcal{S}_{Z\rightarrow M}(\ketbra{F'}{F'}) - \frac{1}{2^{|s|}} \sum_s P^s_Q \Psi_{R_nQ}P^s_Q \otimes \mathbfcal{S}_{Z\rightarrow M}(\ketbra{F'}{F'})||_1 + \delta \nonumber \\
    &= \delta \nonumber 
\end{align}
where we used \ref{eq:rhoV} in the inequality. This is $\delta$ security of the CDQS.

\end{proof}

\subsection{CFE gives PSQM and weak converse}

Finally, we relate coherent function evaluation to PSQM. 
Note that the relationship is only that good CFE protocols give good PSQM protocols, although a weak converse also exists, as we describe. 

\begin{figure*}
    \centering
    \begin{subfigure}{0.45\textwidth}
    \centering
    \begin{tikzpicture}[scale=0.4]
    
    \draw[thick] (-5,-5) -- (-5,-3) -- (-3,-3) -- (-3,-5) -- (-5,-5);
    
    \draw[thick] (5,-5) -- (5,-3) -- (3,-3) -- (3,-5) -- (5,-5);
    
    \draw[thick] (5,5) -- (5,3) -- (3,3) -- (3,5) -- (5,5);
    
    \draw[thick] (4.5,-3) -- (4.5,3);
    
    \draw[thick] (-3.5,-3) to [out=90,in=-90] (3.5,3);
    
    \draw[thick] (-3.5,-5) to [out=-90,in=-90] (3.5,-5);
    \draw[black] plot [mark=*, mark size=3] coordinates{(0,-7.05)};
    
    \draw[thick] (-4.5,-6) -- (-4.5,-5);
    \node[below] at (-4.5,-6) {$x$};
    
    \draw[thick] (4.5,-6) -- (4.5,-5);
    \node[below] at (4.5,-6) {$y$};

    \node[left] at (-1,-2) {$M_0$};
    \node[right] at (4.5,0) {$M_1$};
    

    \draw[thick] (4.5,5) -- (4.5,6);
    \node[above] at (4.75,6) {$\tilde{M}$};

    \draw[thick] (3.5,5) -- (3.5,6);
    \node[above] at (3.25,6) {$Z$};

    \node at (-4,-4) {$\mathbf{\mathbfcal{N}}^L$};
    \node at (4,-4) {$\mathbf{\mathbfcal{N}}^R$};
    \node at (4,4) {$\mathbf{W}^R$};

    \node at (-2,-6) {$C'$};
    \node at (2,-6) {$C$};
    
    \end{tikzpicture}
    \caption{}
    \label{fig:PSQM}
    \end{subfigure}
    \hfill
    \begin{subfigure}{0.45\textwidth}
    \centering
    \begin{tikzpicture}[scale=0.4]
    
    \draw[thick] (-5,-5) -- (-5,-3) -- (-3,-3) -- (-3,-5) -- (-5,-5);
    \node at (-4,-4) {$\mathbf{V}^L$};
    
    \draw[thick] (5,-5) -- (5,-3) -- (3,-3) -- (3,-5) -- (5,-5);
    \node at (4,-4) {$\mathbf{V}^R$};
    
    \draw[thick] (5,5) -- (5,3) -- (3,3) -- (3,5) -- (5,5);
    \node at (4,4) {$\mathbf{W}^R$};
    
    \draw[thick] (-5,5) -- (-5,3) -- (-3,3) -- (-3,5) -- (-5,5);
    \node at (-4,4) {$\mathbf{W}^L$};
    
    \draw[thick] (-4.5,-3) -- (-4.5,3);
    
    \draw[thick] (4.5,-3) -- (4.5,3);
    
    \draw[thick] (-3.5,-3) to [out=90,in=-90] (3.5,3);
    
    \draw[thick] (3.5,-3) to [out=90,in=-90] (-3.5,3);
    
    \draw[thick] (-3.5,-5) to [out=-90,in=-90] (3.5,-5);
    \draw[black] plot [mark=*, mark size=3] coordinates{(0,-7.05)};
    
    \draw[thick] (-4.5,-6) -- (-4.5,-5);
    \draw[thick] (4.5,-6) -- (4.5,-5);
    
    \draw[thick] (4.5,5) -- (4.5,6);
    \node[above] at (4.75,6) {$\tilde{M}$};
    
    \draw[thick] (-4.5,5) -- (-4.5,6);
    \node[above] at (-4.75,6) {$\tilde{M}'$};
    
    \draw[thick] (3.5,5) -- (3.5,6);
    \node[above] at (3.25,6) {$Z$};
    
    \draw[thick] (-3.5,5) -- (-3.5,6);
    \node[above] at (-3.25,6) {$Z'$};

    \node[left] at (-1,-2) {$M_0$};
    \node[right] at (4.5,0) {$M_1$};

    \node[right] at (1,-2) {$M_1'$};
    \node[left] at (-4.5,0) {$M_0'$};

    \node at (-2,-6) {$C'$};
    \node at (2,-6) {$C$};

    \node[below] at (-4.5,-6) {$X$};
    \node[below] at (4.5,-6) {$Y$};
 
    \end{tikzpicture}
    \caption{}
    \label{fig:CFE}
    \end{subfigure}
    \caption{Corresponding PSQM (left) and CFE (right) protocols, with labellings of the subsystems involved shown.}
    \label{fig:PSQMandCFE}
\end{figure*}
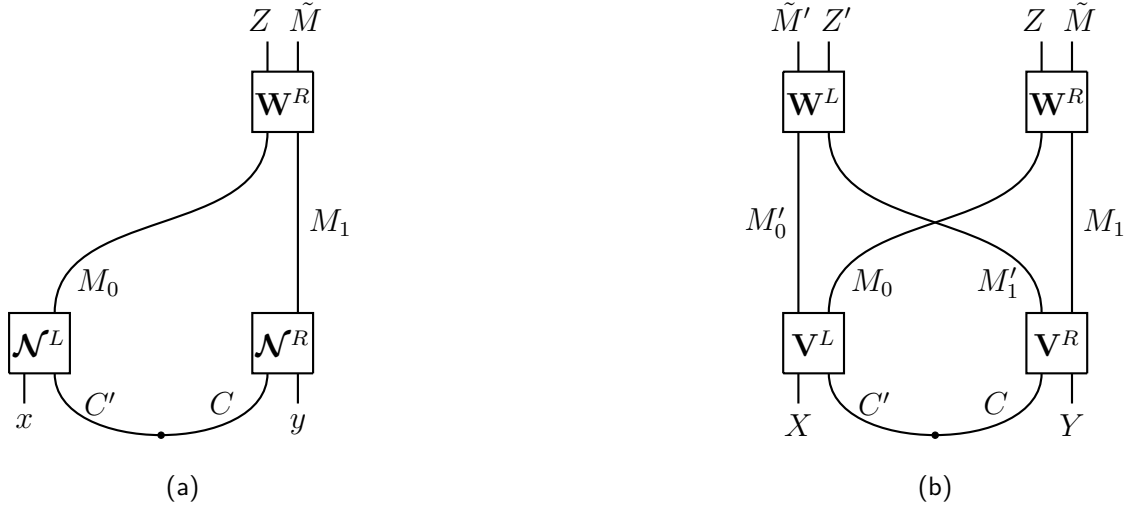

\begin{theorem}\label{thm:CFEtoPSQM}
    An $\epsilon$-correct CFE protocol for the function $f$ using $n_E$ EPR pairs and messages of $n_M$ qubits implies the existence of an $\epsilon$-correct and $\sqrt{\epsilon}$-secure PSQM protocol for the same function, using $n_E$ EPR pairs and no more than $n_M$ message qubits.  
\end{theorem}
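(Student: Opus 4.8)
The plan is to build the PSQM protocol directly from the CFE protocol by rerouting all of the second-round communication to the referee. Writing the first-round local operations of the CFE as isometries $\mathbf{V}^L_{X\to M_0M_0'}$ and $\mathbf{V}^R_{Y\to M_1M_1'}$ (using the labels of figure \ref{fig:CFE}), and the second-round operations as isometries $\mathbf{W}^L_{M_0'M_1'\to Z'\tilde{M}'}$ and $\mathbf{W}^R_{M_0M_1\to Z\tilde{M}}$, I would have Alice apply $\mathbf{V}^L$ and send $M_0$ to the referee while discarding $M_0'$, and have Bob apply $\mathbf{V}^R$ and send $M_1$ to the referee while discarding $M_1'$. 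The referee then applies $\mathbf{W}^R$ to $M=M_0M_1$ and reads off the register $Z$. This uses the same $n_E$ EPR pairs and messages of the same size $n_M$ as the CFE, as required.

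\textbf{Correctness.} This should be immediate: the register $Z$ is produced by $\mathbf{W}^R$ acting only on $M_0M_1$, so its marginal is unaffected by whether we discard $M_0'M_1'$ or feed them into $\mathbf{W}^L$. Hence the $Z$-marginal of the referee's output equals the $Z$-marginal of the full CFE output, which by $\epsilon$-correctness of the CFE is $\epsilon$-close in trace norm to $\ketbra{f_{xy}}{f_{xy}}_Z$. Taking the PSQM decoder to be $\mathbf{W}^R$ then gives $\epsilon$-correctness with no further loss.

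\textbf{Security.} This is the substance. Let $\ket{\Phi(x,y)}$ be the (pure) state on $M_0M_1M_0'M_1'$ produced after the first round on input $\ket{x}\ket{y}$, so that $\rho_M(x,y)=\tr_{M_0'M_1'}\ketbra{\Phi}{\Phi}$ is the message state. The first step is to note that applying the isometry $\mathbf{W}^L$ on $M_0'M_1'$ and then tracing out its output $Z'\tilde{M}'$ equals tracing out $M_0'M_1'$ directly, which gives
\begin{align}
    \mathbf{W}^R\rho_M(x,y)\mathbf{W}^{R\dagger} = \tr_{Z'\tilde{M}'}\ketbra{\Psi(x,y)}{\Psi(x,y)}, \qquad \ket{\Psi(x,y)}:=(\mathbf{W}^L\otimes\mathbf{W}^R)\ket{\Phi(x,y)}. \nonumber
\end{align}
I would then invoke the continuity of the Stinespring dilation \cite{kretschmann2008information}: because the full CFE realizes a channel $\epsilon$-close in diamond norm to the isometry $\mathbf{V}_f$ — whose only outputs are $Z'Z$ — its dilating isometry $\ket{\Psi(x,y)}$ must be $O(\sqrt{\epsilon})$-close, \emph{uniformly} in $(x,y)$, to $\ket{xy}_{Z'}\ket{f_{xy}}_Z\ket{g}_{\tilde{M}\tilde{M}'}$ for a single fixed junk state $\ket{g}$ independent of the inputs. (Equivalently, closeness to an isometric channel forces the complementary channel into $\tilde{M}\tilde{M}'$ to be near-constant, in the spirit of Theorem \ref{thm:newdecoupling}.) Substituting and tracing out $Z'\tilde{M}'$ yields $\mathbf{W}^R\rho_M(x,y)\mathbf{W}^{R\dagger}\approx \ketbra{f_{xy}}{f_{xy}}_Z\otimes\sigma_{\tilde{M}}$ with $\sigma_{\tilde{M}}=\tr_{\tilde{M}'}\ketbra{g}{g}$ fixed, and inverting the isometry $\mathbf{W}^R$ (using $\mathbf{W}^{R\dagger}\mathbf{W}^R=\mathcal{I}$) gives $\rho_M(x,y)\approx\mathbf{W}^{R\dagger}(\ketbra{f_{xy}}{f_{xy}}_Z\otimes\sigma_{\tilde{M}})\mathbf{W}^R$. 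The input-independent channel $\mathbfcal{S}_{Z\to M}(\cdot)=\mathbf{W}^{R\dagger}((\cdot)\otimes\sigma_{\tilde{M}})\mathbf{W}^R$ is then the required PSQM simulator, giving the claimed $\sqrt{\epsilon}$-security.

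\textbf{Main obstacle.} The crux is exactly the input-independence of the junk $\sigma_{\tilde{M}}$: a naive Uhlmann argument only produces a best-fit ancilla $\ket{\xi(x,y)}$ that may depend on $(x,y)$, which would spoil the simulator. The clean way past this is the Stinespring continuity theorem, which upgrades ``$\mathbfcal{E}$ is close to the isometric channel $\mathbf{V}_f$'' to ``the dilating isometry is close to $\mathbf{V}_f$ tensored with preparation of a single fixed state,'' and this is also where the $\epsilon\to\sqrt{\epsilon}$ loss enters. The remaining work is routine bookkeeping of triangle inequalities and of the fact that partial traces and the co-isometry $\mathbf{W}^{R\dagger}$ are contractive in trace norm.
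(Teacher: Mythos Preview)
Your proposal is correct and follows essentially the same approach as the paper's proof: both construct the PSQM by sending the right-bound CFE messages to the referee, derive correctness by tracing out $Z'$, and obtain security via the Kretschmann--Schlingemann--Werner Stinespring-continuity theorem to force an input-independent junk state $\sigma_{\tilde M}$, with the simulator then being $\mathbfcal{S}_{Z\to M}(\cdot)=\mathbf{W}^{R\dagger}((\cdot)\otimes\sigma_{\tilde M})\mathbf{W}^R$. The paper carries out the diamond-norm manipulations somewhat more explicitly (passing from diamond norm to operator norm of dilations and back, then composing with $\mathbfcal{W}^\dagger$ at the channel level before specializing to the input $\ket{xy}$), but the key idea you flag---that Stinespring continuity is exactly what upgrades a per-input Uhlmann ancilla to a single fixed $\ket{g}$, at the cost of the $\sqrt{\epsilon}$---is precisely the engine of the paper's argument as well.
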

\begin{proof}
\,We define the PSQM protocol from the CFE protocol as follows. 
The PSQM protocol uses the same resource state as the CFE, Alice applies the bottom left operation of the CFE, Bob applies the bottom right operation of the CFE, and they send the systems that would reach the top right of the CFE protocol to the referee, which we call the $M$ systems. 
To produce their output, the referee applies the top right operation from the CFE. 
See figure \ref{fig:PSQMandCFE} for labels of the relevant subsystems. 

Correctness of the CFE protocol means that we have
\begin{align}
    ||\mathbf{F}(\cdot)\mathbf{F}^\dagger - \mathbf{\mathbfcal{N}}_{XY\rightarrow Z'Z}||_\diamond \leq \epsilon
\end{align}
where $\mathbf{\mathbfcal{N}}$ is the channel applied by our CFE protocol and $\mathbf{F}$ denotes the CFE isometry to be implemented. 
Applying these channels to the input $\ket{x}_X\ket{y}_Y$ and using the definition of the diamond norm distance, we obtain
\begin{align}
    ||\ketbra{xy}{xy}_{Z'}\otimes \ketbra{f_{xy}}{f_{xy}}_{Z} - \rho_{Z'Z}(x,y)||_1 \leq \epsilon.
\end{align}
Tracing out the $Z'$ system and using that the one norm distance decreases under the partial trace, we obtain $\epsilon$-correctness of the PSQM. 

Next we study security of the PSQM. 
We start again from the correctness of the CFE protocol. 
To simplify our notation, we define the channels (see also figure \ref{fig:PSQMandCFE})
\begin{align}
    \mathbfcal{F}_{XY\rightarrow Z'Z}(\cdot) &= \mathbf{F} (\cdot) \mathbf{F}^\dagger, \nonumber \\
    \mathbfcal{W}^L_{M\rightarrow Z\tilde{M} }(\cdot) &= \mathbf{W}^R_{M\rightarrow Z\tilde{M}} (\cdot) (\mathbf{W}^R_{M\rightarrow Z\tilde{M}})^\dagger \nonumber\\
    \mathbfcal{W}^R_{M'\rightarrow Z'\tilde{M}' }(\cdot) &= \mathbf{W}^L_{M'\rightarrow Z'\tilde{M}'} (\cdot) (\mathbf{W}^L_{M'\rightarrow Z'\tilde{M}'})^\dagger \nonumber\\
    \mathbfcal{W}_{MM'\rightarrow Z\tilde{M} Z'\tilde{M}'} &= \mathbfcal{W}^L_{M'\rightarrow Z\tilde{M}'}\otimes \mathbfcal{W}^R_{M\rightarrow Z\tilde{M}} \nonumber\\
    \mathbfcal{V}_{XY\rightarrow MM'}(\cdot) &=  \mathbf{V}^R_{YC\rightarrow M_1'M_1} \otimes \mathbf{V}^L_{XC'\rightarrow M_0M_0'}(\cdot \otimes \Psi_{CC'}) (\mathbf{V}^R_{YC\rightarrow M_1'M_1} \otimes \mathbf{V}^L_{XC'\rightarrow M_0M_0'})^\dagger \nonumber
\end{align}
Then we note that the CFE protocol can be decomposed into two steps, and rewrite the statement of correctness,
\begin{align}
    ||\mathbfcal{F}_{XY\rightarrow Z'Z}(\cdot) - \tr_{\tilde{M}\tilde{M}'} (\mathbfcal{W}_{MM'\rightarrow Z\tilde{M}Z'\tilde{M}'}) \circ (\mathbfcal{V}_{XY\rightarrow MM'})||_\diamond \leq \epsilon.\nonumber 
\end{align}
Next, we will use that Stinespring dilations of channels can be chosen to be close if the initial channels are close \cite{kretschmann2008continuity}. 
In particular we have
\begin{align}\label{eq:opunderdiamond}
    \frac{||T_1-T_2||_\diamond}{\sqrt{||T_1||_\diamond}+\sqrt{||T_2||_\diamond}} \leq \inf_{V_1,V_2}||V_1-V_2 ||_{op} \leq \sqrt{||T_1-T_2||_{\diamond}}
\end{align}
where the infimum is over all dilations $V_i$ of $T_i$. 
Noting that $\mathbfcal{F}$ is already isometric, we have that its dilations must consist of adding a state preparation channel, which we label $\mathbfcal{P}_{\emptyset\rightarrow E}$. 
Further, all dilations are related by a partial isometry on the auxiliary space, so the dilations of the $\tr_{\tilde{M}\tilde{M}'}\mathbfcal{W}\circ \mathbfcal{V}$ channel can be written in the form
\begin{align}
    \mathbfcal{U}_{XY\rightarrow ZZ'E} = \mathbfcal{I}_{\tilde{M}\tilde{M}'\rightarrow E} \circ (\mathbfcal{W}_{MM'\rightarrow Z\tilde{M}Z'\tilde{M}'}) \circ (\mathbfcal{V}_{XY\rightarrow MM'})
\end{align}
Then using the upper bound in \ref{eq:opunderdiamond}, we have
\begin{align}\label{eq:operatornormundersqrtepsilon}
    ||\mathbfcal{F}_{XY\rightarrow Z'Z}\otimes \mathbfcal{P}_{\emptyset \rightarrow E} - \mathbfcal{I}_{\tilde{M}\tilde{M}'\rightarrow E}\circ \mathbfcal{W}_{MM'\rightarrow Z\tilde{M}Z'\tilde{M}'} \circ \mathbfcal{V}_{XY\rightarrow MM'}||_{op} \leq \sqrt{\epsilon}.
\end{align}
Next, we will exploit the lower bound in \ref{eq:opunderdiamond} to translate this to an upper bound on the diamond norm of these isometries. 
To do this, notice that from \ref{eq:opunderdiamond} we have
\begin{align}
    \frac{||V_1-V_2||_\diamond}{\sqrt{||V_1||_\diamond}+\sqrt{||V_2||_\diamond}} \leq \inf_{P_1,P_2}||V_1\otimes P_1-V_2\otimes P_2 ||_{op} \leq ||V_1-V_2 ||_{op}
\end{align}
Using this in equation \ref{eq:operatornormundersqrtepsilon}, we obtain
\begin{align}
    ||\mathbfcal{F}_{XY\rightarrow Z'Z}\otimes \mathbfcal{P}_{\emptyset \rightarrow E} - \mathbfcal{I}_{\tilde{M}\tilde{M}'\rightarrow E}\circ \mathbfcal{W}_{MM'\rightarrow Z\tilde{M}Z'\tilde{M}'} \circ \mathbfcal{V}_{XY\rightarrow MM'}||_{\diamond} \leq 2\sqrt{\epsilon}.
\end{align}
Next, apply $\mathbfcal{I}^\dagger_{\tilde{M}\tilde{M}'\rightarrow E}$ to both terms, which cannot increase the diamond norm, and obtain
\begin{align}
    ||\mathbfcal{F}_{XY\rightarrow Z'Z}\otimes \mathbfcal{P}_{\emptyset \rightarrow \tilde{M}\tilde{M}'} -  \mathbfcal{W}_{MM'\rightarrow Z\tilde{M}Z'\tilde{M}'} \circ \mathbfcal{V}_{XY\rightarrow MM'}||_{\diamond} \leq 2\sqrt{\epsilon}.
\end{align}
Apply $\mathbfcal{W}^\dagger_{MM'\rightarrow Z\tilde{M}Z'\tilde{M}'}$ to both terms to obtain
\begin{align}
    ||\mathbfcal{W}_{MM'\rightarrow Z\tilde{M}Z'\tilde{M}'}^\dagger \circ (\mathbfcal{F}_{XY\rightarrow Z'Z}\otimes \mathbfcal{P}_{\emptyset \rightarrow \tilde{M}\tilde{M}'}) - \mathbfcal{V}_{XY\rightarrow MM'}||_{\diamond} \leq 2\sqrt{\epsilon}.
\end{align}
Then, apply these channels to the input $\ket{xy}_{XY}$ and call the output of the protocol on the $M$ system $\rho_{M}(x,y)$, and trace out the $\tilde{M}'$ system,
\begin{align}
    ||\tr_{M'}\mathbfcal{W}_{MM'\rightarrow Z\tilde{M}Z'\tilde{M}'}^\dagger \circ \mathbfcal{F}_{XY\rightarrow Z'Z}(\ketbra{xy}{xy})\otimes \psi_{\tilde{M}\tilde{M}'} - \rho_{M}(x,y)||_1 \leq 2\sqrt{\epsilon}. \nonumber
\end{align}
Simplifying the state on the left using
\begin{align}
    \mathbfcal{W}_{MM'\rightarrow Z\tilde{M}Z'\tilde{M}'} &= \mathbfcal{W}^L_{M\rightarrow Z\tilde{M}}\otimes \mathbfcal{W}^R_{M'\rightarrow Z'\tilde{M}'} \nonumber \\
    \mathbfcal{F}_{XY\rightarrow Z'Z}(\ketbra{xy}{xy}) &= \ketbra{f_{xy}}{f_{xy}}_Z\otimes \ketbra{xy}{xy}_{Z'}
\end{align}
we obtain
\begin{align}
    ||\mathbfcal{W}^{R\dagger}_{M\rightarrow \tilde{M}Z}(\ketbra{f_{xy}}{f_{xy}}\otimes \sigma_{\tilde{M}} ) - \rho_{M}(x,y)||_1 \leq 2\sqrt{\epsilon}
\end{align}
which is $2\sqrt{\epsilon}$ security of the PSQM protocol, where $\mathbfcal{W}^{R\dagger}_{M\rightarrow \tilde{M}Z}$ along with the state preparation of $\sigma_{\tilde{M}}$ defines the simulator channel. 
\end{proof}

Next, we give a weak converse to the above theorem, which shows that a good PSQM protocol implies the existence of CFE protocol that succeeds with constant probability when acted on the maximally entangled state.
Note that this falls short of bounding the diamond norm. 
We show this only in the exact setting though a robust version might also exist.
We are also limited to the case where the function outputs a single bit.
\begin{theorem} \label{thm:PSQMtoCFEWeak}
    Suppose there exists a perfectly correct and perfectly secure PSQM protocol for the function $f:X\times Y\rightarrow Z$ with $Z\in\{0,1\}$ using $n_M$ bits of communication and $n_E$ qubits of entangled resource system.
    Then there is a CFE protocol that implements a channel $\tilde{\mathbfcal{V}}^f_{XY\rightarrow Z'Z}$ such that
    \begin{align}
        F(\tilde{\mathbfcal{V}}^f_{XY\rightarrow Z'Z}(\Psi^+_{RXY}),\mathbf{V}^f_{XY\rightarrow Z'Z}(\Psi^+)_{RXY}(\mathbf{V}^f_{XY\rightarrow Z'Z})^\dagger) \geq \frac{1}{2}
    \end{align}
    and which uses $n_E$ qubits of entangled resource state and $n_M+n_E+2n$ qubits of communication, where $n$ is the input size. 
\end{theorem}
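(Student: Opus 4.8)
The plan is to run the given PSQM protocol coherently and to exploit its perfect security and correctness to decouple all ancillary registers down to a single junk state depending only on the one-bit value $f(x,y)$; throughout I use the subsystem labels of figure \ref{fig:PSQMandCFE}. First I would dilate Alice's and Bob's PSQM encodings to isometries $\mathbf{V}^L_{XC'\to M_0M_0'}$ and $\mathbf{V}^R_{YC\to M_1M_1'}$, so that the message register is $M=M_0M_1$ and the purifying register is $M'=M_0'M_1'$. Perfect security forces the reduced state $\rho_M(x,y)$ to depend only on $f(x,y)$, so writing its two possible values as $\rho_M^0,\rho_M^1$ and using perfect correctness (the decoder extracts $f$ with certainty), these two states have orthogonal supports. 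Diagonalizing $\rho_M^b=\sum_i\lambda_i^b\ketbra{e_i^b}{e_i^b}$, the global post-encoding pure state on input $\ket{x}\ket{y}$ takes the Schmidt form
\begin{align}
\ket{\phi_{xy}}_{MM'}=\sum_i\sqrt{\lambda_i^{f_{xy}}}\,\ket{e_i^{f_{xy}}}_M\ket{g_i^{xy}}_{M'},\nonumber
\end{align}
where all of the dependence on $(x,y)$ beyond $f_{xy}$ sits in the left Schmidt vectors $\ket{g_i^{xy}}_{M'}$, while perfect correctness means the decoder acts as $\mathbf{W}^R:\ket{e_i^b}_M\mapsto\ket{b}_Z\ket{h_i^b}_{\tilde M}$.

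Next I would have Alice and Bob each coherently copy their classical inputs in the computational basis before encoding, keeping registers $X_2,Y_2$; these copies will form the $Z'$ output. The non-local routing is that of the CFE template: $M_0$ and $M_1$ are routed to the right, where the referee applies $\mathbf{W}^R_{M\to Z\tilde M}$, producing $\ket{f_{xy}}_Z$ together with junk $\ket{h_i^{f}}_{\tilde M}$ that depends only on $f$; the copy $Y_2$ and the purifying register $M_1'$ are sent to the left, so the left party holds $X_2Y_2\cong\ket{xy}$ together with all of $M'$. The key move is then a pair of \emph{local canonicalizations}: on the left, controlled on the input copies $\ket{xy}_{X_2Y_2}$, apply a unitary mapping $\ket{g_i^{xy}}_{M'}\mapsto\ket{\bar g_i}_{M'}$ to a fixed basis (the left party knows $xy$, hence knows which unitary to apply); on the right, controlled on $Z=f$, map $\ket{h_i^f}_{\tilde M}\mapsto\ket{\bar h_i}_{\tilde M}$ to a fixed basis. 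After these steps the residual junk is
\begin{align}
\ket{\eta_{f}}_{\tilde M M'}=\sum_i\sqrt{\lambda_i^{f}}\,\ket{\bar h_i}_{\tilde M}\ket{\bar g_i}_{M'},\nonumber
\end{align}
which depends only on $f\in\{0,1\}$, and I would set $Z'=X_2Y_2$.

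Finally I would compute the fidelity on the maximally entangled input. The output is the pure state $\tfrac{1}{\sqrt N}\sum_{xy}\ket{xy}_R\ket{xy}_{Z'}\ket{f_{xy}}_Z\ket{\eta_{f_{xy}}}$ with $N=|X||Y|$, and tracing out the junk and overlapping with the target $\mathbf{V}^f\ket{\Psi^+}$ gives
\begin{align}
F=\frac{1}{N^2}\Big\|\,\textstyle\sum_{xy}\ket{\eta_{f_{xy}}}\Big\|^2=\frac{1}{N^2}\big(N_0^2+N_1^2+2N_0N_1\,\mathrm{Re}\braket{\eta_0}{\eta_1}\big),\nonumber
\end{align}
where $N_b=|f^{-1}(b)|$. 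Because the canonical bases $\{\bar h_i\},\{\bar g_i\}$ are shared by both values of $f$, one checks $\braket{\eta_0}{\eta_1}=\sum_i\sqrt{\lambda_i^0\lambda_i^1}\ge 0$, and since $N_0^2+N_1^2\ge N^2/2$ always, this yields $F\ge 1/2$ as claimed. The resource count is bookkeeping: the messages contribute $n_M$, the purifying registers of the dilations are bounded using the $n_E$-qubit resource state, and the two routed input copies contribute $2n$.

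I expect the main obstacle to be conceptual rather than computational: recognizing that perfect security (the message depends only on $f$) together with perfect correctness (orthogonality of $\rho_M^0,\rho_M^1$) is precisely what lets the two parties canonicalize their purifying registers using only locally available data — the input copies on the left and the decoded bit $f$ on the right — thereby collapsing the $(x,y)$-dependent garbage to one of just two states. The remaining care is non-locality bookkeeping: verifying that every canonicalization unitary acts on registers gathered on a single side (in particular that both input copies and all of $M'$ can be assembled on the left within the stated communication), and that the final junk truly loses its $(x,y)$-dependence. The constant $\tfrac12$ rather than $1$ is intrinsic to this construction, since $\ket{\eta_0}$ and $\ket{\eta_1}$ cannot in general be made equal and only the bound $N_0^2+N_1^2\ge N^2/2$ survives.
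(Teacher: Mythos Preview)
Your proposal is correct and follows essentially the same construction as the paper: purify the PSQM encodings, route the purifying registers and a copy of the inputs to the left, apply the PSQM decoder on the right, and use the input copies to canonicalize the left-side Schmidt vectors so that the residual junk depends only on $f_{xy}$; the fidelity bound then reduces to $(N_0^2+N_1^2)/N^2\ge 1/2$.

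The one substantive difference is how you ensure the cross term is nonnegative. The paper leaves the right-side junk as is after decoding and instead inserts a free phase $\alpha_{f_{xy}}$ into the left canonicalization unitary, then chooses $\alpha_1=\braket{\Phi_1}{\Phi_0}/|\braket{\Phi_1}{\Phi_0}|$ to make $\mathrm{Re}\braket{\Phi_0}{\Phi_1}\ge 0$. You instead add a second canonicalization on the right, controlled on the decoded bit $Z$, mapping $\ket{h_i^f}\mapsto\ket{\bar h_i}$; this forces $\braket{\eta_0}{\eta_1}=\sum_i\sqrt{\lambda_i^0\lambda_i^1}\ge 0$ automatically. Your variant is slightly cleaner and avoids the phase bookkeeping, at the cost of one extra local unitary on the right (which costs nothing in resources). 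Both arguments rely on the same structural facts you identified: perfect security fixes the $M$-side Schmidt data to depend only on $f$, and perfect correctness makes the decoder act as an isometry into $\ket{f}_Z\otimes\mathcal H_{\tilde M}$ on each support.
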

\begin{proof}\,
    By security of the PSQM protocol, we have that when given input $\ket{xy}$ the protocol produces a reduced state $\rho_{M}(x,y)$ with the form
    \begin{align}
        \rho_{M}(x,y) = \mathbfcal{S}_{Z\rightarrow M}(\ketbra{f_{xy}}{f_{xy}}) = \sigma^{f_{xy}}_{M}.
    \end{align}
    As part of the CFE protocol we are defining, we make a copy of the inputs $\ket{x}_X\ket{y}_Y$ and send this copy in a system labelled $Z'$ to the left. 
    The overall state of the message system then is,
    \begin{align}
        \ketbra{xy}{xy}_{Z'}\otimes \sigma^{f_{xy}}_{M}.
    \end{align}
    Now consider purifying the channels used in the PSQM protocol, and sending the purifying systems (call them $\tilde{M}'$) to the left. 
    Then the message system becomes
    \begin{align}
        \ket{\Psi_{xy}}_{Z'\tilde{M}'M} = \ket{xy}_{Z'} \sum_k \sqrt{\lambda^k_{f_{xy}}}\ket{\psi^k_{f_{xy}}}_{\tilde{M}'}\ket{\psi^k_{f_{xy}}}_{M}
    \end{align}
    where we used that the reduced density matrix on $M$ depends only on $f_{xy}$ to enforce that the Schmidt coefficients and Schmidt vectors on $M$ can depend only on $f_{xy}$. 

    Next, we consider adding to the protocol a unitary
    \begin{align}
        \mathbf{U}_{Z'\tilde{M}'} = \sum_{x,y,k} \alpha_{f_{xy}} \ketbra{xy}{xy}_{Z'}\otimes \ketbra{k}{\psi^k_{f_{xy}}}_{\tilde{M}'}
    \end{align}
    where the $\alpha_{f_{xy}}$ are phases, $|\alpha_{f_{xy}}|^2=1$. 
    We will determine later how to choose these phases.
    This means we produce the state
    \begin{align}\label{eq:PSQMstatesofar}
        \mathbf{U}_{Z'\tilde{M}'} \ket{\Psi_{xy}}_{Z'\tilde{M}'M} = \ket{xy}_{Z'} \sum_k \alpha_{f_{xy}}\sqrt{\lambda^k_{f_{xy}}} \ket{k}_{\tilde{M}'}\ket{\psi^k_{f_{xy}}}_{M}.
    \end{align}
    We'd like to exploit the correctness of the PSQM protocol to show this state can be made, using an operation on $M$, to have large overlap with the correct output for the CFE protocol, which here is $\ket{xy}_{Z'}\ket{f_{xy}}_{Z}$.
    Looking at the reduced state on $M$ again, we have 
    \begin{align}
        \sigma_{M} = \sum_k \lambda^k_{f_{xy}}\ketbra{\psi_{f_{xy}}^k}{\psi_{f_{xy}}^k}_{M} .
    \end{align}
    From correctness we have that there exists a map $\mathbf{V}_{M\rightarrow \tilde{M}Z}$ such that
    \begin{align}
        \sum_k \lambda^k_{f_{xy}} \tr_{\tilde{M}}(\mathbf{V}_{M\rightarrow \tilde{M}Z}\ketbra{\psi_{f_{xy}}^k}{\psi_{f_{xy}}^k}_{M}\mathbf{V}^\dagger_{M\rightarrow \tilde{M}Z}) = \ketbra{f_{xy}}{f_{xy}}_{Z}
    \end{align}
    which is only solved if, for all $k$,  
    \begin{align}
        \mathbf{V}_{M\rightarrow \tilde{M}Z}\ket{\psi^k_{f_{xy}}}_{M} = \beta^{k}_{f_{xy}} \ket{f_{xy}}_{Z} \ket{\tilde{\psi}^k_{f_{xy}}}_{\tilde{M}}
    \end{align}
    with $\beta^k_{f_{xy}}$ being pure phases, $|\beta^k_{f_{xy}}|^2=1$.
    Returning to the form \ref{eq:PSQMstatesofar}, we can now add an application of $\mathbf{V}_{M\rightarrow Z\tilde{M}}$ as the top right element of our CFE protocol and we see that we produce the state
    \begin{align}
        \mathbf{V}_{M\rightarrow \tilde{M}Z} \mathbf{U}_{Z'\tilde{M}'}\ket{\Psi_{xy}}_{Z'\tilde{M}'M} &= \alpha_{f_{xy}} \ket{xy}_{Z'} \ket{f_{xy}}_{Z} \sum_k \beta^{k}_{f_{xy}}\sqrt{\lambda^k_{f_{xy}}}\ket{k}_{\tilde{M}'}\ket{\tilde{\psi}^k_{f_{xy}}}_{\tilde{M}} \nonumber \\
        &= \alpha_{f_{xy}} \ket{xy}_{Z'} \ket{f_{xy}}_{Z} \ket{\Phi_{f_{xy}}}_{\tilde{M}'\tilde{M}}.
    \end{align}
    By linearity, if we perform the same protocol on the state $\ket{\Psi^+}_{RXY}$ we produce the output
    \begin{align}
        \ket{\Psi'_f}_{RZ'Z\tilde{M}'\tilde{M}} = \frac{1}{\sqrt{d_R}}\sum_{xy} \alpha_{f_{xy}} \ket{xy}_R \ket{xy}_{Z'} \ket{f_{xy}}_Z \ket{\Phi_{f_{xy}}}_{\tilde{M}'\tilde{M}}.
    \end{align}
    We would like to compute the fidelity of the state produced by our protocol on $RZ'Z$ with the correct one when acted on the maximally entangled state. 
    Note that the correct output state would be
    \begin{align}
        \ket{\Psi_f} = \frac{1}{\sqrt{d_R}} \sum_{xy} \ket{xy}_R\ket{xy}_{Z'} \ket{f_{xy}}_{Z}.
    \end{align}
    Computing the fidelity of this with the partial state of $\ket{\Psi_f'}$ on $RZ'Z$, we find
    \begin{align}
        F(\Psi_f,\sigma)=\bra{\Psi_f}\sigma_{RZ'Z}\ket{\Psi_f} &= \frac{1}{d_R^2} \sum_{xy,x'y'} \alpha_{f_{xy}}^* \alpha_{f_{x'y'}} \braket{\Phi_{f_{xy}}}{\Phi_{f_{x'y'}}} 
    \end{align}
    Now, we can see how we should choose the phases $\alpha_{f_{xy}}$ that enter through our choice of the unitary $\mathbf{U}$.
    We should choose the phases such that this sum is lower bounded, which we can achieve by setting 
    \begin{align}
        \alpha_0 &= 1, \nonumber \\
        \alpha_1 &= \frac{\braket{\Phi_{1}}{\Phi_{0}}}{|\braket{\Phi_{1}}{\Phi_{0}}|}.
    \end{align}
    This ensures that the terms in the sum where $f_{xy}\neq f_{x'y'}$ are positive, so we bound them below by zero and obtain
\begin{align}
        F(\Psi_f,\sigma) &\geq \frac{1}{d_R^2} \left(\sum_{f_{xy}} \sum_{\substack{xyx'y':\\f_{xy}=f_{x'y'}}} 1 \right)\nonumber \\
        &= \frac{1}{d_R^2}\sum_{f_{xy}} N_{f_{xy}}^2 \geq \frac{1}{2}
    \end{align}
    where $N_{m}$ is the number of inputs that lead to $f_{xy}=m$. 
    This gives the needed lower bound. 
    
    To understand the resource consumption of the protocol constructed above, notice that it uses the same resource state, and so still $n_E$ qubits of entangled resource system. 
    Considering the message sizes, notice that in purifying the channels used in the PSQM protocol we need no more than $n_E+n_M$ qubits in the auxiliary system, and then we added an additional copy of the input sent to the left, so we use at most $n_E+2n+n_M$ qubit messages.  
\end{proof}

Recently a robust version of this theorem was proven, see \cite{koutsikos2024essay}. 

Note that we do not expect that a good PSQM protocol implies a CFE protocol that succeeds with fidelity near 1, even in the perfect case, and the above is likely the best implication from PSQM to CFE that is possible. 
To understand why, consider why the fidelity of $1/2$ appears in the above.
The security requirement of the PSQM implies that the density matrix on Bob's side in the CFE depends only on $f(x,y)$, and not further on $(x,y)$. 
In the proof above, this restricts the entanglement between $ZZ'$ and $\tilde{M}$, which can be exploited to make the CFE protocol mostly coherent. 
However, since the system on the right can still depend on $f(x,y)$, there can be one qubit worth of entanglement between $ZZ'$ and $\tilde{M}$, which leads to the fidelity of $1/2$. 
We do not believe there is any way to remove this last qubit of entanglement, since it seems consistent with the security of the PSQM, and hence no way to achieve fidelity 1 in the CFE. 
For this reason we should understand CFE as likely a stronger primitive than PSQM.
It would be interesting to understand if there is some other special case of NLQC, aside from CFE, which is equivalent to PSQM.

\section{Complexity of efficiently achievable functions}\label{sec:complexity}

The set of implications summarized in figure \ref{fig:web} imply efficient protocols for one primitive imply efficient protocols for many others. 
In this section we briefly summarize what is known about the efficiently achievable functions in various settings, and how they compare across various primitives. 

\subsection{Relevant complexity measures}\label{sec:complexitymeasures}

An important model of computation we will discuss is the modulo-$p$ branching program. 
These are computational models with close relationships to various non-uniform complexity classes sitting inside of NC. 
\begin{definition}
A \textbf{branching program} is a tuple $\mathbfcal{BP}=(G,\phi,s,t_0,t_1)$ where,
\begin{itemize}
    \item $G=(V,E)$ is a directed acyclic graph,
    \item $\phi$ is a function from edges in $E$ to either a value ``yes'' or a tuple $(b,i)$ for $b$ a bit and $i\in \{1,...,n\}$,
    \item $s$, $t_0$, $t_1$ are vertices from $V$. 
\end{itemize}
Given a $n$ bit string $x$ as input, the branching program specifies a subgraph of $G$ labelled $G_x$ according to the following rule. 
If for $e\in E$ we have $\phi(e)=(b,j)$ with $x_j=b$, or if $\phi(e)=$``yes'', then $e$ is included in $G_x$.
We define a function acc$(x)$ as the number of paths $s\rightarrow t_1$ in the graph $G_x$, and a function rej$(x)$ as the number of paths from $s$ to $t_0$ in $G_x$. 
\end{definition}

\begin{definition}
The size of a branching program is defined as the number of vertices in $V$. 
We label the minimal sized branching program computing $f$ as $BP(f)$. 
\end{definition}
We say a branching program is deterministic if the out degree of every vertex in every $G_x$ is at most $1$, and non-deterministic otherwise. 
The function $f(x)$ computed by a deterministic or non-deterministic branching program is defined such that $f(x)=1$ iff $acc(x)>0$. 
A \textbf{Boolean modulo-$p$ branching program} computes the function $f(x)$ defined such that $f(x)=1$ iff $acc(x)\neq 0 \,\,mod \,\, p$. 
We label the minimal size of a mod $p$ branching program computing $f$ by $BP_p(f)$. 

The class of functions with polynomial sized modulo-$p$ branching programs is defined below. 
\begin{definition}
    The complexity class Mod$_pL$/poly is defined as those Boolean function families $\{f_n\}$ which have polynomial (in $n$) sized modulo-$p$ branching programs. 
\end{definition}
The uniform complexity class Mod$_pL$ can be defined similarly in terms of log-space uniform branching programs, or given an equivalent definition in terms of Turing machines \cite{buntrock1992structure}. 
Another relevant complexity class, also based on branching programs, is the following. 
\begin{definition}
    The class $C_=L/poly$ (read as ``equality L'') is defined as those Boolean function families $\{f_n\}$ which can be decided in the following way. 
    We consider a branching program of polynomial (in n) size. 
    If $acc(x)=rej(x)$, output $1$ and otherwise output $0$.  
\end{definition}

A related notion of complexity that we will need is that of a \textbf{span program}, defined initially in \cite{karchmer1993span}.
\begin{definition}\label{def:spanprogram}
A \textbf{span program} over a field $\mathbb{Z}_p$ consists of a triple $S=(M, \phi, \mathbf{t})$, where $M$ is a $d\cross e$ matrix with entries in $\mathbb{Z}_p$, $\phi$ is a map from rows of $M$, labelled $r_i$, to pairs $(k,\varepsilon_i)$, with $k\in \{1,...,n\}$ and $\varepsilon_i\in\{0,1\}$, and $\mathbf{t}$ is a non-zero vector of length $e$ with entries in $\mathbb{Z}_p$.
A span program $S$ computes a function $f:\{0,1\}^n\rightarrow \{0,1\}$ as follows. 
Given an input string $z$ of $n$ bits, if the vector $\mathbf{t}$ is in $\text{span}(\{r_i: \exists j, \phi(r_i)=(j,z_j)\})$, then output 1. 
Otherwise, output 0.
\end{definition}

\begin{definition}
The \textbf{size} of a span program is defined to be $d$, the number of rows in $M$. 
We denote the minimal size of a span program over $\mathbb{Z}_p$ that computes $f$ by $SP_p(f)$.
\end{definition}
The size of a span program computing $\{f_n\}$ and of a branching program computing the same function family are related by the following theorem, noted in \cite{karchmer1993span} to follow from techniques in \cite{buntrock1992structure}.
\begin{theorem}
    For every prime $p$, Mod$_pL$ consists of those function families with polynomial sized span programs over $\mathbb{Z}_p$.
\end{theorem}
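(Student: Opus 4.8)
The plan is to prove the two inclusions separately, bridging them through the standard fact that $\mathrm{Mod}_pL$ coincides, up to logspace reductions, with linear algebra over $\mathbb{Z}_p$: deciding whether a matrix with entries in $\mathbb{Z}_p$ is singular, equivalently whether a system $\mathbf{c}^{T} M = \mathbf{t}^{T}$ is solvable over $\mathbb{Z}_p$, is complete for $\mathrm{Mod}_pL$. This is the Buntrock--Damm--Hertrampf--Meinel-style characterization I would lean on, and it is no accident that it lives over $\mathbb{Z}_p$: working in characteristic $p$ is precisely what turns path counting into counting modulo $p$.

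First I would show that a (logspace-uniform) polynomial-size span program over $\mathbb{Z}_p$ computes a function in $\mathrm{Mod}_pL$. A span program $S=(M,\phi,\mathbf{t})$ outputs $1$ on input $z$ iff $\mathbf{t}$ lies in the span of the selected rows $\{r_i : \phi(r_i)=(j,z_j)\}$, i.e. iff the system $\mathbf{c}^{T} M_z = \mathbf{t}^{T}$ is solvable, where $M_z$ is the submatrix of selected rows. Since each row is selected according to a single literal $z_j$, the matrix $M_z$ is computable from $z$ in logspace (each entry depends on one input bit). Solvability over $\mathbb{Z}_p$ reduces to comparing the rank of $M_z$ with that of $M_z$ augmented by $\mathbf{t}$, and rank (equivalently singularity) over $\mathbb{Z}_p$ lies in $\mathrm{Mod}_pL$; using closure of $\mathrm{Mod}_pL$ under complement and the relevant logspace reductions for prime $p$, the span-program function lands in $\mathrm{Mod}_pL$, with the underlying branching program polynomial in the span-program size $d$.

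Next I would show the converse: a function with a polynomial-size modulo-$p$ branching program has a polynomial-size span program over $\mathbb{Z}_p$. Given $\mathbfcal{BP}=(G,\phi,s,t_0,t_1)$, let $A_x$ be the adjacency matrix of the input-selected subgraph $G_x$. Because $G_x$ is a DAG, $A_x$ is nilpotent, $(I-A_x)$ is unipotent, and the number of $s\to t_1$ paths equals the entry $e_s^{T}(I-A_x)^{-1}e_{t_1}$, a cofactor of $(I-A_x)$. Hence $\mathrm{acc}(x)\not\equiv 0 \pmod p$ is a solvability/singularity predicate over $\mathbb{Z}_p$ about a matrix of size $|V|$ whose entries are controlled by single literals through $\phi$. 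I would then invoke the canonical span-program realization of such a predicate: build $M$ from the rows of $(I-A_x)$, grouping rows by the literal governing each edge so that the branching program's $\phi$ becomes the span program's $\phi$, and choose $\mathbf{t}$ to encode a unit flow constraint from $s$ to $t_1$. Over $\mathbb{Z}_p$, $\mathbf{t}$ lands in the selected span exactly when a nonzero flow modulo $p$ exists, matching $\mathrm{acc}(x)\not\equiv 0$. The row count stays polynomial in $|V|+|E|$, giving $SP_p(f)=\mathrm{poly}(BP_p(f))$, and logspace uniformity transfers, so the equivalence holds for the uniform class (the non-uniform reading gives the $\mathrm{Mod}_pL/\mathrm{poly}$ statement).

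The main obstacle is this second direction, and specifically capturing \emph{modular counting} rather than mere reachability. A naive incidence encoding of $G_x$ yields a span program that merely tests whether $t_1$ is reachable from $s$, which would only capture $\mathrm{NL}/\mathrm{poly}$. The real content is that over a field of characteristic $p$ the flow/transfer-matrix formulation automatically counts paths modulo $p$, so I must verify that the chosen $M$ and $\mathbf{t}$ make span membership equivalent to ``$e_s^{T}(I-A_x)^{-1}e_{t_1}\neq 0$ in $\mathbb{Z}_p$'' and not to a Boolean connectivity predicate. Checking this equivalence carefully, together with the bookkeeping that $\phi$ can group rows by single literals without a size blowup, and confirming the closure and uniformity facts used in both directions, is where the work concentrates.
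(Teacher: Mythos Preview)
The paper does not actually prove this theorem; it merely states it and attributes it to Karchmer--Wigderson, who noted it follows from techniques of Buntrock--Damm--Hertrampf--Meinel. So there is no in-paper proof to compare against.

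That said, your sketch is essentially the standard argument from those references, and it is sound at the level of a plan. The reduction of span-program acceptance to solvability of a linear system over $\mathbb{Z}_p$, together with the $\mathrm{Mod}_pL$-completeness of singularity/rank over $\mathbb{Z}_p$ (and closure under complement for prime $p$), handles one direction. For the other, your use of the transfer matrix $(I-A_x)^{-1}$ to express $\mathrm{acc}(x) \bmod p$ as a cofactor, and then packaging this as a span program whose rows are labelled by single literals, is exactly the Karchmer--Wigderson construction (and is also what underlies the inequality $SP_p(f)\le 2\,BP_p(f)$ the paper quotes from Beimel--G\'al immediately afterward). You have correctly flagged the one genuine pitfall, namely that a naive incidence encoding tests reachability rather than modular path counting; the fix is precisely to work with the flow/cofactor formulation over $\mathbb{Z}_p$ so that the span condition becomes ``$e_s^{T}(I-A_x)^{-1}e_{t_1}\ne 0$ in $\mathbb{Z}_p$''. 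One small point to tidy in a full write-up: ensure each row of your span-program matrix is indeed governed by a single literal. In a branching program, multiple edges with different literal labels may leave a given vertex, so you will want to split rows by edge (not just by vertex) or otherwise arrange the encoding so that $\phi$ assigns one literal per row without blowing up the size beyond $O(|V|+|E|)$.
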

Thus the size of span programs and of arithmetic branching programs are related polynomially, and in fact \cite{beimel1999arithmetic}\footnote{Note that this statement is given in \cite{beimel1999arithmetic} in terms of \emph{arithmetic branching programs}, which are a generalization of modulo-p branching programs (and so are at least as powerful).}
\begin{align}\label{eq:SPvsBP}
    SP_p(f) \leq 2 BP_p(f).
\end{align}
We will never be interested in constant factor differences, so we can take that span programs are always smaller than modulo-$p$ branching programs. 

An important notion for us will be that of \emph{pre-processing}. 
We will consider functions $f:\{0,1\}^{n}\times \{0,1\}^{n}\rightarrow \{0,1\}$, and are interested in the complexity of computing $f(x,y)$ after allowing for arbitrary functions to be applied to $x$ and $y$ separately. 
We make the following definition. 
\begin{definition}
    An \textbf{interaction part} of $f(x,y):\{0,1\}^{n}\times \{0,1\}^{n}\rightarrow \{0,1\}$ is any function $F$ such that there exists functions $\alpha:\{0,1\}^{n}\rightarrow \{0,1\}^{m_\alpha}$, $\beta:\{0,1\}^{n}\rightarrow \{0,1\}^{m_\beta}$ such that $f(x,y)=F(\alpha(x),\beta(y))$. 
\end{definition}
We say that the complexity after pre-processing (with respect to some measure of complexity) of a function $f(x,y)$ is the minimal complexity of any interaction part of $f(x,y)$.
More concretely, for span and branching program size we define the following pre-processed complexity measures. 
\begin{definition} The \textbf{pre-processed branching program complexity} is defined as
    \begin{align}
        BP_{p,(2)}(f)= \min_{F,\alpha,\beta} \{ BP_p(F):f(x,y)=F(\alpha(x),\beta(y))\},
    \end{align}
\end{definition}
\begin{definition} The \textbf{pre-processed span program complexity} is defined as
    \begin{align}
        SP_{p,(2)}(f)= \min_{F,\alpha,\beta} \{ SP_p(F):f(x,y)=F(\alpha(x),\beta(y))\},
    \end{align}
\end{definition}
The pre-processed branching and span program complexities are related polynomially, because the non pre-processed complexities are. 

We define the following pre-processed complexity classes. 
\begin{definition}
    The complexity class $Mod_kL_{(2)}$ is defined as those functions $f:\{0,1\}^{n}\times \{0,1\}^{n}\rightarrow \{0,1\}$ with an interaction part that can be computed with a polynomial size (in n) modulo-p branching program. 
\end{definition}
\begin{definition}
    The complexity class $C_=L_{(2)}$ is defined as those functions $f:\{0,1\}^{n}\times \{0,1\}^{n}\rightarrow \{0,1\}$ with an interaction part that can be computed according to the following procedure.  
    We consider a branching program of polynomial (in n) size. 
    If $acc(x)=rej(x)$, output $1$ and otherwise output $0$.
\end{definition}
We can analogously define the complexity class $P_{(2)}$ as those families of function families which have a poly-time computable interaction part. 

\subsection{Efficiency of protocols for PSM, CDS, and related primitives}

\vspace{0.2cm}
\noindent \textbf{PSM and PSQM protocols}
\vspace{0.2cm}

The largest class of functions for which efficient PSM protocols have been constructed are those with polynomial sized modulo-$p$ branching programs. 
The following theorem was proven in \cite{ishai1997private}.
\begin{theorem}\label{thm:PSMfrombranchingprogram}\textbf{[IK '97]}
    Let $p$ be a prime, and let $\mathbfcal{BP}=(G,\phi,s,t_0,t_1)$ be a Boolean modulo-$p$ branching program of size $a(n)$ computing an interaction part of $f$. Then there exists a PSM protocol for $f$ with randomness complexity and communication complexity both $O( a(n)^2\,\log p )$.
\end{theorem}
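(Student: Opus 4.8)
The plan is to realize the Ishai--Kushilevitz randomized encoding of a branching program and then package it as a PSM, exploiting the fact (noted after Theorem~\ref{thm:DREgivesPSM}) that a PSM only needs a randomized encoding decomposable across the cut $X\times Y$. First I would dispose of the pre-processing: since $f(x,y)=F(\alpha(x),\beta(y))$ for the interaction part $F$ computed by the branching program, Alice and Bob compute $\alpha(x)$ and $\beta(y)$ locally at no communication or randomness cost, so it suffices to build a PSM for $F$ whose input bits are partitioned between the two parties. The central object is a matrix encoding of $\mathbfcal{BP}$. Ordering the $a(n)$ vertices topologically with $s$ first and (after routing the accepting sink to the end) $t_1$ last, let $A$ be the strictly upper-triangular matrix whose $(u,v)$ entry is the single literal labelling the edge $u\to v$, and set $L=I-A$. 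Applying Cramer's rule to the unitriangular $L$ shows that $[L^{-1}]_{s,t_1}$, which equals $\mathrm{acc}(x,y)\bmod p$ because $(I-A)^{-1}=\sum_k A^k$ counts $s\to t_1$ paths, is $\pm\det M$ for the $(a(n)-1)\times(a(n)-1)$ submatrix $M$ obtained by deleting the row of $t_1$ and the column of $s$. Writing $m=a(n)-1$, each entry $M_{ij}$ is a constant or a single edge literal, so $M$ splits as $M=M_0+M_A(x)+M_B(y)$ with every literal assigned to the party holding its bit; by definition $F=1$ iff $\det M\neq 0\bmod p$, i.e.\ iff $M$ has full rank $m$.

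Next I would give the PSM itself. The shared randomness is two uniformly random invertible matrices $R_1,R_2\in GL_m(\mathbb{Z}_p)$ and a uniform mask $S\in\mathbb{Z}_p^{m\times m}$. Alice sends $R_1 M_A(x) R_2 + S$ and Bob sends $R_1(M_0+M_B(y))R_2 - S$, each computable from a party's input and the shared randomness alone. The referee adds the messages to recover $\widehat M = R_1 M R_2$, computes its rank, and outputs $1$ exactly when $\widehat M$ is full rank. Since multiplication by invertible matrices preserves rank, this is perfectly correct. Each message is an $m\times m$ matrix over $\mathbb{Z}_p$ and the randomness is a constant number of such matrices, giving communication and randomness $O(a(n)^2\log p)$.

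The hard part will be privacy, and this is where the structure of $M$ enters. The additive pad $S$ makes the joint distribution of the two messages identical to $(U,\widehat M-U)$ for uniform $U$, so the transcript reveals nothing beyond $\widehat M=R_1 M R_2$. It then remains to argue that the law of $R_1 M R_2$ over random $R_1,R_2$ depends only on $F(x,y)$. The clean lemma is that the two-sided action $(R_1,R_2)\mapsto R_1 M R_2$ is transitive on each rank stratum, so $R_1 M R_2$ is uniform over the rank-$r$ matrices and its distribution depends on $M$ only through $\mathrm{rank}(M)$. To conclude I must pin this rank to $\{m-1,m\}$: because $M$ is upper-Hessenberg with all $1$'s on its first subdiagonal, the $(m-1)\times(m-1)$ minor on that subdiagonal never vanishes, forcing $\mathrm{rank}(M)\ge m-1$. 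Hence all inputs with $F=0$ share rank $m-1$ and all with $F=1$ share rank $m$, and the simulator simply samples $R_1 M_{\mathrm{rep}} R_2+\text{(pad)}$ for a fixed representative of the rank dictated by the function value, yielding perfect privacy.

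Putting the pieces together produces a perfectly correct and perfectly private PSM for $F$, and after the free local pre-processing a PSM for $f$, with randomness and communication both $O(a(n)^2\log p)$; equivalently the masked message pair is a randomized encoding of $f$ decomposable across $X\times Y$, so one may instead invoke Theorem~\ref{thm:DREgivesPSM}. I expect the only genuinely delicate steps to be the rank-stratum uniformity lemma together with the verification that the subdiagonal of $1$'s forces $\mathrm{rank}(M)\in\{m-1,m\}$; the remaining correctness and resource accounting is routine bookkeeping.
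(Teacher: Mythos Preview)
The paper does not actually prove this theorem; it states it as a result of Ishai and Kushilevitz and cites \cite{ishai1997private}. Your proposal is essentially a correct reconstruction of the Ishai--Kushilevitz argument: the passage from the branching program to the upper-Hessenberg matrix $M$ with unit subdiagonal, the identity $\det M=\pm\,\mathrm{acc}(x,y)\bmod p$ via Cramer's rule on $L=I-A$, the observation that $\mathrm{rank}(M)\in\{m-1,m\}$ from the nonvanishing subdiagonal minor, and the two-sided randomization are all standard pieces of that construction, and your resource count is right.

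One small deviation worth flagging: the original IK encoding randomizes with unit upper- and lower-triangular matrices (which can be sampled with exactly $\binom{m}{2}$ field elements each) rather than uniform $R_1,R_2\in GL_m(\mathbb{Z}_p)$, and the privacy lemma there is that the distribution of the masked matrix depends only on $\det M$ rather than on $\mathrm{rank}(M)$. Your variant with full $GL_m$ matrices is also correct and in some ways cleaner---the orbit-transitivity argument on rank strata is exactly right---but sampling uniformly from $GL_m(\mathbb{Z}_p)$ with a bounded number of random bits requires a short side argument (e.g.\ noting $|GL_m(\mathbb{Z}_p)|\le p^{m^2}$ and using an explicit bijection, or absorbing a rejection step into the $O(\cdot)$). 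This is a cosmetic point and does not affect the asymptotic statement.
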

Note that the original statement of this theorem considers $f$ rather than its interaction part, but the extension is trivial. 
An immediate consequence of this theorem, along with the implications summarized in figure \ref{fig:web}, is that CDS, PSQM, CDQS, and $f$-routing can all be achieved with the randomness and communication complexity given in the same way, up to constant factor overheads. 

To better understand the implications of this theorem, it is helpful to understand which complexity classes can be efficiently achieved. 
Fixing $p$, those functions with polynomial sized branching programs are exactly the class $Mod_pL$. 
Running the PSM protocol on the interaction part, we can therefore achieve the class $\text{Mod}_pL_{(2)}$ efficiently as a PSM.  
We can also choose $p$ adaptively, and doing so achieve the class $C_=L_{(2)}$. 
This is shown in \cite{ishai1997private}.
It is also interesting to find a complexity class that contains all of those functions where $(\log p) BP_p(f)$ can be made polynomial. 
The smallest class which we can show contains all such functions is $L^{\#L}$, which we state as the following remark. 
\begin{remark}
    Every function family $\{f_n\}$ for which $(\log p)\cdot BP_p(f_n)$ is polynomial in $n$ for some choice of $p$ is contained in the class $L^{\#L}/ poly$.
\end{remark}
\begin{proof}\,
By assumption, there is a polynomial sized branching program, call it $BP$ and denote its size by $s$, whose number of accepting paths counted mod $p$ is non-zero if $f(x)=1$, and $0$ otherwise.  
Further, the choice of $p$ needed must have $\log p$ be polynomial. 
Our algorithm to compute $f$ in $L^{\#L}$ is as follows. 
We take our advice string to be a description of the branching program BP. 
We give BP along with the input $x$ to the $\# L$ oracle, and it will return the number of accepting paths of this program, call it $N$. 
Notice that $N < 2^s$, since there must be no more accepting paths then there are subsets of vertices in BP.
This means the output of the oracle consists of at most a polynomial sized string. 
We then subtract $p$ from $N$ repeatedly until it obtains a number less than $p$. 
Since $p$ also consists of a polynomial number of bits, this can be done in log space. 
\end{proof}

To relate $L^{\#L}$ to more familiar classes, we can note that it is contained inside of DET which is in turn contained inside of NC, where NC is the class of functions computed by poly-logarithmic depth circuits. 

Notice that from theorem \ref{thm:PSMgivesPSQM} the result of theorem \ref{thm:PSMfrombranchingprogram} carries over immediately to the setting of PSQM. 
We move on to understand the implications of theorem \ref{thm:PSMfrombranchingprogram} for the CDS, CDQS, and $f$-routing primitives below. 

\vspace{0.2cm}
\noindent \textbf{CDS protocols}
\vspace{0.2cm}

From theorem \ref{thm:PSMfrombranchingprogram} and because PSM protocols give CDS protocols (see theorem \ref{thm:PSMgivesCDS}), we obtain the following corollary. 
\begin{theorem}\label{thm:CDSefficiencyfromPSM}
    Let $p$ be a prime, and let $\mathbfcal{BP}=(G,\phi,s,t_0,t_1)$ be a Boolean modulo-$p$ branching program of size $a(n)$ computing $f$.
    Then there exists a CDS protocol for $f$ with randomness complexity and communication complexity both $O( a(n)^2\,\log p )$.
\end{theorem}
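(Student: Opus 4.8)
The plan is to obtain the CDS by composing the two ingredients already at hand: the IK '97 construction of a PSM from a modulo-$p$ branching program (Theorem \ref{thm:PSMfrombranchingprogram}) and the generic reduction turning a PSM into a CDS (Theorem \ref{thm:PSMgivesCDS}). First I would note that any function $f$ is trivially an interaction part of itself, taking the pre-processing maps $\alpha$ and $\beta$ to be the identity, so Theorem \ref{thm:PSMfrombranchingprogram} applies directly to the given branching program of size $a(n)$ and yields a PSM protocol for $f$ whose randomness complexity and communication complexity are each $O(a(n)^2 \log p)$.

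The key observation for the resource bookkeeping is that the IK PSM is exact, i.e.\ perfectly correct ($\epsilon = 0$) and perfectly secure ($\delta = 0$). Feeding this PSM into Theorem \ref{thm:PSMgivesCDS} with $n_M = O(a(n)^2 \log p)$ and $n_E = O(a(n)^2 \log p)$ then produces a CDS that uses $n_M + 1$ bits of message and $n_E$ random bits, is $\epsilon = 0$ correct, and is $O(\delta \log d_R) = 0$ secure. The additive $+1$ in the message length is absorbed into the asymptotic notation, so both the randomness and communication complexities of the CDS remain $O(a(n)^2 \log p)$, as claimed.

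I would finish by remarking that the CDS obtained this way hides a single bit; if a longer secret is required one can invoke the parallel-repetition lemma following Remark \ref{remark:onesidedCDS} at the cost of a multiplicative overhead in the secret length, while preserving the stated asymptotic dependence on $a(n)$ and $\log p$. The main point to be careful about is simply confirming that exactness of the IK construction kills the $O(\delta \log d_R)$ leakage term, since otherwise the $\log d_R$ factor could in principle degrade security; beyond tracking these constants there is no genuine obstacle, as the statement is a direct chaining of the two preceding theorems.
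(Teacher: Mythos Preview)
Your proposal is correct and matches the paper's argument exactly: the paper states this theorem as an immediate corollary of Theorem~\ref{thm:PSMfrombranchingprogram} and Theorem~\ref{thm:PSMgivesCDS}, and your write-up simply fills in the bookkeeping (exactness of the IK protocol, the absorbed $+1$) that the paper leaves implicit.
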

Note that the implication from PSM to CDS was known already, so that this implication was already clear. Recently, this scaling was improved to linear in the branching program size \cite{ishai2014partial}. 

We can compare this to the most efficient CDS constructions in the literature. 
A CDS protocol based on secret sharing schemes was given in \cite{GERTNER2000592}. 
They prove the following theorem\footnote{The cost here being $c+|s|$ while the cost in the reference \cite{GERTNER2000592} being $c$ is due to our defining the CDS to have the secret held on only one side, rather than on both as is the convention in \cite{GERTNER2000592}.}.
\begin{theorem}\label{thm:CDSfromsecretsharing}\textbf{[GIKM '98]}
Let $h_M:\{0,1\}^{n}\rightarrow \{0,1\}$ be a monotone Boolean function, and let $h:\{0,1\}^n\rightarrow \{0,1\}$ be a projection of $h_M$; that is, $h(y_1,...,y_n)=h_M(g_1,...,g_M)$, where each $g_i$ is a function of a single variable $y_i$. 
Let $S$ be a secret sharing scheme realizing the access structure $h_M$, in which the total share size is $c$, and let $s$ be a secret that can be hidden in $S$. 
Then there exists a protocol $P$ for disclosing $s$ subject to the condition $h$ whose communication and randomness complexity are bounded by $c+|s|$.
\end{theorem}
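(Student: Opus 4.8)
The plan is to reuse the secret-sharing-based CDS construction already sketched after Theorem~\ref{thm:SSgivesCDS}, now phrased in the single-function framing of \cite{GERTNER2000592} where each player holds one input bit, and to absorb the $|s|$ gap between our cost $c+|s|$ and the cost $c$ of \cite{GERTNER2000592} using Remark~\ref{remark:onesidedCDS}. Concretely, I would first establish the monotone base case $h=h_M$ under the assumption that the secret $s$ is held by every player, and then reduce the general (projected, one-sided) statement to it.

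For the base case I would use that a monotone $h_M$ is the indicator function of a monotone access structure $\mathcal{A}$, exactly as discussed after Definition~\ref{def:SS}: a set of parties is authorized precisely when $h_M$ evaluates to $1$ on its indicator vector. The players use their shared randomness as the randomness $R$ of the scheme $\mathbf{S}$ and, since each knows $s$, locally computes the share(s) assigned to it; player $i$ forwards its share to the referee iff $y_i=1$. Correctness and security then follow immediately from the two clauses of Definition~\ref{def:SS}: when $h_M(y)=1$ the forwarded shares form an authorized set $A\in\mathcal{A}$, so the decoder $D_A$ recovers $s$ (giving $\epsilon$-correct CDS), while when $h_M(y)=0$ the forwarded shares form an unauthorized set $U\notin\mathcal{A}$, whose distribution is reproducible by the simulator of $\mathbf{S}$ without access to $s$ (giving $\delta$-secure CDS). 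The communication is the total length of the forwarded shares, hence at most the total share size $c$, and the shared randomness is the randomness of $\mathbf{S}$, which is likewise bounded by $c$.

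To handle a projection $h(y)=h_M(g_1,\dots,g_M)$ in which each $g_i$ depends on a single input variable, I would have each player pre-process locally, computing $g_i$ from its own bit and then running the monotone protocol with $g_i$ in place of that bit. Because the bit deciding whether a share is sent is now $g_i$, the forwarded set is authorized iff $h_M(g_1,\dots,g_M)=h(y)=1$, so correctness and security carry over verbatim and the cost is unchanged. Finally, the theorem is stated with $s$ held on only one side, whereas the construction above needs $s$ on all sides, so I would invoke Remark~\ref{remark:onesidedCDS} to convert to a one-sided protocol at the cost of $|s|$ extra shared random bits and $|s|$ extra bits of communication, preserving $\epsilon$-correctness and degrading security only by the stated constant factor. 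This yields the claimed bound of $c+|s|$ on both communication and randomness.

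The one point that must be argued carefully is the security reduction: I would verify that on a rejecting input the referee's \emph{entire} view equals the joint distribution of an unauthorized set of shares, and therefore carries no information about $s$ beyond what the simulator of $\mathbf{S}$ produces. In the perfect case this is immediate from the privacy clause of Definition~\ref{def:SS}; for an approximate scheme one must track that the $\delta$-security of the shares transfers unchanged to the CDS, which is exactly the content of Remark~\ref{thm:robustCDSfromSS}, after which the one-sided conversion costs only the constant-factor loss guaranteed by Remark~\ref{remark:onesidedCDS}.
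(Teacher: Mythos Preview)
Your construction is the same one the paper sketches after Theorem~\ref{thm:SSgivesCDS} (the paper does not give a separate proof of Theorem~\ref{thm:CDSfromsecretsharing}; it is quoted as a cited result from \cite{GERTNER2000592}). The treatment of correctness, security, the projection step, and the $+|s|$ overhead via Remark~\ref{remark:onesidedCDS} all match.

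There is one small gap. You write that ``the shared randomness is the randomness of $\mathbf{S}$, which is likewise bounded by $c$,'' but nothing in Definition~\ref{def:SS} or the hypotheses of the theorem bounds the randomness of the secret sharing scheme by its total share size; this is an additional claim that needs an argument. The paper's sketch deals with the randomness bound differently: it gives a \emph{second} protocol in which only one party prepares the shares and one-time-pads each share assigned to the other side with a fresh string $r_i$ of length $|S_i|$, so that the shared randomness is manifestly $\sum_i |S_i|\le c$ (at the price of roughly doubling communication). Your single protocol cleanly gives the communication bound but not the randomness bound; you should either add the second protocol, as the paper does, or supply a separate argument that the scheme's randomness can without loss of generality be taken to be at most $c$ bits.
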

Using the span program based constructions of secret sharing schemes \cite{karchmer1993span}, this upper bounds the CDS cost of $f$ by the minimal size of a monotone span program computing any projection of $f$, call it $f_M$.
If the span program is over the field $\mathbb{Z}_p$, the cost is $(\log p) \cdot mSP(f_M)$.
In \cite{cree2022code} (see lemma 5) it is shown that the size of a span program computing the projection $f_M$ is the same as the size of a (non-monotone) span program computing $f$, up to a constant additive term. 
This leads to the following corollary. 
\begin{corollary}
    The randomness and communication complexity to perform CDS on the function $f$ is at most $O(\log p\cdot SP_p(f))$, where $SP_p(f)$ is the size of any span program over $\mathbf{Z}_p$ computing $f$. 
\end{corollary}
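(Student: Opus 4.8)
The plan is to chain together three ingredients already available in the excerpt: the span-program-to-secret-sharing construction of \cite{karchmer1993span}, the GIKM secret-sharing-to-CDS reduction (Theorem \ref{thm:CDSfromsecretsharing}), and the comparison of span program sizes across projections from \cite{cree2022code}. The corollary is essentially a size-accounting statement that threads a given span program for $f$ through these steps.

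First I would fix a span program $S$ over $\mathbb{Z}_p$ computing $f$ of size $SP_p(f)$. Since Theorem \ref{thm:CDSfromsecretsharing} requires a \emph{monotone} access structure, $S$ cannot be fed directly into the secret-sharing construction. Instead I would invoke lemma 5 of \cite{cree2022code}, which produces a monotone function $f_M$ having $f$ as a projection (so that $f(x,y)=f_M(g_1(x),g_2(y))$ for suitable single-variable functions $g_i$), together with a \emph{monotone} span program for $f_M$ whose size $mSP(f_M)$ exceeds $SP_p(f)$ only by an additive constant. This is the step that converts the potentially non-monotone $f$ into something the GIKM machinery accepts.

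Next I would apply the span-program-based secret sharing construction of \cite{karchmer1993span} to this monotone span program. A monotone span program over $\mathbb{Z}_p$ of size $d$ computing $f_M$ yields a perfect linear secret sharing scheme realizing the access structure $f_M$ whose shares are elements of $\mathbb{Z}_p$ and whose total share size is $d$ field elements, i.e.\ $d\log p$ bits once field elements are written in binary; here the secret is a single field element, so $|s|=\log p$. Finally I would plug this scheme into Theorem \ref{thm:CDSfromsecretsharing} with $h_M=f_M$ and $h=f$, giving a CDS protocol disclosing $s$ subject to the condition $f$ whose communication and randomness complexities are both bounded by $c+|s| = (\log p)\,mSP(f_M) + \log p$, which is $O(\log p\cdot SP_p(f))$ after substituting $mSP(f_M)=SP_p(f)+O(1)$ from the previous step.

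The step I expect to require the most care is the reduction from a non-monotone span program for $f$ to a monotone span program for a projection $f_M$, since monotonicity is exactly what the hypothesis of Theorem \ref{thm:CDSfromsecretsharing} demands while the span program we start from need not be monotone. The additive-constant blow-up in size is what \cite{cree2022code} lemma 5 controls, and the main bookkeeping point is to orient the projection relationship correctly, so that $f$ genuinely appears as a projection of $f_M$ and the decoding/security guarantees of the secret sharing scheme transfer to a CDS for $f$ rather than for some other function. The $\log p$ factor is then simply the cost of representing each $\mathbb{Z}_p$ share in bits, and no further estimate is needed.
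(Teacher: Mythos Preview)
Your proposal is correct and follows essentially the same route as the paper: combine the Karchmer--Wigderson span-program-to-secret-sharing construction with Theorem~\ref{thm:CDSfromsecretsharing} (GIKM), and handle the monotonicity requirement via lemma~5 of \cite{cree2022code}, with the $\log p$ factor arising from representing $\mathbb{Z}_p$ shares in bits. One small notational quibble: the projection in Theorem~\ref{thm:CDSfromsecretsharing} is bitwise (each $g_i$ depends on a single input bit), not blockwise as your notation $f_M(g_1(x),g_2(y))$ might suggest, but the standard literal-doubling construction used in \cite{cree2022code} produces exactly such a bitwise projection, so this does not affect the argument.
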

Notice that this is quite similar to corollary \ref{thm:CDSefficiencyfromPSM}.
Because the span program size and branching program size are related by equation \ref{eq:SPvsBP}, the secret sharing based construction for CDS is always more efficient than the branching program based approach inherited from PSM. 

Another protocol based on dependency programs \cite{pudlak1996algebraic} was given in \cite{applebaum2017private}. 
Because dependency programs are always larger than span programs (see \cite{pudlak1996algebraic}, Lemma 3.6)\footnote{This is true when considering binary inputs, which we do here. The construction in \cite{applebaum2017private} extends to non-binary inputs, and in that setting there may be polynomial overheads.}, the span program based construction remain the most efficient.

\vspace{0.2cm}
\noindent \textbf{CDQS and $f$-routing protocols}
\vspace{0.2cm}

Notice that efficient CDQS protocols are given by both efficient CDS protocols (theorem \ref{thm:CDStoCDQS}) and by PSQM protocols (theorem \ref{thm:PSQMtoCDQS}). 
Further, from theorem \ref{thm:CDQSandfRouting} we have that efficient CDQS leads to efficient $f$-routing.
These implications lead to the following theorem. 
\begin{theorem}\label{thm:CDQSandfRefficiency}
    The randomness and communication complexity to perform CDQS or $f$-routing on the function $f$ is at most $O(\log p\cdot SP_p(f))$. 
\end{theorem}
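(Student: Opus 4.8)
The plan is to prove this as a corollary by chaining the implications already established, following the arrows $\text{CDS}\to\text{CDQS}\to f\text{-routing}$ in figure \ref{fig:web}. The starting point is the preceding corollary, which bounds the randomness and communication cost of a (classical) CDS for $f$ by $O(\log p\cdot SP_p(f))$; moreover the span-program/secret-sharing construction underlying it is perfect, so this CDS can be taken to be $0$-correct and $0$-secure. All that remains is to push this protocol through the two conversion theorems while tracking resources and verifying that the $O(\log p\cdot SP_p(f))$ scaling survives up to constant factors.

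First I would convert the CDS into a CDQS using Theorem \ref{thm:CDStoCDQS}. The secret to be hidden is now the one-time-pad key for the quantum system $Q$ to be routed; for a constant-dimensional $Q$ this is a key of $O(1)$ bits, which a single instance of the CDS can hide (enlarging the field, or invoking the parallel-repetition lemma stated before Definition \ref{def:CDQS} a constant number of times, costs only a constant factor and keeps the protocol perfect). Theorem \ref{thm:CDStoCDQS} then yields a CDQS that is $2\sqrt{0}=0$-correct and $0$-secure, using $n_M+O(1)$ qubits of message and $n_E$ bits of classical shared randomness, where $n_M,n_E=O(\log p\cdot SP_p(f))$. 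Viewing the shared classical randomness as a (diagonal) shared quantum resource state, this is a CDQS with resource and message cost both $O(\log p\cdot SP_p(f))$, establishing the CDQS half of the claim.

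Next I would feed this CDQS into Theorem \ref{thm:CDQSandfRouting}, whose second implication turns a CDQS hiding $Q$ that uses an $n_E'$-qubit resource state and $n_M'$-qubit messages into an $f$-routing protocol with the same $n_E'$-qubit resource state and messages of size $4(n_M'+n_E')$. Substituting $n_E',n_M'=O(\log p\cdot SP_p(f))$ shows the resulting $f$-routing protocol also uses entanglement and communication $O(\log p\cdot SP_p(f))$, since the additive $n_E'$ inside the message bound is itself of the same order. Because the CDQS was perfect, the correctness parameter $\max\{\epsilon,2\sqrt{\delta}\}$ produced by Theorem \ref{thm:CDQSandfRouting} is again $0$, so the $f$-routing protocol is exactly correct.

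The proof is therefore essentially bookkeeping on top of Theorems \ref{thm:CDStoCDQS} and \ref{thm:CDQSandfRouting} together with the CDS corollary, and I expect no genuine obstacle. The only points requiring care are (i) confirming that the classical CDS hides enough bits to serve as a one-time-pad key for $Q$, which is why the size of $Q$ must be kept constant (or paid for linearly via parallel repetition), and (ii) checking that the multiplicative factor of $4$ and the additive $n_E'$ term in the message size of Theorem \ref{thm:CDQSandfRouting} are absorbed into the $O(\cdot)$, which holds precisely because entanglement and communication are simultaneously $O(\log p\cdot SP_p(f))$ at every stage of the chain.
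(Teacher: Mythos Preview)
Your proposal is correct and follows essentially the same approach as the paper: the paper states this theorem as an immediate consequence of the chain CDS $\to$ CDQS (Theorem~\ref{thm:CDStoCDQS}) and CDQS $\to$ $f$-routing (Theorem~\ref{thm:CDQSandfRouting}), applied to the span-program-based CDS bound of the preceding corollary. Your write-up is in fact more detailed than the paper's, which simply asserts that ``these implications lead to the following theorem'' without explicitly tracking the constant-factor bookkeeping you carry out.
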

Since it had not previously been studied in the literature, this gives the largest known class of functions that can be implemented efficiently for CDQS. 

We can compare theorem \ref{thm:CDQSandfRefficiency} to the most efficient protocols known for $f$-routing. In \cite{cree2022code}, the authors proved an upper bound of $O(\log p\cdot SP_p(f))$ on communication and entanglement complexity of $f$-routing, exactly matching the result inherited from classical CDS.  
It is also interesting to note that the protocol given in \cite{cree2022code} that achieves this bound is a close quantum analogue of the CDS protocol devised in the classical setting in \cite{GERTNER2000592}: both protocols are based on storing the secret in a secret sharing scheme and sending or not sending shares based on the value of bits of the input. 

\section{New lower bounds}\label{sec:newlowerbounds}

\subsection{Linear lower bounds on CFE}

We have the following theorem from \cite{kawachi2021communication}. 
\begin{theorem}\textbf{[KN 2021]}\label{thm:PSQMlowerbound}
For a $(1-o(1))$ fraction of functions $f_n : \{0, 1\}^n \times \{0, 1\}^n \rightarrow \{0, 1\}$, the communication complexity of two-party PSQM protocols with shared randomness for $f_n$ is at least $3n-2\log n-O(1)$.
\end{theorem}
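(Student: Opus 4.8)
The plan is to establish this by a counting (dimension) argument: I would bound the number of Boolean functions $f:\{0,1\}^n\times\{0,1\}^n\to\{0,1\}$ that admit a PSQM protocol of total communication $c$, and show this number is a $o(1)$ fraction of the $2^{2^{2n}}$ total functions unless $c\ge 3n-2\log n-O(1)$. A uniformly random function then fails to have such a cheap protocol with probability $1-o(1)$, which is the statement. The first step is to extract, from an arbitrary protocol, a finite-dimensional real parametrization whose size is controlled by $c$, so that the set of realizable functions can be counted despite the protocol being a continuous object.

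For the structural reduction I would use the security guarantee of Definition \ref{def:PSQM}: in the exact case the message state $\rho_M(x,y)$ depends only on $f(x,y)$, so there are just two candidate states $\sigma_0,\sigma_1$ on the $2^c$-dimensional message system, and correctness forces them to be (nearly) distinguishable. Because the model uses shared classical randomness, $\rho_M(x,y)=\sum_r p_r\,\rho_0^{x,r}\otimes\rho_1^{y,r}$ is separable across $M_0\mid M_1$ for every input. I would first argue that the shared randomness can be taken of size polynomial in $n$ (by a Carath\'eodory/Newman-style reduction applied to the family of separable message states), so that Alice's and Bob's local preparations are captured by $D=\mathrm{poly}(n)\cdot 2^{O(c)}$ real parameters. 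Writing the decoder's acceptance probability as $f(x,y)=\tr(E_1\,\rho_M(x,y))$, each value $f(x,y)$ is then a fixed bounded-degree polynomial in these parameters.

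With $D$ real parameters and one polynomial per input pair, I would invoke a Warren/Milnor--Thom bound on the number of distinct $\{0,1\}$-patterns realizable by $2^{2n}$ bounded-degree polynomials in $D$ variables, which is at most $2^{O(D\,n)}$. Choosing $c$ just below the claimed threshold makes $D\,n=o(2^{2n})$, so the realizable functions form a vanishing fraction of all $2^{2^{2n}}$; tracking the constants in the exponent, together with the number of output-bit choices and the $\mathrm{poly}(n)$ bookkeeping on the message and randomness sizes, is exactly where the factor $3$ and the $-2\log n$ correction should appear, and I would reverse this inequality to read off the stated bound.

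The main obstacle I expect is controlling the dimension of the shared resource: quantum protocols are continuous and the shared randomness (or, in the entanglement-assisted variant, the entangled state) carries no a priori dimension bound, so the naive parameter count is infinite and the rank/bilinear structure only bounds things in terms of entanglement rather than communication. The crux is therefore the reduction that replaces the shared resource by one of $\mathrm{poly}(n)$ size without changing the computed function; once that is in place the semialgebraic counting goes through routinely. A secondary subtlety is pinning down the additive constants sharply enough to obtain $3n-2\log n-O(1)$ rather than a mere $\Omega(n)$, and checking that the argument degrades gracefully from the exact setting to the robust one.
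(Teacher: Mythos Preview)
This theorem is not proved in the paper; it is quoted from Kawachi--Nishimura~\cite{kawachi2021communication} and only used as a black box to derive Corollary~\ref{corollary:CFElowerbound}. So there is no ``paper's own proof'' to compare against, and your proposal should be read as an attempted reconstruction of the cited result.

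Your overall plan (parametrize protocols, count realizable sign patterns) is the right genre of argument, but as written it does not deliver the leading constant $3$. Two concrete issues:

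\emph{Parameter count.} The claim $D=\mathrm{poly}(n)\cdot 2^{O(c)}$ for Alice's and Bob's preparations omits the input dependence: for each of the $2^n$ values of $x$ (and each $r$) Alice prepares a separate state, so the honest count is $D\approx 2^{n}\cdot R\cdot 2^{2c_0}+2^{n}\cdot R\cdot 2^{2c_1}+2^{2c}$. Even granting $R=\mathrm{poly}(n)$, the Warren/Milnor--Thom bound $(O(M/D))^{D}$ with $M=2^{2n}$ ceases to be useful once $D\gtrsim M$, which happens already around $c\approx n$. Carrying the inequality through yields only $c\ge n-O(\log n)$, not $3n-2\log n-O(1)$. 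The assertion that ``tracking the constants \ldots\ is exactly where the factor $3$ and the $-2\log n$ correction should appear'' is not supported by the counting you set up.

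\emph{Use of privacy.} You invoke privacy only to say there are two message states $\sigma_0,\sigma_1$; this alone does not reduce the parameter count. The Kawachi--Nishimura argument exploits privacy more structurally: it forces both marginals $\rho_{M_0},\rho_{M_1}$ to be input-independent and then uses this (together with separability from shared randomness) to bound, in effect, how many distinct ``rows'' and ``columns'' of $f$ the local strategies can realize for a fixed pair $(\sigma_0,\sigma_1)$. That is where the extra additive $n$ beyond the $2n$ one gets from plain simultaneous message passing comes from. Your sketch does not contain this step.

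On the randomness reduction: you are right that this is the crux, and also right to be worried. Newman-style sampling does not preserve exact privacy, and Carath\'eodory bounds the separable rank of $\sigma_b$ but not the shared randomness of a protocol that must realize $\sigma_{f(x,y)}$ for \emph{every} $(x,y)$ with Alice seeing only $x$ and Bob only $y$. A correct proof needs a reduction that respects this product/marginal structure, not a generic convexity argument.
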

In theorem \ref{thm:CFEtoPSQM}, which shows CFE$\rightarrow$PSQM, we could replace shared entanglement in the CFE protocol and obtain a PSQM protocol that only uses shared randomness. 
In fact, the theorem gives that the resulting PSQM uses the same distributed resource state as the CFE. 
From this, theorem \ref{thm:PSQMlowerbound} above gives the following. 
\begin{corollary}\label{corollary:CFElowerbound}
For a $(1-o(1))$ fraction of functions $f_n : \{0, 1\}^n \times \{0, 1\}^n \rightarrow \{0, 1\}$, the communication complexity of coherent function evaluation protocols with shared randomness for $f_n$ is at least $3n-2\log n-O(1)$.
\end{corollary}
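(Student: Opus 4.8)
The plan is to prove the bound by a direct reduction, arguing the contrapositive: an economical CFE protocol would induce an equally economical PSQM protocol, contradicting Theorem~\ref{thm:PSQMlowerbound}. I would fix $f_n$ to lie in the $(1-o(1))$ fraction of functions for which Theorem~\ref{thm:PSQMlowerbound} applies, and suppose for contradiction that $f_n$ admits a CFE protocol with shared randomness whose communication cost is some $n_M < 3n - 2\log n - O(1)$.

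First I would feed this CFE protocol into the construction underlying Theorem~\ref{thm:CFEtoPSQM}. Although that theorem is phrased for CFE protocols built from EPR pairs, its construction is resource-agnostic: Alice applies the bottom-left operation of the CFE, Bob the bottom-right operation, and they forward to the referee precisely the systems that would have reached the top-right gate, while the shared resource is carried over untouched. Consequently, when the CFE protocol uses shared randomness in place of entanglement, the induced PSQM protocol uses the identical shared-randomness resource and at most $n_M$ message qubits, and it inherits $\epsilon$-correctness and $\sqrt{\epsilon}$-security. Since the PSQM communication is then bounded above by $n_M$, Theorem~\ref{thm:PSQMlowerbound} delivers the contradiction: it forces any shared-randomness PSQM protocol for $f_n$ to communicate at least $3n - 2\log n - O(1)$ qubits, whereas we have produced one with communication $\le n_M < 3n - 2\log n - O(1)$. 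This is impossible, so no CFE protocol with shared randomness can beat the claimed bound, and the fraction of hard functions transfers unchanged from Theorem~\ref{thm:PSQMlowerbound}.

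The step I expect to carry the actual content is verifying that the resource-preserving feature of Theorem~\ref{thm:CFEtoPSQM} survives the restriction to shared randomness, since this is exactly what licenses me to invoke the \emph{shared-randomness} PSQM bound rather than an (unavailable) shared-entanglement bound; here one uses that shared randomness is a separable classical resource state, so the Stinespring and decoupling arguments in the proof of Theorem~\ref{thm:CFEtoPSQM} go through verbatim. Everything else is bookkeeping: the PSQM communication is dominated by the CFE communication, so a lower bound on the former propagates upward to the latter. I would state the argument in the exact regime ($\epsilon = 0$, so the induced PSQM is perfectly correct and perfectly secure) to match the cleanest hypotheses of the KN bound, noting that any robust strengthening of Theorem~\ref{thm:PSQMlowerbound} would at once yield a robust version of the corollary through the very same reduction.
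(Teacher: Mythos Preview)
Your proposal is correct and follows essentially the same approach as the paper: both argue that the construction in Theorem~\ref{thm:CFEtoPSQM} carries the CFE resource state over unchanged, so a shared-randomness CFE yields a shared-randomness PSQM with no more communication, and then invoke Theorem~\ref{thm:PSQMlowerbound}. The paper states this in two sentences rather than phrasing it as a contrapositive, but the content is identical.
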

Note that we would expect no amount of shared random bits to suffice for a CFE, and instead for entangled states to be required. 
Thus the consequence of this theorem is very weak in the CFE context.  

\subsection{Linear lower bounds on CDQS}

We have the following theorem from \cite{bluhm2021position}.
\begin{theorem}\textbf{[BCS 2022, random function]}.
Let $n\geq 10$. Assume inputs $x,y\in \{0,1\}^n$ are chosen at random. Then there exists a function $f:X\times Y\rightarrow Z$ with $X, Y\in \{0,1\}^n$, $Z\in \{0,1\}$ such that, if the number $q$ of qubits each of the attackers controls satisfies
\begin{align}
    q\leq n/2 - 5
\end{align}
then the attackers are caught with probability at least $2\times 10^{-2}$. 
Moreover, a uniformly random function will have this property, except with exponentially small probability. 
\end{theorem}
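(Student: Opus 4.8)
The plan is to prove the statement by combining a monogamy-of-entanglement bound with a counting argument over random functions, working in the $f$-routing picture rather than directly with CDQS. By Theorem~\ref{thm:CDQSandfRouting} the two tasks are equivalent up to constant overheads, so it suffices to lower bound the failure probability of any one-round $f$-routing attack in which Alice and Bob share an entangled state of total dimension $2^{2q}$, each party holding $q$ qubits. Recall that a successful attack must route the secret $Q$ to Bob whenever $f(x,y)=1$ and to Alice whenever $f(x,y)=0$, and that here $f$ is public: the attackers design their local maps knowing $f$, while the probability is taken over the choice of $f$.

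First I would fix an attack strategy — a shared state together with input-dependent local channels and the single round of communication — and extract from it, for each input pair $(x,y)$, a recoverable side. The key structural ingredient is a monogamy statement obtained from the decoupling bound of Theorem~\ref{thm:newdecoupling}: if Bob recovers $Q$ with high fidelity on input $(x,y)$, then Alice's marginal is nearly decoupled from the reference purifying $Q$, so Alice cannot simultaneously recover it. The amount by which the attackers can nonetheless hedge both values of $f(x,y)$ is controlled by the shared dimension $2^{2q}$; a dimension count shows that when $q \le n/2 - 5$, so that $2^{2q} \le 2^{n-10}$ is small against the number of inputs, this hedging error stays small on essentially all of the $2^{2n}$ input pairs, forcing each input to have at most one genuinely recoverable side. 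This is where the threshold on $q$ enters.

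Next I would run the probabilistic argument. For a fixed strategy the recoverable-side pattern across inputs is fixed, whereas a uniformly random $f$ assigns each input pair an independent, uniform required side; a mismatch on an input forces the attackers to be caught there. Since there are $2^{2n}$ essentially independent contributions, a Chernoff bound shows that a fixed strategy achieves catching probability below $2\times 10^{-2}$ only on an $\exp(-c\,2^{2n})$ fraction of functions. Finally I would discretize the strategy space: an $\epsilon$-net over states and channels on a $2^{2q}$-dimensional system has cardinality at most $(C/\epsilon)^{2^{4q}}$, and since $4q \le 2n - 20$ this is at most $\exp(2^{2n-20}\log(C/\epsilon))$. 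A union bound over the net then shows that all strategies with $q \le n/2 - 5$ qubits are simultaneously defeated, except on an exponentially small fraction of functions, which yields both the existence of a hard $f$ and the uniformly-random-function claim.

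I expect the main obstacle to be making the monogamy/dimension tradeoff of the second step quantitative with constants sharp enough to produce the precise threshold $q \le n/2 - 5$ and catching probability $2\times 10^{-2}$, while keeping the catching bound robust to $\epsilon$-perturbations of the strategy. The net argument only closes if the per-strategy failure exponent $c\,2^{2n}$ dominates the net's log-cardinality $2^{2n-20}\log(C/\epsilon)$, i.e. if $c > 2^{-20}\log(C/\epsilon)$; the available slack is exactly the gap between $4q$ and $2n$, and spending it carefully is what fixes the additive constant $5$ in the bound $q \le n/2 - 5$.
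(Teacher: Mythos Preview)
This theorem is not proved in the present paper at all: it is quoted verbatim from \cite{bluhm2021position} and used as a black box to derive Corollaries~\ref{corollary:CDQSlowerbound} and \ref{corollary:PSQMlowerbound}. There is therefore no proof here to compare your proposal against. Note also that the statement is already phrased in the $f$-routing/position-verification picture (``attackers are caught''), so your invocation of Theorem~\ref{thm:CDQSandfRouting} to pass from CDQS to $f$-routing is unnecessary and suggests some confusion about what is being asserted.

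On the merits of your sketch as a standalone argument, the main gap is in the $\epsilon$-net step. An $f$-routing strategy is not just a single shared state plus two channels: each side applies an input-dependent family of channels $\{\mathbfcal{N}^x\}_{x\in\{0,1\}^n}$ and $\{\mathbfcal{N}^y\}_{y\in\{0,1\}^n}$, so a strategy is specified by roughly $2^{n+1}$ channels on $q$-qubit registers, not by two. The log-cardinality of a net over such families is of order $2^{n}\cdot 2^{O(q)}$, not $2^{O(q)}$, and this extra $2^n$ factor destroys the comparison you make between the per-strategy failure exponent $c\,2^{2n}$ and the net size. The actual proof in \cite{bluhm2021position} avoids any net over strategies: it reduces the attackers' success probability to a one-way classical communication problem and then uses a counting argument directly on functions, which is why the bound depends on $q$ through the communication the attackers can simulate rather than through the dimension of a strategy space. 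Your monogamy-via-decoupling intuition for why a single input cannot have both sides recoverable is sound, but it is not where the $q\le n/2-5$ threshold comes from.
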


Combining this result with theorem \ref{thm:CDQSandfRouting}, we find the following result for CDQS. 
\begin{corollary}\label{corollary:CDQSlowerbound}
There exists a function $f:X\times Y\rightarrow Z$ with $X, Y\in \{0,1\}^n$, $Z\in \{0,1\}$ such that a CDQS protocol which is $\epsilon$-correct and $\delta$-secure for $f$ with $\max\{ \epsilon,\sqrt{\delta} \}< 2\times 10^{-2}$ requires Alice and Bob have a quantum resource system consisting of at least $n/2-5$ qubits.  
Moreover, a uniformly random function will have this property, except with exponentially small probability. 
\end{corollary}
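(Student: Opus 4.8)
The plan is to combine the [BCS 2022] $f$-routing lower bound stated immediately above with the CDQS-to-$f$-routing direction of Theorem~\ref{thm:CDQSandfRouting}, arguing by contrapositive. The [BCS 2022] theorem exhibits a function $f$ (and shows a uniformly random $f$ works except with exponentially small probability) for which any strategy using a resource state of at most $n/2-5$ qubits per party is caught with probability at least $2\times 10^{-2}$. Rephrased in the language of Definition~\ref{def:frouting}, this says precisely that every $f$-routing protocol for this $f$ whose resource state has at most $n/2-5$ qubits has correctness error at least $2\times 10^{-2}$. This is the $f$-routing lower bound I would lift to CDQS.

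First I would assume, for contradiction, a CDQS protocol for this same $f$ that is $\epsilon$-correct and $\delta$-secure with $\max\{\epsilon,\sqrt{\delta}\}<2\times 10^{-2}$ and whose resource state has $n_E\leq n/2-5$ qubits. I would then apply the second implication of Theorem~\ref{thm:CDQSandfRouting}, which converts this into an $f$-routing protocol for $f$ that is $\max\{\epsilon,2\sqrt{\delta}\}$-correct and, crucially, uses the \emph{same} $n_E$-qubit resource state (only the message size grows, to $4(n_M+n_E)$, which is irrelevant for a bound on entanglement). Since this $f$-routing protocol uses at most $n/2-5$ qubits, the [BCS 2022] bound forces its correctness error to be at least $2\times 10^{-2}$, i.e.\ $\max\{\epsilon,2\sqrt{\delta}\}\geq 2\times 10^{-2}$, contradicting the assumed bounds on $\epsilon$ and $\delta$. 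Hence $n_E>n/2-5$, giving the claimed lower bound. The ``random function'' clause transfers directly, because the corresponding clause in [BCS 2022] guarantees a random $f$ enjoys the $f$-routing lower bound except with exponentially small probability, and the reduction preserves the function.

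The main obstacle I expect is bookkeeping of the constants and the qubit-counting convention, rather than any conceptual difficulty. The error-amplification in Theorem~\ref{thm:CDQSandfRouting} turns CDQS security $\delta$ into $f$-routing correctness error $2\sqrt{\delta}$, so one must verify that the hypothesis $\max\{\epsilon,\sqrt{\delta}\}<2\times 10^{-2}$ actually pushes $\max\{\epsilon,2\sqrt{\delta}\}$ below the [BCS 2022] threshold; if the factor of $2$ turns out to be tight one must either tighten the hypothesis on $\delta$ or absorb the slack into the constant in the BCS statement. Likewise, one should confirm that the per-party qubit count $q$ in [BCS 2022] is consistent with the total resource-state size $n_E$ appearing in Theorem~\ref{thm:CDQSandfRouting}, since the equivalence theorem preserves the resource state exactly and the lower bound should therefore transfer without any loss in the leading $n/2$ term.
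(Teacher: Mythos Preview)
Your approach is correct and identical to the paper's: the paper simply says the corollary follows by ``combining this result with theorem~\ref{thm:CDQSandfRouting}'', i.e.\ exactly the contrapositive reduction you outline, feeding a hypothetical small-entanglement CDQS protocol through the CDQS$\to f$-routing direction (which preserves the resource state) and contradicting the BCS~2022 bound. Your flagged concern about the factor of~$2$ in $2\sqrt{\delta}$ is legitimate and in fact reflects a slight looseness in the paper's own stated constant rather than any defect in your argument.
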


Now applying theorem \ref{thm:PSQMtoCDQS} we obtain the following linear lower bound on the dimension of the resource system in PSQM. 
Note that previously a $2n-O(\log n)$ linear lower bound on communication complexity was known, but no bound on shared entanglement was previously known. 
\begin{corollary}\label{corollary:PSQMlowerbound}
There exists a function $f:X\times Y\rightarrow Z$ with $X, Y\in \{0,1\}^n$, $Z\in \{0,1\}$ such that an $\epsilon$-correct and $\delta$-secure PSQM protocol for $f$ with $\max\{2 \epsilon,\sqrt{2\delta} \} < 2\times 10^{-2}$ requires Alice and Bob have a quantum resource system consisting of at least $n/2-5$ qubits.  
Moreover, a uniformly random function will have this property, except with exponentially small probability. 
\end{corollary}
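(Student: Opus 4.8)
The plan is to chain together the reduction from PSQM to CDQS (Theorem \ref{thm:PSQMtoCDQS}) with the linear lower bound on the CDQS resource cost (Corollary \ref{corollary:CDQSlowerbound}), taking $f$ to be exactly the random function supplied by the latter. First I would fix the function $f:X\times Y\rightarrow Z$ with $X,Y\in\{0,1\}^n$ and $Z\in\{0,1\}$ whose existence is guaranteed by Corollary \ref{corollary:CDQSlowerbound}. Since that corollary asserts its property holds for a uniformly random $f$ except with exponentially small probability, the identical $f$ will serve to establish the present statement, including its ``moreover'' clause.

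Next I would argue by reduction. Suppose there is an $\epsilon$-correct and $\delta$-secure PSQM protocol for this $f$ using an entangled resource system of $n_E$ qubits and $n_M$ message qubits. Applying Theorem \ref{thm:PSQMtoCDQS} produces a CDQS protocol for the same $f$, hiding one qubit, using the \emph{same} entangled resource state of $n_E$ qubits (and $n_M+1$ message qubits). The essential bookkeeping step is tracking the error parameters through this reduction: its proof runs the PSQM twice in parallel, so by the parallel-composition remark following Definition \ref{def:PSQM} the resulting CDQS is $2\epsilon$-correct and $2\delta$-secure. This is precisely the source of the doubled parameters appearing in the threshold below.

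With these parameters in hand, I would invoke Corollary \ref{corollary:CDQSlowerbound} with correctness $2\epsilon$ and security $2\delta$. That corollary states that any CDQS protocol for $f$ satisfying $\max\{2\epsilon,\sqrt{2\delta}\}<2\times 10^{-2}$ must use a resource system of at least $n/2-5$ qubits. Because the CDQS we constructed uses exactly the resource system of the original PSQM, namely $n_E$ qubits, we conclude $n_E\geq n/2-5$ whenever $\max\{2\epsilon,\sqrt{2\delta}\}<2\times 10^{-2}$. This is the claimed lower bound, and the comment on random functions carries over verbatim from Corollary \ref{corollary:CDQSlowerbound}.

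There is essentially no hard technical obstacle here; the content lies entirely in composing the two prior results correctly. The one point requiring care is the propagation of the error parameters, ensuring that the factor of two on both $\epsilon$ and $\delta$ from the two-fold parallel repetition inside Theorem \ref{thm:PSQMtoCDQS} is exactly what yields the $\max\{2\epsilon,\sqrt{2\delta}\}$ threshold in the statement, and confirming that the entangled-resource cost is \emph{preserved} (not inflated) by the reduction, so that a lower bound on the CDQS resource transfers directly into a lower bound on the PSQM resource.
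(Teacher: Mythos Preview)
Your proposal is correct and follows the same route as the paper: the corollary is obtained simply by composing Theorem~\ref{thm:PSQMtoCDQS} (PSQM $\to$ CDQS, preserving the entangled resource) with Corollary~\ref{corollary:CDQSlowerbound}, and the ``moreover'' clause carries over directly. Your careful tracking of the doubled parameters $2\epsilon,2\delta$ from the two parallel PSQM runs inside the reduction is exactly what produces the threshold $\max\{2\epsilon,\sqrt{2\delta}\}$ in the statement.
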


In the same paper \cite{bluhm2021position}, the author's prove the following bound for the inner product function. 

\begin{theorem}\textbf{[BCS 2022, Inner product]}
Let $n\geq 10$. Assume inputs $x,y\in \{0,1\}^n$ are chosen at random. Then if the number $q$ of qubits each of the attackers controls satisfies
\begin{align}
    q\leq \frac{1}{2}\log n - 5
\end{align}
then the attackers are caught with probability at least $2\times 10^{-2}$ when the function $f$ is chosen to be the inner product function. 
\end{theorem}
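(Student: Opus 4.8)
The plan is to prove the contrapositive bound directly for the inner product function under the uniform input distribution, extracting the threshold $2^q \gtrsim \sqrt{n}$ (equivalently $q \geq \tfrac12\log n - O(1)$) from the fact that the inner product is maximally sensitive in all $n$ coordinates on each side, together with a dimension count for the bounded quantum resource. First I would fix the most general one-round attack: a shared resource of local dimension $2^q$ on each side, an $x$-dependent channel on Alice's share together with the routed system $Q$ producing a message, a $y$-dependent channel on Bob's share producing a message, one simultaneous exchange, and decoders on each side. Using the equivalence of $f$-routing with CDQS (Theorem~\ref{thm:CDQSandfRouting}) and the decoupling theorem (Theorem~\ref{thm:newdecoupling}), I would reduce ``route correctly on a $(1-c)$ fraction of uniform inputs'' to a statement about how well a single bounded-dimension resource can simultaneously support the left-recovery requirement on the $f=0$ inputs and the right-recovery requirement on the $f=1$ inputs.

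Second, I would convert this into a dimension-counting inequality that sees the coordinate structure of the inner product. The key point is that as Bob's input $y$ varies, the correct routing behaviour must respond independently to each of the $n$ coordinates: flipping any single bit $y_i$ flips $f(x,y)=\langle x,y\rangle$ for exactly the half of the $x$'s with $x_i=1$, so under the uniform distribution the $n$ discrete derivatives of $f$ in Bob's coordinates are mutually orthogonal (they are distinct Fourier characters). Bob's $y$-dependent operation, however, acts on a share of dimension $2^q$ and emits a message of dimension $2^q$, so the family of effective operations it can realize lives in an operator space of dimension only $2^{2q}$. I would argue that to keep the routing error below a fixed constant the attackers' responses must align with a constant fraction of these $n$ orthogonal directions, which is impossible unless $2^{2q}\gtrsim n$; tracking the constants then sharpens $2^{2q}\gtrsim n$ to the explicit threshold $q\geq \tfrac12\log n - 5$ with caught-probability at least $2\times 10^{-2}$. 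The quantum factor of two in the exponent --- and hence the square root rather than a linear count --- enters precisely because a $2^q$-dimensional system carries a $2^{2q}$-dimensional operator space, the same mechanism that underlies dense coding.

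The step I expect to be the main obstacle is the second one: making the phrase ``the responses must align with a constant fraction of the $n$ orthogonal directions'' into a rigorous inequality that survives arbitrary shared entanglement and correlated use of the two messages. The danger is that a naive argument bounds the two recovery requirements ($f=0$ and $f=1$) separately and thereby discards the coupling between coordinates that the orthogonality is meant to exploit, leaving only a trivial bound. I would address this by passing to Stinespring dilations (as in the proof of Theorem~\ref{thm:CFEtoPSQM}), so that both recovery conditions become near-isometry constraints on a single global pure state, and then bound the simultaneous overlap by a Gram-matrix/rank argument against the $2^{2q}$-dimensional operator space. A secondary, purely quantitative obstacle is honest bookkeeping of the constants so that the clean asymptotic $2^{2q}\gtrsim n$ yields exactly the stated $\tfrac12\log n - 5$ and $2\times 10^{-2}$; this is routine but cannot be skipped since the statement is numerical.
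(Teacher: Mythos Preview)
The paper does not prove this theorem: it is quoted verbatim from the external reference \cite{bluhm2021position} (Bluhm--Christandl--Speelman) and used only as a black box to derive Corollaries~\ref{corollary:CDQSlogbound} and~\ref{corollary:PSQMlogbound}. There is therefore no ``paper's own proof'' to compare against; any proof you supply is necessarily an independent reconstruction of the cited result.

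On the merits of your sketch: the high-level scaling intuition --- that a $2^q$-dimensional local system supports only a $2^{2q}$-dimensional operator space, while the inner product forces the attack to distinguish $n$ orthogonal Fourier directions, yielding $2^{2q}\gtrsim n$ --- is morally on target and explains the $\tfrac12\log n$ exponent. However, the step you yourself flag as the main obstacle is a genuine gap, not just bookkeeping. The difficulty is that ``Bob's family of operations lives in a $2^{2q}$-dimensional space'' does not by itself bound how many \emph{routing outcomes} the joint protocol can realize: Alice's $x$-dependent operation and the cross-communication can, in principle, amplify the effective distinguishability well beyond what a naive rank count on Bob's side suggests. Passing to Stinespring dilations as in Theorem~\ref{thm:CFEtoPSQM} gives you a global pure state, but the Gram-matrix/rank argument you propose would then need to control overlaps of states in a space of dimension $2^{O(q)}$ on \emph{each} side simultaneously, and it is not clear that the orthogonality of the $n$ Fourier characters $\chi_{e_i}$ survives this lift without a more refined tool (e.g.\ a quantum version of the discrepancy or corruption bound for inner product, or a direct monogamy-style argument tailored to the routing structure). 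The original proof in \cite{bluhm2021position} proceeds via a reduction that controls the attackers' success probability through the specific structure of the position-verification game rather than a generic operator-space dimension count; if you want a self-contained argument you will likely need a comparably sharp tool, not just the CDQS/decoupling reduction plus linear algebra.
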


This immediately leads to two corollaries analogous to the above, but now with a logarithmic bound and a random function replaced with the inner product. 

\begin{corollary}\label{corollary:CDQSlogbound}
A CDQS protocol for the inner product function on strings of length $n$ which is $\epsilon$-correct and $\delta$-secure with $\max\{ \epsilon,\sqrt{\delta} \}< 2\times 10^{-2}$ requires Alice and Bob have a quantum resource system consisting of at least $\frac{1}{2}\log n - 5$ qubits.  
\end{corollary}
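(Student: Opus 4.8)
The plan is to obtain this as an immediate consequence of the inner-product lower bound of [BCS 2022] stated just above, together with the CDQS $\to f$-routing direction of Theorem~\ref{thm:CDQSandfRouting}, exactly paralleling the argument for Corollary~\ref{corollary:CDQSlowerbound}. The BCS inner-product theorem is really a lower bound on the attacker's resources in a position-verification attack, which is precisely an $f$-routing protocol: it says that if each party holds at most $q=\tfrac{1}{2}\log n-5$ qubits of the shared resource, then every $f$-routing strategy for the inner product is caught with probability at least $2\times10^{-2}$. Reading this contrapositively, any $f$-routing protocol for the inner product whose correctness error is below $2\times10^{-2}$ must use a resource system of more than $\tfrac{1}{2}\log n-5$ qubits per party.

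First I would assume, toward a contradiction, a CDQS protocol for the inner product that is $\epsilon$-correct and $\delta$-secure with $\max\{\epsilon,\sqrt{\delta}\}<2\times10^{-2}$ and whose entangled resource system has fewer than $\tfrac{1}{2}\log n-5$ qubits. Feeding this into the second half of Theorem~\ref{thm:CDQSandfRouting} produces an $f$-routing protocol for the same function that is $\max\{\epsilon,2\sqrt{\delta}\}$-correct and, crucially, reuses the \emph{identical} entangled resource state of $n_E$ qubits (the message blow-up to $4(n_M+n_E)$ qubits is irrelevant here, since the quantity being bounded is the shared resource, not the communication). Thus each party still controls the same $n_E<\tfrac{1}{2}\log n-5$ qubits as in the CDQS protocol.

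It remains to check that the induced $f$-routing protocol is accurate enough to invoke the BCS bound, i.e.\ that its correctness error $\max\{\epsilon,2\sqrt{\delta}\}$ sits below the $2\times10^{-2}$ threshold. This is the only genuinely delicate point: it is pure bookkeeping of how the error parameters transform under the reduction of Theorem~\ref{thm:CDQSandfRouting}, and it is what dictates the precise form of the hypothesis $\max\{\epsilon,\sqrt{\delta}\}<2\times10^{-2}$ in the statement. Once this is verified, the induced protocol would contradict the BCS lower bound, so no such small-resource CDQS protocol can exist, and every $\epsilon$-correct, $\delta$-secure CDQS protocol for the inner product in this error regime must use at least $\tfrac{1}{2}\log n-5$ qubits of shared quantum resource, as claimed.
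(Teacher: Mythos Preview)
Your approach is exactly the paper's: derive the corollary by feeding a hypothetical small-resource CDQS protocol through Theorem~\ref{thm:CDQSandfRouting} to obtain an $f$-routing protocol on the same resource state, then contradict the BCS inner-product bound. One caveat on the bookkeeping you flag as ``delicate'': the hypothesis $\max\{\epsilon,\sqrt{\delta}\}<2\times10^{-2}$ gives $2\sqrt{\delta}<4\times10^{-2}$, not $2\sqrt{\delta}<2\times10^{-2}$, so strictly speaking the induced $f$-routing error $\max\{\epsilon,2\sqrt{\delta}\}$ need not fall below the BCS threshold --- this is a minor constant-factor slip in the corollary's stated hypothesis (which should read $\max\{\epsilon,2\sqrt{\delta}\}<2\times10^{-2}$) rather than a defect in your argument.
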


\begin{corollary}\label{corollary:PSQMlogbound}
A PSQM protocol for the inner product function on strings of length $n$ which is $\epsilon$-correct and $\delta$-secure with $\max\{2 \epsilon,\sqrt{2\delta} \} < 2\times 10^{-2}$ requires Alice and Bob have a quantum resource system consisting of at least $\frac{1}{2}\log n-5$ qubits.  
\end{corollary}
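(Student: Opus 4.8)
The plan is to obtain this bound by composing the reduction from PSQM to CDQS given in Theorem~\ref{thm:PSQMtoCDQS} with the CDQS lower bound for the inner product function already established in Corollary~\ref{corollary:CDQSlogbound}. This is the exact analogue of the argument used in Corollary~\ref{corollary:PSQMlowerbound} for the random-function case, with the logarithmic bound replacing the linear one.

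Concretely, I would start from an arbitrary $\epsilon$-correct, $\delta$-secure PSQM protocol for the inner product function on $n$-bit strings, using an entangled resource system of $n_E$ qubits, and feed it into Theorem~\ref{thm:PSQMtoCDQS}. The key feature of that reduction for the present purpose is that it preserves the size of the entangled resource state: the induced CDQS protocol (which hides a single qubit) uses the same $n_E$-qubit resource system, enlarging only the message register. Consequently any lower bound on the resource-state dimension of the CDQS transfers verbatim to the PSQM.

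Next I would track how the error parameters degrade under the reduction. Since the reduction discloses a two-bit one-time-pad key by running two parallel copies of the PSQM, the parallel-composition remark following Definition~\ref{def:PSQM} shows the combined protocol, and hence the resulting CDQS, is $2\epsilon$-correct and $2\delta$-secure. Under the hypothesis $\max\{2\epsilon,\sqrt{2\delta}\} < 2\times 10^{-2}$, these parameters satisfy exactly the hypothesis $\max\{\epsilon',\sqrt{\delta'}\} < 2\times 10^{-2}$ of Corollary~\ref{corollary:CDQSlogbound} with $\epsilon'=2\epsilon$ and $\delta'=2\delta$. That corollary then forces $n_E \geq \frac{1}{2}\log n - 5$, and by the resource-preservation just noted this is a bound on the PSQM resource system, which is what we want.

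The only delicate point, and the step I would check most carefully, is the factor-of-two bookkeeping on the error parameters: one must confirm that running the PSQM twice (as required to hide the full one-time-pad key) degrades both $\epsilon$ and $\delta$ by at most a factor of two before Corollary~\ref{corollary:CDQSlogbound} is invoked. This is precisely why the hypothesis appears with $2\epsilon$ and $\sqrt{2\delta}$ rather than the $\epsilon$ and $\sqrt{\delta}$ appearing in the CDQS statement. Beyond this, no new quantum-information estimates are needed, since Theorem~\ref{thm:PSQMtoCDQS} and Corollary~\ref{corollary:CDQSlogbound} already carry all the analytic content (the latter in turn resting on the position-verification bound of \textbf{[BCS 2022]}).
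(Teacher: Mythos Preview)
Your proposal is correct and follows exactly the approach the paper intends: the corollary is obtained by feeding the PSQM protocol through Theorem~\ref{thm:PSQMtoCDQS} and then invoking Corollary~\ref{corollary:CDQSlogbound}, just as Corollary~\ref{corollary:PSQMlowerbound} was obtained from Corollary~\ref{corollary:CDQSlowerbound}. Your attention to the factor-of-two degradation in the parameters (coming from the two parallel PSQM runs needed to hide the two-bit one-time-pad key) is precisely the reason the hypothesis reads $\max\{2\epsilon,\sqrt{2\delta}\}$ rather than $\max\{\epsilon,\sqrt{\delta}\}$.
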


\section{New protocols}\label{sec:newprotocols}

\subsection{\texorpdfstring{$f$}{TEXT}-routing for problems outside P/poly}

As discussed in section \ref{sec:complexity}, all general constructions of CDS and PSM only efficiently implement functions inside of the class $(L^{\#L})_{(2)}$. 
As we now discuss, there is a special function which is believed to be outside of $P$ but which has an efficient CDS, CDQS, and $f$-routing protocol. 
This function is known to be at least as hard as the quadratic residuosity problem modulo a composite of unknown factorization.
This efficient protocol is inherited from remark \ref{thm:robustCDSfromSS}, which gives that efficient secret sharing schemes give efficient CDS protocols, along with a non-linear secret sharing scheme constructed in \cite{beimel2005power}.
A less strong, but also interesting construction of a function outside of $L^{\#L}$ with an efficient PSM, CDS, CDQS, and $f$-routing scheme is based on a DRE for the quadratic residuosity problem modulo a prime. 
This function is inside of $P$ but believed to be outside of $NC$. 

We give the two constructions below. 

\subsubsection*{$f$-routing for a problem outside $P$ from non-linear secret sharing}

We define the computational problem that will interest us here. 
\begin{definition}
    The \textbf{quadratic residuosity problem} $QR(u,v)$ is defined as follows. 
    \begin{itemize}
        \item \textbf{Input:} Two integers $u$ and $v$ of $n$ bits.
        \item \textbf{Output:} $1$ if $gcd(u,v)=1$ and there exists an $r$ such that $u=r^2$ mod $v$, and $0$ otherwise.
    \end{itemize}
\end{definition}
The quadratic residuosity function is believed to be outside of $P/poly$.
Its hardness is the basis of a well studied public-key cryptosystem \cite{goldwasser2019probabilistic}, and other cryptographic constructions \cite{cocks2001identity,blum1986simple}. 

For linear secret sharing schemes, it is known that efficient schemes have complexity in the class $Mod_{k}L$ when the scheme is defined over the field $\mathbb{Z}_k$ for $k$ prime. 
Thus the connection from secret sharing to CDS to CDQS and $f$-routing reproduces the known class of functions that can be efficiently implemented in the $f$-routing setting. 

Beyond linear schemes, \cite{beimel2005power} constructed secret sharing schemes with indicator functions that have complexity outside of $P$. 
Their scheme realizes the following access structure. 
\begin{definition}
    \textbf{NQR}$_n$ is an access structure on $n=4m$ parties for $m$ an integer. 
    We label the $4m$ shares by $W^b_i$ and $U^b_j$ with $b\in \{0,1\}$ and $j\in\{1,...,m\}$. 
    Given two bit strings\footnote{To do modular arithmetic with $w,u$ numbers in $\{ 0, \dots, 2^m-1 \}$ are associated to the strings $w,u$.} $w,u$ each of length $m$, we associate a subset $B_{w,u}$ of size $2m$ according to
    \begin{align}
        B_{w,u} = \{W^{w_i}_{i}: 1 \leq i\leq m \}\cup \{U^{u_i}_{i}: 1 \leq i\leq m \}.
    \end{align}
    The access structure \textbf{NQR}$_n$ is then defined by its minimal authorized sets, which are
    \begin{itemize}
        \item $\{W_i^0,W_i^1\}$ for any $1\leq i\leq m$
        \item $\{U_i^0,U_i^1\}$ for any $1\leq i\leq m$
        \item $B_{w,u}$ for $w,u$ such that $u\neq 0,1$ and $QR(w,u)=0$, so that $w$ is not a quadratic residue modulo $u$. 
        \item $B_{w,u=0}$ for $w\neq 1$.
    \end{itemize}
\end{definition}

Evaluating the indicator function for this access structure is at least as hard as solving the quadratic residuosity problem. 
To see this, notice that we can reduce computing $QR(u,w)$ to evaluating $f_I$ as follows. 
From the string $w$ of length $m$, define the two strings $\tilde{w}$, $\tilde{w}'$ according to 
\begin{align}
    \tilde{w}_i = \begin{cases}
		1 & \text{if} \,\,\,\, w_i=1 \\
            0 & \text{otherwise}
		 \end{cases}
\end{align}
\begin{align}
    \tilde{w}_i' = \begin{cases}
		1 & \text{if} \,\,\,\, w_i=0 \\
            0 & \text{otherwise}
		 \end{cases}
\end{align}
We similarly define $\tilde{u}$ and $\tilde{u}'$, and then notice that
\begin{align}
    QR(w,u) = \neg f_I(\tilde{w},\tilde{w}',\tilde{u},\tilde{u}')
\end{align}
Since computing $\tilde{w},\tilde{w}',\tilde{u},\tilde{u}'$ from $(w,u)$ can be done efficiently, computing $f_I$ is not harder than computing $QR(w,u)$. 

Despite the indicator function being of high complexity, there exists an efficient secret sharing scheme for the access structure \textbf{NQR}$_n$.
This is given in the following theorem. 
\begin{theorem}\label{thm:biemel}
    \textbf{[BI 2005]} There exists an $\epsilon$ secure and $\delta$ private secret sharing scheme for the access structure \textbf{NQR}$_{n}$ storing a single bit secret with security parameter $k$, and
    \begin{itemize}
        \item share size $O(k^2+km)$,
        \item correctness $\epsilon = 2^{-k}$,
        \item security\footnote{Note that our security definition in terms of a simulator is different from the definition in \cite{beimel2005power}, but it is straightforward to show their security definition with value $\delta$ implies ours with the same $\delta$.} $\delta= k/2^k$.
    \end{itemize}
\end{theorem}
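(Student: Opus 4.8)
The plan is to prove the theorem by explicitly reconstructing the Beimel--Ishai scheme, which decomposes the access structure \textbf{NQR}$_n$ into an ``easy'' part and a ``hard'' quadratic-residuosity part and realizes their disjunction by independent sharing. Observe that \textbf{NQR}$_n$ is the union (as a monotone access structure) of two sub-structures: the \emph{pair structure} $\mathbf{A}_{\mathrm{pair}}$, whose minimal authorized sets are the $\{W_i^0,W_i^1\}$ and $\{U_i^0,U_i^1\}$, and the \emph{selector structure} $\mathbf{A}_{\mathrm{QR}}$, whose minimal authorized sets are the $B_{w,u}$ coming from the $QR$ and $u=0$ conditions. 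A set is authorized for \textbf{NQR}$_n$ exactly when it is authorized for $\mathbf{A}_{\mathrm{pair}}$ or for $\mathbf{A}_{\mathrm{QR}}$, and I would realize this disjunction by sharing the same secret bit $s$ independently under a scheme $\mathbf{S}_{\mathrm{pair}}$ and a scheme $\mathbf{S}_{\mathrm{QR}}$, giving each party the concatenation of its two shares. Any set authorized for either sub-structure reconstructs $s$, while independence of the two pools of randomness guarantees that a set unauthorized for both learns nothing. Crucially, each $B_{w,u}$ contains exactly one of $\{W_i^0,W_i^1\}$ and one of $\{U_i^0,U_i^1\}$ for every $i$, so it is never authorized through $\mathbf{S}_{\mathrm{pair}}$; the two sub-structures therefore interact cleanly, and their minimal sets are precisely those listed in the definition of \textbf{NQR}$_n$.

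The pair structure is trivial and error-free: I would share $s$ additively on each pair, giving $W_i^0$ a fresh random bit $\rho_i$ and $W_i^1$ the value $s\oplus\rho_i$ (and likewise for the $U$ pairs), which is perfectly correct and perfectly private since any set missing at least one element of every pair sees only independent uniform bits. This contributes $O(1)$ share size per party and no error.

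The content of the theorem lies in constructing $\mathbf{S}_{\mathrm{QR}}$, a scheme whose selector set $B_{w,u}$ (which fixes the full strings $w,u$ through which element of each pair is present) recovers $s$ exactly when $w$ is a quadratic non-residue modulo $u$, with the $u=0$ case handled by a separate simple condition. My approach is to build a randomized encoding of the quadratic non-residuosity predicate that is \emph{decomposable} across the individual bits of $w$ and $u$, so that the $i$-th bit influences only the shares $W_i^{\cdot}$ and $U_i^{\cdot}$, and then pass from this decomposable encoding to a secret-sharing gadget for the selector structure, in the spirit of the secret-sharing/CDS correspondence of Theorem~\ref{thm:SSgivesCDS}. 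The encoding exploits the random self-reducibility of quadratic residues: for a random $r$ coprime to $u$, the value $w r^2 \bmod u$ is a uniformly random residue when $w$ is a residue and a uniformly random non-residue otherwise, so a disclosure keyed on this randomized value reveals $s$ only in the non-residue branch. Running $k$ independent copies of the randomized test is what introduces the $k$-bit security parameter.

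The main obstacle, and the origin of the error parameters, is that the modulus $u$ ranges over arbitrary $m$-bit integers, including composites for which the Jacobi symbol fails to decide residuosity and for which $\mathbb{Z}_u^*$ lacks a clean quadratic-character structure. Handling composite $u$ forces a statistical test in place of an exact algebraic identity, and I expect the analysis of this test to be precisely where the correctness error $\epsilon = 2^{-k}$ and privacy error $\delta = k/2^k$ enter: the $k$ repetitions drive the probability of misclassifying a residue exponentially small in $k$, while a union bound over the repetitions yields the linear-in-$k$ privacy loss. Propagating these per-gadget errors through the encoding-to-sharing conversion, and checking that unauthorized \emph{partial} selector sets (which do not fix $w,u$ completely) still reveal nothing about $s$, is the delicate step; the share size $O(k^2+km)$ should then follow from encoding the $m$-bit inputs with $O(k)$-bit field elements, giving the $km$ term, together with the $O(k^2)$ cost of the repeated randomized test.
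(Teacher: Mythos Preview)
The paper does not prove this theorem; immediately after the statement it writes ``We refer the reader to \cite{beimel2005power} for the construction of this scheme.'' There is therefore no in-paper argument to compare your proposal against---the theorem is imported wholesale from Beimel--Ishai.

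Assessed on its own terms, your high-level decomposition is sound and matches the Beimel--Ishai organization: split \textbf{NQR}$_n$ into the perfect pair sub-structure and the statistical selector sub-structure, realize their disjunction by independent sharing, and reduce privacy for the selector scheme to privacy against full sets $B_{w,u}$ with $QR(w,u)=1$ (since any set missing one element of every pair sits inside some such $B_{w,u}$). The passage ``bit-decomposable encoding $\Rightarrow$ selector secret sharing'' by handing $W_i^b$ the $b$-branch of the $i$-th piece is also the right reduction.

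There is, however, a real gap at the heart of the proposal. You invoke the self-reduction $w\mapsto wr^2\bmod u$, but this is not a dealer-side operation: in a secret-sharing scheme the dealer knows neither $w$ nor $u$, so cannot compute anything modulo $u$, and you never say what share---depending only on $(i,b,s,\text{randomness})$---each party $W_i^b,U_i^b$ actually receives so that their combination over $B_{w,u}$ realizes the test. The phrase ``a disclosure keyed on this randomized value'' hides precisely the step that makes the Beimel--Ishai construction nontrivial and nonlinear; the DRE for $QR$ that the present paper writes down later works only for a \emph{fixed} modulus known to the encoder, and extending it to an input-dependent modulus is the whole difficulty. Correspondingly, your error analysis is a placeholder: you anticipate $2^{-k}$ and $k/2^k$ from ``$k$ repetitions'' and ``a union bound,'' but without the concrete gadget there is no random experiment to repeat, and the attribution of the error to ``composite $u$ breaking the Jacobi symbol'' is speculation rather than analysis. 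A complete proof would have to exhibit the per-bit shares and the reconstruction map explicitly and then verify the stated share size $O(k^2+km)$ and error bounds directly; your outline identifies the right target but stops short of the construction that achieves it.
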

We refer the reader to \cite{beimel2005power} for the construction of this scheme. 

In the context of these distributed cryptographic tasks, we are interested in functions which remain of high complexity even when allowing for pre-processing. 
Thus we would like to construct functions outside of $P_{(2)}$, perhaps starting with NQR. 
For a function to be a likely candidate to be outside $P_{(2)}$, we need to ensure pre-processing is as unhelpful as possible. 
We suggest the following function
\begin{align}
    NQR_{4m,(2)}(x,y) = NQR_{4m}(x\oplus y)
\end{align}
Then, since Alice see's only $x$ and Bob see's only $y$, pre-processing seems no better than advice, so we expect that NQR$_{4m,(2)}$ is outside $P_{(2)}$ if we have that $NQR_{4m}$ is outside $P/poly$, as we commented above is believed. 
We state this as the following assumption. 
\begin{conjecture}\label{conj:hardness}
    The function NQR$_{4m,(2)}(x,y)$ is outside of $P_{(2)}$.
\end{conjecture}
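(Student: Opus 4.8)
The statement is a \emph{conjecture}, and I do not expect an unconditional proof: showing that a concrete function lies outside $P_{(2)}$ is a circuit-type lower bound, so the best one can realistically aim for is a \emph{conditional} argument reducing the claim to the standard belief that $NQR_{4m}\notin P/poly$ (ultimately the hardness of quadratic residuosity). The plan is therefore to argue the contrapositive in conditional form: assuming a collapse $NQR_{4m,(2)}\in P_{(2)}$, try to manufacture a $P/poly$ algorithm for $NQR_{4m}$, contradicting the assumed hardness.

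Concretely, suppose $NQR_{4m,(2)}\in P_{(2)}$. By definition there are pre-processing maps $\alpha,\beta$ of polynomial output length and a polynomial-time interaction part $F$ with $NQR_{4m}(x\oplus y)=F(\alpha(x),\beta(y))$ for all $x,y$. The first step is to specialize the second argument, say to $y=0$, yielding $NQR_{4m}(x)=F(\alpha(x),\beta(0))$. Here $\beta(0)$ is a single fixed string of polynomial length, so it can serve as non-uniform advice, and $F$ is polynomial-time; if one could additionally realize $x\mapsto\alpha(x)$ by polynomial-size circuits, the composition would place $NQR_{4m}$ in $P/poly$ and close the argument.

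The second step would be to show that only a computationally cheap part of $\alpha$ is actually needed. Since this is an XOR-function $g(x\oplus y)$ with $g=NQR_{4m}$, its communication matrix has identical rows $x,x'$ exactly when $x\oplus x'$ lies in the period subspace $V=\{v:g(z\oplus v)=g(z)\ \forall z\}$. Thus the only information $\alpha(x)$ must convey is the coset $x+V$, and the plan is to argue that recording this coset and feeding it to $F$ alongside $\beta(0)$ is realizable within $P/poly$ whenever $V$ is small and explicitly describable, which one expects to hold (indeed $V$ trivial) for $NQR_{4m}$.

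The main obstacle is precisely this second step: the definition of $P_{(2)}$ permits $\alpha,\beta$ to be \emph{arbitrary}, even non-computable, functions, so there is no a priori reason that the coset map $x\mapsto x+V$, or any equivalent slice of $\alpha$, is computed by small circuits, and the hardness of $g$ could in principle be absorbed into $\alpha$. Removing this gap, for example by proving that optimal pre-processing of an XOR-function can be taken inside $P/poly$, or by a Fourier-sparsity argument forbidding any polynomial-time combiner for a dense hard $g$, is exactly what is missing and is why the statement is recorded as a conjecture. Absent such an argument, the honest conclusion is to retain it as a hardness assumption justified by the heuristic that local pre-processing cannot undo the input-mixing performed by the XOR.
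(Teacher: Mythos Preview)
Your assessment is correct and, in fact, more careful than the paper's own treatment. The paper does not prove this statement at all: it is introduced explicitly as a conjecture, with the sole justification being the one-line heuristic that ``since Alice sees only $x$ and Bob sees only $y$, pre-processing seems no better than advice.'' There is no attempted reduction in the paper, conditional or otherwise.

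Your analysis goes beyond the paper by actually writing down the natural contrapositive reduction and pinpointing where it breaks: specializing $y=0$ gives $NQR_{4m}(x)=F(\alpha(x),\beta(0))$ with $F$ polynomial-time and $\beta(0)$ polynomial-length advice, but since the definition of $P_{(2)}$ places no complexity bound on $\alpha$, one cannot conclude that $x\mapsto\alpha(x)$ has small circuits. This is exactly the gap, and your observation that the hardness could in principle be absorbed into $\alpha$ is the right diagnosis. The paper's heuristic ``pre-processing seems no better than advice'' is precisely the unproven assertion that this absorption does not happen; you have made explicit what the paper leaves implicit.

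One minor comment: your second step, attempting to reduce $\alpha$ to a coset map via the period subspace $V$ of the XOR-function, is a reasonable idea but does not close the gap either, as you yourself note---even if $V$ is trivial so that $\alpha$ must be injective on the relevant information, injectivity says nothing about circuit size. So the honest conclusion you reach, that the statement should stand as a hardness assumption, matches the paper's stance.
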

Next, we claim that there is an efficient CDS scheme for NQR$_{4m,(2)}(x,y)$.
To see this, we have Alice, following remark \ref{thm:robustCDSfromSS}, prepare the scheme in theorem \ref{thm:biemel} with access structure NQR$_{4m}(z)$.
Then she takes share $S_i$ to be the secret which will be conditionally disclosed in a scheme on the XOR function with inputs $x_i$ and $y_i$. 
Correctly implementing each of these CDS schemes for the shares $S_i$ is easily seen to now correctly implement the larger scheme with access structure NQR$_{(2),4m}$.
This CDS can be performed using $O(|S_i|)$ randomness, so the total needed randomness is still given by the size of the secret sharing scheme. 

From this construction for CDS and theorem \ref{thm:CDStoCDQS} we obtain the following. 
\begin{corollary}\label{corolary:CDSandCDQSoutsideP}
    Assuming conjecture \ref{conj:hardness}, there exists a function outside of $P_{(2)}$ with $n$ input bits and hiding one (qu)bit for which CDS and CDQS can be performed $\epsilon=2^{-k}$ correctly and $\delta=k 2^{-k}$ securely with $O(k^2+kn)$ shared bits of randomness.
\end{corollary}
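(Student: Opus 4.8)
The plan is to assemble the corollary from three ingredients already in hand: the efficient non-linear secret sharing scheme for $\mathrm{NQR}_{4m}$ of Theorem \ref{thm:biemel}, a cheap conditional disclosure for the bitwise XOR predicate, and the passage from classical CDS to CDQS of Theorem \ref{thm:CDStoCDQS}. Concretely, I would build a CDS for $\mathrm{NQR}_{4m,(2)}(x,y)=\mathrm{NQR}_{4m}(x\oplus y)$ as follows. Alice alone runs the scheme of Theorem \ref{thm:biemel} for the access structure $\mathrm{NQR}_{4m}$ on a one-bit secret $s$, producing shares $S_1,\dots,S_{4m}$. For each coordinate $i$ she then conditionally discloses $S_i$ to the referee precisely when $x_i\oplus y_i=1$, using an independent CDS for the single-bit XOR predicate. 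Because the shares that reach the referee are exactly $\{S_i : (x\oplus y)_i=1\}$, and $\mathrm{NQR}_{4m}$ is an access structure (hence monotone) evaluated on the characteristic vector $z=x\oplus y$, the referee can reconstruct $s$ iff $\mathrm{NQR}_{4m}(x\oplus y)=1$, which is exactly the target condition.

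For the XOR building block I would use a perfect CDS: with a shared uniform mask $r$ of length $|S_i|$, let Alice send $r\oplus x_i S_i$ and Bob send $r\oplus y_i S_i$, so that the XOR of the two messages equals $(x_i\oplus y_i)S_i$; this reveals $S_i$ when $x_i\oplus y_i=1$ and is identically $0$ (independent of $S_i$) otherwise. Since only Alice holds $S_i$, I would first apply Remark \ref{remark:onesidedCDS} to convert this two-sided disclosure into a one-sided one at the cost of $O(|S_i|)$ extra shared randomness, keeping the scheme perfect (or $O(\delta)$-secure). Summing over the $4m$ coordinates, the randomness used is $O(\sum_i |S_i|)$, i.e. the total share size $O(k^2+km)=O(k^2+kn)$ since $n=4m$, matching the claimed cost.

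Correctness of the composite CDS inherits the $\epsilon=2^{-k}$ of Theorem \ref{thm:biemel} (the XOR layer being exact), and the function lies outside $P_{(2)}$ under Conjecture \ref{conj:hardness}. The step I expect to be the real work is \emph{security}: when $\mathrm{NQR}_{4m}(x\oplus y)=0$ I must exhibit a simulator for the referee's entire view. I would build it in two layers, first replacing the XOR messages for the hidden coordinates ($z_i=0$) by the XOR-CDS simulator, which is independent of the corresponding $S_i$, and then replacing the jointly revealed shares $\{S_i : z_i=1\}$, an unauthorized set, by the $\delta=k/2^k$ secret-sharing simulator of Theorem \ref{thm:biemel}, feeding its output through the exact XOR-CDS disclosure to generate the actual messages. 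The total leakage is then controlled by the secret-sharing security alone, giving $\delta=k2^{-k}$; the care needed is in checking that these two simulation steps commute and that no residual dependence on $s$ survives the composition. Finally, to obtain the CDQS claim I would run the one-bit CDS twice in parallel (using the parallel-composition lemma for CDS stated above) to hide the two one-time-pad key bits, and invoke Theorem \ref{thm:CDStoCDQS} with $n=1$, which upgrades this to a CDQS hiding one qubit at the standard cost of a square-root loss in the correctness parameter while leaving the resource scaling $O(k^2+kn)$ unchanged.
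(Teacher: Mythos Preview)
Your proposal is correct and follows essentially the same approach as the paper: prepare the Beimel--Ishai secret sharing scheme for $\mathrm{NQR}_{4m}$ on Alice's side, conditionally disclose each share $S_i$ via a cheap CDS for the single-bit XOR predicate on $(x_i,y_i)$, and then invoke Theorem~\ref{thm:CDStoCDQS} for the quantum statement. You supply more detail than the paper does---in particular an explicit XOR-CDS and a two-layer simulator argument for security---but the construction and the resource accounting are the same.
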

From theorem \ref{thm:CDQSandfRouting}, we then obtain the following consequence for $f$-routing.
\begin{corollary}\label{corollary:fRouteoutsideP}
    Assuming conjecture \ref{conj:hardness}, there exists a function outside of $P_{(2)}$ with $n$ input bits and hiding one (qu)bit for which f-routing can be performed $\epsilon=O(k2^{-k})$ correctly with $O(k^2+kn)$ shared entangled pairs.
\end{corollary}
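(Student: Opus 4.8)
The plan is to read off the $f$-routing protocol directly from the CDQS protocol already produced in Corollary~\ref{corolary:CDSandCDQSoutsideP}, using the CDQS $\to$ $f$-routing direction of Theorem~\ref{thm:CDQSandfRouting}. The function throughout is $NQR_{4m,(2)}$, which Conjecture~\ref{conj:hardness} places outside $P_{(2)}$, and the secret is a single qubit, so nothing new has to be constructed; the work is entirely in tracking the parameters through one more implication, and the function itself is unchanged so it remains outside $P_{(2)}$.

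First I would take the CDQS scheme of Corollary~\ref{corolary:CDSandCDQSoutsideP}, which is $\epsilon$-correct and $\delta$-secure with $\epsilon=2^{-k}$ and $\delta=k\,2^{-k}$ and consumes $O(k^2+kn)$ shared random bits (together with messages of the same order, inherited from the share size of the scheme of Theorem~\ref{thm:biemel}). Since Theorem~\ref{thm:CDQSandfRouting} is phrased for a CDQS whose resource is an $n_E$-qubit register, I would first observe that these shared random bits may be generated by measuring $n_E=O(k^2+kn)$ EPR pairs, so the protocol is equally a CDQS on an $n_E$-qubit resource state. Feeding this into the second half of Theorem~\ref{thm:CDQSandfRouting} then yields an $f$-routing protocol for the same function that routes the one qubit, reuses the same $O(k^2+kn)$ EPR pairs as resource, and is $\max\{\epsilon,2\sqrt{\delta}\}$-correct; the induced communication is $4(n_M+n_E)=O(k^2+kn)$, although the statement only records the entanglement.

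The one genuine subtlety, and the step I would flag as the main obstacle, is the $2\sqrt{\delta}$ term: the decoupling theorem (Theorem~\ref{thm:newdecoupling}) converts CDQS \emph{security} into $f$-routing \emph{correctness} only up to a square root, so at face value the correctness is $\max\{2^{-k},2\sqrt{k\,2^{-k}}\}=O(\sqrt{k}\,2^{-k/2})$, which is weaker than the claimed $O(k\,2^{-k})$. To recover the stated bound I would simply run the underlying secret-sharing scheme at security parameter $2k$ in place of $k$; this is a constant-factor change that leaves the resource count $O(k^2+kn)$, while the correctness becomes $2\sqrt{2k\,2^{-2k}}=O(\sqrt{k}\,2^{-k})\subseteq O(k\,2^{-k})$, matching the corollary. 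All remaining verifications are routine bookkeeping of counts already supplied by the cited results.
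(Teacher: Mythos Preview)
Your approach is exactly the paper's: invoke Corollary~\ref{corolary:CDSandCDQSoutsideP} and push through the CDQS $\to$ $f$-routing direction of Theorem~\ref{thm:CDQSandfRouting}. The paper's own proof is a single sentence to this effect and does not track parameters.

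You are in fact more careful than the paper here. The square-root loss you flag is real: a direct application yields correctness $\max\{2^{-k},2\sqrt{k\,2^{-k}}\}=O(\sqrt{k}\,2^{-k/2})$, which is \emph{not} $O(k\,2^{-k})$. Your fix of running the secret-sharing scheme at parameter $2k$ is the right move and costs only constant factors in the resource count, recovering the stated bound. The paper silently elides this point, so your write-up actually fills a small gap rather than introducing one.
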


\subsubsection*{$f$-routing for a problem outside NC from DRE}

Next, we construct a CDS scheme for a lower complexity function, albeit one that is still outside of $NC$, via a second route that begins with a decomposable randomized encoding.\footnote{Another route for a construction of an $f$-routing scheme for a problem outside NC but inside P, and which is exact, is to begin with (exact) non-linear secret sharing scheme given in \cite{beimel2005power}. We've chosen to use a route beginning with DRE to illustrate that interesting connection.}
The computational problem that will interest us is again quadratic residuosity, but this time where the modulus is taken over a prime.  
\begin{definition}
    The \textbf{quadratic residuosity problem over $\mathbb{Z}_p$} is defined as follows. 
    \begin{itemize}
        \item \textbf{Input:} An integer $a$ of $n$ bits and prime $p$, also of $n$ bits.
        \item \textbf{Output:} $1$ if $a=b^2$ mod $p$ for some $b$, and $0$ otherwise.
    \end{itemize}
\end{definition}
While this problem is not known to be inside of NC, but is easily placed inside of $P$ by recalling the Euler criterion, which states that
\begin{align}
    a^{\frac{p-1}{2}} = 1 \,\, \text{mod}\,\, p
\end{align}
if and only if $a$ is a square. 
Given this, modular exponentiation can be used to determine if $a$ is a square in polynomial time. 
Note that if we pose the same problem but with the prime $p$ replaced by a composite number the resulting problem is thought to be outside of $P$ \cite{Kaliski2011}. 
We focus on the prime case here. 
See \cite{beimel2005power} for a related discussion of the complexity of the quadratic residuosity functions considered over a field $\mathbb{Z}_p$ for $p$ prime. 

The quadratic residuosity problem over primes admits a simple randomized encoding scheme. 
In particular take
\begin{align}
    a \rightarrow r^2 a
\end{align}
for $r$ a randomly chosen integer in $\mathbb{Z}_p$. 
To understand why this is a randomized encoding, notice that $QR(a)=QR(r^2a)$, so we can compute the result of the function defined by the residuosity problem from the encoded output correctly, by (in this particular case) simply computing the original function, since $r^2a$ is a quadratic residue if $a$ is. 
Next, to show security one needs to show that if $a$ is a quadratic residue then $r^2a$ is randomly distributed over all those integers $\tilde{a}$ in $\mathbb{Z}_p$ which also are, and if $a$ is not a quadratic residue then $r^2$ is uniformly distributed over all those $\tilde{a}$ which are also not. 
This amounts to showing that if $a$ and $\tilde{a}$ both are (or both are not) quadratic residues then there is a unique $r$ such that $r^2 a=\tilde{a}$. 
This follows because the product of two residues is a residue, and the product of two non-residues is a residue.

We can further extend this to a decomposable randomized encoding as follows \cite{ball2020complexity}. 
Use the encoding
\begin{align}
    a_i \rightarrow a_i r^2 2^{i-1} + s_i =: y_i
\end{align}
for $s_i$, $r$ drawn independently and at random from $\mathbb{Z}_p$ for all but the last $s_i$, which we set so that $\sum_i s_i=0$. 
Then to decode use
\begin{align}
    QR\left( \sum_i y_i \right) = QR(r^2a) = QR(a).
\end{align}
To see security, we assume that $a$, $\tilde{a}$ are two integers with the same quadratic residue, and then show there is a choice of $r$, $s_i$ which make the bits of $a$ look like the bits of $\tilde{a}$. 
This means we need to solve
\begin{align}
    a_i 2^{i-1} = \tilde{a}_i2^{i-1}r^2 + s_i
\end{align}
subject also to $\sum_i s_i=0$. 
It is easy to see we can do this taking as an assumption the same thing we used in the earlier case, that if $a$, $\tilde{a}$ have the same quadratic residue then there is a $r$ such that $a=r^2\tilde{a}$. 

Given the existence of a decomposable randomized encoding scheme for the quadratic residue problem, we immediately obtain a PSM for this problem as noted above: Alice and Bob simply send the randomized encodings of their input bits to the referee, who runs the decoding procedure. 
This was observed already in \cite{ishai1997private}. 
This in turn implies an efficient CDS, CDQS, and $f$-routing scheme for $f(x)=QR(x)$. 
We collect these observations as the following remark. 
\begin{remark}\label{remark:outsideNC}
    Consider an $n$ bit string $z$ and split its bits into arbitrary subsets $S$ and $S^c$.
    Let the bits from $S$ define a string $z_S$ and a bit from $S^c$ define a string $z_{S^c}$. 
    Then the function $f(z_S,z_{S^c})=QR(z)$ has perfectly correct PSM and CDS schemes that uses poly$(n)$ bits of randomness. 
\end{remark}
We can also use theorems \ref{thm:CDStoCDQS} and \ref{thm:CDQSandfRouting} to upgrade these to quantum schemes, giving the following corollary. 
\begin{corollary}
    Consider an $n$ bit string $z$ and split its bits into arbitrary subsets $S$ and $S^c$.
    Let the bits from $S$ define a string $z_S$ and a bit from $S^c$ define a string $z_{S^c}$. 
    Then the function $f(z_S,z_{S^c})=QR(z)$ have perfectly correct PSQM and CDQS schemes that uses poly$(n)$ EPR pairs as a resource state. 
\end{corollary}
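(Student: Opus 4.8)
The plan is to obtain both quantum schemes directly from the classical schemes of Remark~\ref{remark:outsideNC} by invoking the classical-to-quantum implications proved earlier in the paper. Remark~\ref{remark:outsideNC} supplies, for any fixed splitting of the bits of $z$ into $z_S$ and $z_{S^c}$, a perfectly correct and (in the exact setting) perfectly secure PSM and CDS scheme for $f(z_S,z_{S^c})=QR(z)$ using $\mathrm{poly}(n)$ bits of shared randomness. Everything then reduces to pushing these through Theorems~\ref{thm:PSMgivesPSQM} and~\ref{thm:CDStoCDQS} and accounting for resources.

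For the PSQM scheme I would simply feed the PSM scheme into Theorem~\ref{thm:PSMgivesPSQM}. Since that PSM is $0$-correct and $0$-secure, the theorem produces a PSQM protocol that is $2\sqrt{0}=0$-correct and $0$-secure, i.e.\ perfectly correct and perfectly secure. The only resource consumed is the $\mathrm{poly}(n)$ bits of shared randomness, which can be generated by preparing $\mathrm{poly}(n)$ EPR pairs and measuring each half in the computational basis; hence the resulting PSQM uses $\mathrm{poly}(n)$ EPR pairs as a resource state.

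For the CDQS scheme I would first amplify the CDS of Remark~\ref{remark:outsideNC} to a CDS hiding $2n$ bits using the parallel-repetition lemma stated after Definition~\ref{def:CDS}: running the base CDS $2n$ times in parallel on fresh randomness yields a scheme that hides $2n$ bits, is $(2n)\cdot 0=0$-correct and $0$-secure, and still uses $\mathrm{poly}(n)$ randomness. I would then apply Theorem~\ref{thm:CDStoCDQS} with secret length $2n$ bits, which converts this classical CDS into a CDQS hiding $n$ qubits; as $\epsilon=\delta=0$ the resulting CDQS is $2\sqrt{0}=0$-correct and $0$-secure. The randomness used by Theorem~\ref{thm:CDStoCDQS} is classical and can again be drawn from $\mathrm{poly}(n)$ EPR pairs, so the CDQS uses $\mathrm{poly}(n)$ EPR pairs as claimed.

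The argument is essentially bookkeeping, so there is no single hard step; the only points requiring care are (i) checking that the exact setting makes every $2\sqrt{\epsilon}$ and every $m\epsilon$, $m\delta$ factor collapse to zero, so that perfect correctness and security survive both conversions, and (ii) matching the secret length, since the one-time pad on $n$ qubits used inside Theorem~\ref{thm:CDStoCDQS} requires a $2n$-bit key, which is exactly why the parallel-repetition amplification of the CDS secret is inserted before invoking the theorem. I expect this secret-length matching to be the most error-prone point, but it is handled cleanly by the parallel-repetition lemma.
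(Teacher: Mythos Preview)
Your proposal is correct and follows essentially the same route as the paper: the paper simply states that one upgrades the classical PSM and CDS of Remark~\ref{remark:outsideNC} to PSQM and CDQS via the classical-to-quantum implications (Theorems~\ref{thm:PSMgivesPSQM} and~\ref{thm:CDStoCDQS}), and you have filled in the bookkeeping details, including the parallel-repetition step to match the secret length and the conversion of shared randomness to EPR pairs. The only minor over-engineering is amplifying the CDS to hide $2n$ bits when hiding $2$ bits (for a one-qubit CDQS) would already suffice for the statement, but this does not affect correctness or the $\mathrm{poly}(n)$ resource claim.
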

Ideally, one would show that, assuming $QR(z)$ is outside of NC implies $f(z_{S},z_{S^c})$ is outside of NC$_{(2)}$ but we are unable to do so. 
Nonetheless, this constructs a second problem not known to be in NC$_{(2)}$ with an efficient $f$-routing scheme, although this one is inside of P. 
Another comment is that this problem has an exact scheme, while the construction in the previous section that is outside of P is approximate. 

\subsection{Efficient PSQM and CDQS for low T-depth circuits}

In \cite{speelman2015instantaneous}, a protocol is given that performs a unitary $\mathbf{U}_{AB}$ non-locally with entanglement cost that depends on the circuit decomposition of $\mathbf{U}_{AB}$. 
In particular we write $\mathbf{U}_{AB}$ in terms of a Clifford + T gate set, and obtain the following two upper bounds on entanglement cost. 

\begin{theorem}\label{thm:Tcountprotocol}
    Any $n$ qubit Clifford + $T$ quantum circuit $C$ which has at most $k$ $T$-gates can be implemented non-locally using $O(n2^k)$ EPR pairs. 
    Further, if $C$ has $T$-depth $d$ then there is a protocol to implement $C$ non-locally using $O((68 n)^d)$ EPR pairs. 
\end{theorem}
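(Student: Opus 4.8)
The plan is to build the protocol on \emph{gate teleportation}, exploiting the fact (recorded just before the theorem) that Clifford unitaries admit non-local implementations using only linear entanglement. The starting observation is how Pauli byproduct operators from teleportation propagate through a Clifford$+T$ circuit $C$. Conjugation keeps Clifford gates Pauli-valued, while a single $T$ gate acts by $T Z T^\dagger = Z$ and $T X T^\dagger = e^{-i\pi/4} S X$; the latter is again a \emph{Clifford}, a phase gate $S$ times a Pauli. Hence, if the input to $C$ carries an unknown Pauli correction $P$ coming from teleportation, applying $C$ yields $C P \ket{\psi} = (C P C^\dagger)\,C\ket{\psi}$, where $CPC^\dagger$ is a Clifford whose non-Pauli part is a product of $S$-gates, one for each $T$ gate whose target qubit carries a nontrivial $X$-component of the correction. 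The first step is therefore to set up this teleportation-based implementation and to record that the only obstruction to correctness is a residual Clifford $D$ determined by at most $k$ unknown bits: the $X$-correction bits at the inputs of the $k$ $T$-gates.

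For the $T$-count bound I would enumerate these corrections. The $k$ $T$-gates are applied locally by the party holding the teleported input, so the computation itself is free; the only non-Pauli obstruction is the residual Clifford $D_{\vec a}$ with $\vec a\in\{0,1\}^k$ the vector of problematic $X$-correction bits, of which there are at most $2^k$ possibilities. Because a single simultaneous round forbids the correcting party from both learning $\vec a$ and then redistributing the output, the protocol instead runs the $2^k$ candidate corrections coherently in parallel: for each guess $\vec a$ it prepares the branch in which $D_{\vec a}^\dagger$ has been applied and the $n$-qubit output redistributed to the two sides, costing $O(n)$ EPR pairs per branch since a fixed Clifford correction together with teleportation of $n$ qubits needs only linear entanglement. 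The communication round lets the teleporting party, who knows the true $\vec a$, select the correct branch. Summing over the $2^k$ branches gives the claimed $O(n\,2^k)$.

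For the $T$-depth bound I would instead process the circuit one $T$-depth layer at a time and recurse on $d$. The point is that a single layer of $T$-gates applied to a state with unknown Pauli frame produces a correction that is a product of \emph{diagonal} Clifford corrections ($S$- and $CZ$-type) conditioned on unknown bits, and such conditional diagonal Cliffords can be implemented by a teleportation gadget whose entanglement cost is only $O(n)$ in the number of qubits, rather than exponential in the width of the layer. Implementing one layer therefore inflates the entanglement of the remaining depth-$(d-1)$ computation by a multiplicative factor $O(n)$; unrolling the recursion over the $d$ layers yields a bound of the form $(cn)^d$, with the gadget accounting pinning the constant at $c=68$.

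The main obstacle is exactly this per-layer gadget: one must apply an entire layer of $T$-gates to a state whose Pauli frame is unknown and then undo the resulting bundle of conditional phase-gate corrections using only $O(n)$ extra entanglement --- crucially \emph{not} the $2^{\Theta(n)}$ one would pay by naively enumerating corrections as in the $T$-count argument --- while respecting the single-round, simultaneous-communication constraint so that the layers compose recursively. Verifying that the diagonal-Clifford corrections of one layer can be absorbed at linear cost, tracking the Pauli frame and phases cleanly through the recursion, and pinning down the $68n$ factor, is where the real work lies; the Clifford base case and the $T$-count enumeration above are comparatively routine.
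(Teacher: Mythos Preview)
The paper does not give its own proof of this theorem: it is quoted from \cite{speelman2015instantaneous} and used as a black box to derive the subsequent corollaries. So there is no in-paper argument to compare against.

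That said, your sketch is broadly faithful to Speelman's construction. A couple of points of calibration. For the $T$-count bound, the actual protocol is structured iteratively rather than as a single up-front enumeration of $2^k$ branches: one processes the $T$-gates one at a time, and at each $T$-gate the unknown correction bit forces a case split that doubles the entanglement budget for the remainder of the protocol, giving the $O(n2^k)$ recursion. Your ``run all $2^k$ Clifford corrections in parallel and select'' picture lands at the same bound but glosses over the issue that in the non-local setting neither party can apply $C$ locally in one shot (the inputs are split), so the bits $\vec a$ are not all revealed by a single initial teleportation; the gate-by-gate structure is what makes the bookkeeping go through within one simultaneous round. For the $T$-depth bound your diagnosis is exactly right: the crux is a gadget that handles a full layer of $T$-gates, together with the conditional diagonal-Clifford correction it generates, at cost linear in $n$ rather than exponential in the layer width, and the recursion over $d$ layers then yields $(cn)^d$. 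You correctly flag this gadget as where the real work is; pinning down $c=68$ requires tracking the auxiliary teleportations inside it.
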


From theorems \ref{thm:CDQSandfRefficiency} and \ref{thm:CFEtoPSQM}, these results lead to upper bounds on entanglement cost in implementing CDQS, $f$-routing, and PSQM. 
These upper bounds depend on the number of $T$ gates needed to compute $f(x,y)$ with a quantum circuit. 
We discuss the CDQS setting first. 

\begin{figure*}
    \centering
    \begin{quantikz}
    \lstick[2]{x} &\gate[4]{\mathbf{U}}& \qw & \qw & \\
    & & \qw & \qw & \\
    \lstick[2]{y}& & \qw & \qw & \\
    & & \ctrl{2} & \qw & \\
    A_0 & \qw & \swap{1} & \qw & \\
    A_1 & \qw & \swap{-1} & \qw & 
    \end{quantikz}
    \caption{The circuit implementing the unitary $\mathbf{U}'$. The unitary $\mathbf{U}$ computes $f(x,y)$ on its last wire with high fidelity. System $A_0$ is initially maximally entangled with reference $R$. At the end of the circuit, $R$ with be highly entangled with system $A_{f(x,y)}$.}
    \label{fig:UwithSWAP}
\end{figure*}
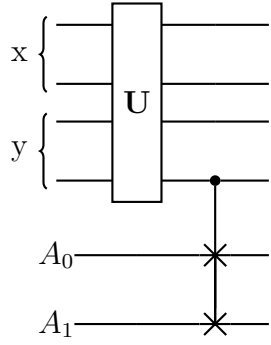

\begin{corollary}\label{corollary:Tdepthandf-route}
    Suppose that a function $f(x,y)$ can be evaluated with probability $1-\epsilon$ by a Clifford + T circuit with T-count k and T-depth d. 
    Then there is a $2\epsilon$-correct $f$-routing protocol for the function $f(x,y)$ that uses at most $O(n2^k)$ EPR pairs, or at most $O((68n)^{d+5})$ EPR pairs, whichever is smaller.
\end{corollary}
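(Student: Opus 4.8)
The plan is to reduce $f$-routing to the non-local implementation of a single augmented unitary, to which Theorem~\ref{thm:Tcountprotocol} applies directly. Starting from the Clifford $+$ $T$ circuit $\mathbf{U}$ that computes $f(x,y)$ on a designated output wire with success probability $1-\epsilon$, I would build the unitary $\mathbf{U}'$ depicted in figure~\ref{fig:UwithSWAP}: run $\mathbf{U}$, and then apply a controlled-SWAP exchanging the secret register $A_0$ with an ancilla register $A_1$, conditioned on the $f$-wire. The secret $Q=A_0$ is taken maximally entangled with a reference $R$, so that by construction, after $\mathbf{U}'$ the reference $R$ is (ideally) entangled with $A_{f(x,y)}$. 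Routing the final $A_0$ output to Alice's region and the final $A_1$ output to Bob's region then yields exactly the $f$-routing functionality: Alice recovers $Q$ when $f(x,y)=0$ and Bob recovers $Q$ when $f(x,y)=1$.

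I would then implement $\mathbf{U}'$ non-locally using Theorem~\ref{thm:Tcountprotocol}, sending the two output registers to the two sides as above; this directly produces an $f$-routing protocol. The resource cost follows by bounding the $T$-count and $T$-depth of $\mathbf{U}'$ in terms of those of $\mathbf{U}$. For a single-qubit secret the added controlled-SWAP is a Fredkin gate, with constant $T$-count and constant $T$-depth, so $T\text{-count}(\mathbf{U}') = k + O(1)$ and $T\text{-depth}(\mathbf{U}') = d + O(1)$. Feeding these into the two bounds of Theorem~\ref{thm:Tcountprotocol} gives $O(n\,2^{k+O(1)}) = O(n2^k)$ EPR pairs from the $T$-count bound and $O((68n)^{d+O(1)}) \le O((68n)^{d+5})$ from the $T$-depth bound, where the slack in the exponent absorbs the constant $T$-depth of the Fredkin gate; we then take whichever count is smaller.

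For correctness, the key point is to control the error introduced by the fact that the $f$-wire carries $f(x,y)$ only with probability $1-\epsilon$. Here I would dephase the control wire in the computational basis immediately before the controlled-SWAP. Since this dephasing commutes with a control that is diagonal in the computational basis, it does not affect the routed registers, but it lets me write the implemented channel as a convex combination $(1-\epsilon)\,\mathbfcal{N}_{\mathrm{ideal}} + \epsilon\,\mathbfcal{N}_{\mathrm{err}}$, where $\mathbfcal{N}_{\mathrm{ideal}}$ routes perfectly. Using that the diamond-norm distance between any two quantum channels is at most $2$, the implemented channel lies within $2\epsilon$ of the ideal routing channel in diamond norm, which is exactly $2\epsilon$-correctness of the $f$-routing protocol.

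I expect the main obstacle to be the correctness argument rather than the resource accounting: one must be careful that the $f$-wire may remain entangled with the work qubits of $\mathbf{U}$, so that the phrase ``success probability $1-\epsilon$'' has to be translated into a statement about the routed state after the entire non-local implementation (including the decoders on each side), not about an isolated measurement. The dephasing trick together with the convex-combination bound is what makes this translation clean and produces the claimed factor of $2\epsilon$; some additional care is needed to justify identifying the circuit width with $n$, absorbing any polynomial ancilla overhead of $\mathbf{U}$ into the implicit constants in the stated bounds.
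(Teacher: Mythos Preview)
Your proposal is correct and follows essentially the same construction as the paper: augment $\mathbf{U}$ with a controlled-SWAP on the $f$-wire acting on $A_0A_1$, implement the resulting $\mathbf{U}'$ non-locally via Theorem~\ref{thm:Tcountprotocol}, and route $A_0$ to Alice and $A_1$ to Bob. The resource accounting is identical (the paper uses the explicit constants $7$ $T$-gates in $5$ layers for the Fredkin gate, which you correctly subsume into $O(1)$).

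The one place you differ is the correctness argument. The paper writes out the post-$\mathbf{U}'$ state explicitly as $\alpha_0\ket{\Psi^+}_{RA_0}\ket{0}_{A_1}\ket{\psi_0}+\alpha_1\ket{\Psi^+}_{RA_1}\ket{0}_{A_0}\ket{\psi_1}$, traces out the unwanted registers, and lower-bounds the fidelity of the reduced state on $RA_{f(x,y)}$ with $\Psi^+$ by $|\alpha_{f(x,y)}|^2$. Your dephasing-then-convex-combination argument reaches the same conclusion but lands more directly on the diamond-norm statement required by Definition~\ref{def:frouting}: since the control wire is traced out anyway, the routed channel is exactly $(1-\epsilon')\,\mathbfcal{I}+\epsilon'\,\mathbfcal{N}_{\mathrm{err}}$ with $\epsilon'\le\epsilon$, whence $\|\cdot\|_\diamond\le 2\epsilon$. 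This is a clean and arguably tighter justification of the $2\epsilon$ factor than the paper's fidelity computation. Your flag about identifying the circuit width with the input size $n$ is also well-taken; the paper silently makes the same identification.
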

\begin{proof}\,
Let $\mathbf{U}$ be the unitary that computes $f$. 
Recall that this means a measurement in the computational basis on the first qubit of the output of $\mathbf{U}$ returns $f(x,y)$ with probability $1-\epsilon$. 
Writing the state
\begin{align}
    \mathbf{U}\ket{x,y} &= \sum_{i_2,...,i_n} \alpha_{0,i_2,...,i_n} \ket{0}\ket{i_2,...,i_n} + \sum_{i_2,...,i_n} \alpha_{1,i_2,...,i_n} \ket{1}\ket{i_2,...,i_n} \nonumber\\
    &= \alpha_0 \ket{\psi_0} + \alpha_1 \ket{\psi_1}
\end{align}
we have that $|\alpha_{f(x,y)}|^2 \geq 1 - \epsilon$. 

Now consider modifying the circuit that implements $\mathbf{U}$ by adding two ancilla qubits $A_0A_1$ and a controlled SWAP gate, where we control on the first output qubit of $\mathbf{U}$. 
We show this as a quantum circuit in figure \ref{fig:UwithSWAP}. 
The controlled SWAP gate can be implemented with 7 $T$-gates arranged in $5$ layers (see e.g. \cite{kim2018efficient}).
Thus our new circuit has $T$-depth at most $d+5$ and $T$-count at most $k+7$. 
We call the unitary $\mathbf{U}$ composed with the controlled swap gate $\mathbf{U}'$. 

To implement the $f$-routing protocol, we implement $\mathbf{U}'$ non-locally with $A_0X$ held on the left and $A_1Y$ held on the right. 
Initially $A_0$ is in the maximally entangled state with the reference system $R$. 
Because $\mathbf{U}'$ can be implemented with $k+7$ $T$-gates and $T$-depth of $d+5$, theorem \ref{thm:Tcountprotocol} gives that this takes no more than $O(n2^k)$ EPR pairs, or at most $O((68n)^{d+5})$ EPR pairs, whichever is smaller.
Then we claim that at the end of the protocol that the $A_{f(x,y)}$ system is nearly maximally entangled with $R$. 

To see this, notice that the state of the $RA_0A_1XY$ after the unitary plus controlled swap have been applied is
\begin{align}
    \alpha_0\ket{\Psi^+}_{RA_{0}}\ket{0}_{A_1}\ket{\psi_0}_{XY} + \alpha_1\ket{\Psi^+}_{RA_{1}}\ket{0}_{A_0}\ket{\psi_1}_{XY},
\end{align}
where $\psi_0$ and $\psi_1$ are orthogonal states as a consequence of unitarity of $\mathbf{U}$. 
We take the decoding channel to be the trace over the $A_{1-f(x,y)}XY$ system, followed by a relabelling of $A_{f(x,y)}$ as $Q$. 
This produces the state
\begin{align}
    \rho_{RQ} = |\alpha_{f(x,y)}|^2 \Psi^+_{RQ} + |\alpha_{1-f(x,y)}|^2 \frac{\mathcal{I}}{d_R}\otimes \ketbra{0}{0}_{Q}
\end{align}
Then we can calculate the fidelity
\begin{align}
    F(\Psi^+,\rho_{RQ}) \geq |\alpha_{f(x,y)}|^2 \geq 1-2\epsilon
\end{align}
so that the $f$-routing protocol is $2\epsilon$ correct, as needed. 
\end{proof}

From theorem \ref{thm:CDQSandfRefficiency}, this also leads to a similar upper bound for CDQS. 
\begin{corollary}\label{corollary:TdepthandCDQS}
    Suppose that a function $f(x,y)$ can be evaluated with probability $1-\epsilon$ by a Clifford + T circuit with T-count k and T-depth d. 
    Then there is a $2\epsilon$-correct and $\sqrt{\epsilon \log d_Q}$ secure CDQS protocol for the function $f(x,y)$ that uses at most $O(n2^k)$ EPR pairs, or at most $O((68n)^d n^{5})$ EPR pairs, whichever is smaller.
\end{corollary}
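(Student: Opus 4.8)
The plan is to obtain the CDQS protocol by composing the $f$-routing protocol of Corollary~\ref{corollary:Tdepthandf-route} with the $f$-routing-to-CDQS direction of the equivalence Theorem~\ref{thm:CDQSandfRouting}. First I would invoke Corollary~\ref{corollary:Tdepthandf-route} to produce a $2\epsilon$-correct $f$-routing protocol for $f$ whose entanglement cost is $O(n2^k)$ or $O((68n)^{d+5})$ EPR pairs, whichever is smaller. Since $(68n)^{d+5}=68^5 n^5 (68n)^d = O((68n)^d n^5)$, this matches the resource count claimed in the statement, and because the $f$-routing-to-CDQS direction of Theorem~\ref{thm:CDQSandfRouting} yields a CDQS protocol using the same entangled resource state and the same message size, the resource bounds carry over verbatim. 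So the content of the corollary reduces to reading off correctness and security from the conversion.

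For correctness I would note that in the $f$-routing protocol the decoder reconstructing the secret on the $f(x,y)=1$ branch becomes the CDQS decoder once the systems that the equivalence would send to Bob are instead redirected to the referee; hence the $2\epsilon$ (fidelity-deficit) correctness of Corollary~\ref{corollary:Tdepthandf-route} transfers directly to $2\epsilon$-correctness of the CDQS, using the fixed-input-size equivalence of correctness notions noted in the footnote to Definition~\ref{def:frouting}.

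The security bound is the real work and the place the $\log d_Q$ factor is born. Theorem~\ref{thm:CDQSandfRouting} derives CDQS security from the $f(x,y)=0$ recoverability of the $f$-routing protocol by applying the decoupling Theorem~\ref{thm:newdecoupling}: a good decoder on Alice's side (in diamond norm) forces the complementary channel seen by the referee to be close to a depolarizing channel, which is exactly the CDQS simulator. The obstacle is that Corollary~\ref{corollary:Tdepthandf-route} only certifies recoverability through the fidelity $F(\Psi^+,\rho_{RQ})\geq 1-2\epsilon$ on the maximally entangled input, an average-case statement, whereas Theorem~\ref{thm:newdecoupling} consumes a worst-case diamond-norm recovery guarantee. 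I would therefore convert the entanglement fidelity into the required diamond-norm decoupling estimate: Fuchs--van de Graaf turns $F\geq 1-2\epsilon$ into an $O(\sqrt{\epsilon})$ trace-distance bound on the Choi state, and a continuity (Alicki--Fannes--Winter) bound on the mutual information $I(R:E)$ introduces the $\log d_Q$ dependence, leaving $O(\sqrt{\epsilon\log d_Q})$ after feeding the result through the decoupling theorem. Controlling this average-to-worst-case passage cleanly, and in particular verifying that it produces exactly $\sqrt{\epsilon\log d_Q}$ rather than a larger dimension factor, is the main difficulty; everything else is bookkeeping inherited from chaining the two results.
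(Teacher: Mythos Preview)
Your plan is the same as the paper's: chain Corollary~\ref{corollary:Tdepthandf-route} with the $f$-routing-to-CDQS direction of Theorem~\ref{thm:CDQSandfRouting}. The paper's entire proof is the single sentence ``Immediate from theorem~\ref{thm:CDQSandfRouting},'' so on the level of strategy you match it exactly, and your resource-count bookkeeping is fine.

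Where you diverge is in the amount of care you take with the security parameter. The paper does not actually derive the $\sqrt{\epsilon\log d_Q}$ bound; it simply asserts the corollary and invokes the equivalence theorem. Taken literally, Theorem~\ref{thm:CDQSandfRouting} turns an $\epsilon'$-correct $f$-routing protocol (in the diamond-norm sense of Definition~\ref{def:frouting}) into a $2\sqrt{\epsilon'}$-secure CDQS, with no dimension factor. Your worry that Corollary~\ref{corollary:Tdepthandf-route} only certifies entanglement fidelity $F(\Psi^+,\rho_{RQ})\geq 1-2\epsilon$ rather than a diamond-norm bound is legitimate, and your proposed route through Alicki--Fannes--Winter to manufacture the $\log d_Q$ is a reasonable attempt to explain where that factor might come from. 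But the paper itself never carries out this average-to-worst-case conversion; it relies on the footnote to Definition~\ref{def:frouting} asserting that for fixed input size the two correctness notions are equivalent, and leaves the precise constants (and the appearance of $\log d_Q$) unexplained. So you are being more careful than the paper, not less, and the ``main difficulty'' you identify is one the paper simply does not address.
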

\begin{proof}\,
    Immediate from theorem \ref{thm:CDQSandfRouting}. 
\end{proof}

Next, we apply theorem \ref{thm:Tcountprotocol} to give a class of functions for which PSQM can be efficiently performed. 
\begin{corollary}\label{corollary:PSQMandTdepth}
    Suppose that the isometry 
    \begin{align}
        \mathbf{V}_f = \sum_{xy} \ket{xy}_{Z'} \ket{f_{xy}}_{Z} \bra{x}_X\bra{y}_Y
    \end{align}
    can be implemented with closeness $\epsilon$ (according to the diamond norm distance) with a Clifford + T circuit with T-count k and T-depth d. 
    Then there exists a PSQM protocol for $f(x,y)$ which is $\epsilon$-correct and $\sqrt{\epsilon}$-secure that uses at most $O(n2^k)$ EPR pairs, or at most $O((68n)^d n^{5})$ EPR pairs, whichever is smaller.
\end{corollary}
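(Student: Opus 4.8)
The plan is to derive this corollary as a direct composition of two results already in hand: Theorem \ref{thm:Tcountprotocol}, which gives entanglement-efficient non-local implementations of low $T$-count and low $T$-depth circuits, and Theorem \ref{thm:CFEtoPSQM}, which turns a good CFE protocol into a good PSQM protocol. The key observation is that the isometry $\mathbf{V}_f$ appearing in the hypothesis is exactly the target isometry of a coherent function evaluation (Definition \ref{def:CFE}), so a non-local implementation of a circuit for $\mathbf{V}_f$ is nothing but a CFE protocol.

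First I would take the hypothesized Clifford $+$ $T$ circuit $C$, with $T$-count $k$ and $T$-depth $d$, whose implemented channel is within $\epsilon$ of $\mathbf{V}_f$ in diamond norm, and pass to a unitary dilation $\mathbf{U}$ acting on the input registers $XY$ together with fresh ancillas for the output registers $Z, Z'$ and any workspace. Padding with identity wires changes neither the $T$-count nor the $T$-depth, so $\mathbf{U}$ still has $T$-count $k$ and $T$-depth $d$ while acting on $O(n)$ qubits.

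Next I would apply Theorem \ref{thm:Tcountprotocol} to $\mathbf{U}$, implementing it non-locally in the form of figure \ref{fig:non-localcomputation} using $O(n 2^k)$ EPR pairs (from the $T$-count bound) or $O((68n)^d n^5)$ EPR pairs (from the $T$-depth bound, absorbing the $O(n)$ qubit count and ancilla overhead into the polynomial slack), whichever is smaller. Sending the $X$ input into the bottom-left block, the $Y$ input into the bottom-right block, and designating $Z$ as the top-right output and $Z'$ as the top-left output realizes a CFE protocol whose implemented channel is $\epsilon$-close to $\mathbf{V}_f$, i.e.\ an $\epsilon$-correct CFE protocol. Feeding this into Theorem \ref{thm:CFEtoPSQM} then produces an $\epsilon$-correct and $\sqrt{\epsilon}$-secure PSQM protocol for $f(x,y)$ that uses the same resource state, and hence the same EPR-pair count, and no more message qubits, which is exactly the claim.

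The step requiring the most care is the entanglement bookkeeping: I must confirm that dilating the isometry $\mathbf{V}_f$ to a unitary only increases the qubit count to $O(n)$ (so that the $T$-depth bound remains of the form $O((68n)^d n^5)$) and that this padding leaves the $T$-count and $T$-depth unchanged, and that the resource state carried over by Theorem \ref{thm:CFEtoPSQM} is identical to the one produced by Theorem \ref{thm:Tcountprotocol}. Everything else is an immediate chaining of the two cited theorems, with no new quantum-information estimates needed.
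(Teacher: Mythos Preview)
Your proposal is correct and takes essentially the same approach as the paper: the paper's own proof is a single sentence stating that the result follows immediately from Theorems \ref{thm:CFEtoPSQM} and \ref{thm:Tcountprotocol} taken together, which is exactly the composition you spell out. Your additional care about dilating to a unitary and tracking the qubit count is reasonable but not strictly needed, since the hypothesis already hands you a Clifford$+T$ circuit implementing $\mathbf{V}_f$ on $O(n)$ qubits.
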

\begin{proof}\,
    Follows immediately from theorems \ref{thm:CFEtoPSQM} and \ref{thm:Tcountprotocol} taken together. 
\end{proof}

\subsection{Sub-exponential protocols for \texorpdfstring{$f$}{TEXT}-routing on arbitrary functions}

In a surprising breakthrough, \cite{liu2017conditional} showed that CDS can be performed for any function using sub-exponential communication and randomness. 
We summarize their result as the following theorem. 

\begin{theorem}\label{thm:LVW} \textbf{[LVW 2017]}
    Every function $f:\{0,1\}^{n}\times \{0,1\}^{n}\rightarrow \{0,1\}$ has a CDS protocol for single bit secrets using $2^{O(\sqrt{n\log n})}$ bits of randomness and $2^{O(\sqrt{n\log n})}$ bits of communication. 
\end{theorem}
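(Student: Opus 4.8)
The statement is the external result of LVW, so the plan is to reconstruct their \emph{non-linear reconstruction} strategy rather than anything derivable from the secret-sharing machinery already recalled: the protocols of Theorem~\ref{thm:CDSfromsecretsharing} give communication scaling with the (monotone) span-program size of $f$, which for a generic predicate is $2^{\Theta(n)}$, so breaking the exponential barrier requires a genuinely different idea. The crucial move is to allow the referee to run a \emph{non-linear} decoder, which permits a far more compact representation of the predicate than the linear reconstruction implicit in linear secret sharing. First I would view the truth table of $f$ as an $N\times N$ matrix with $N=2^n$, reduce to a one-sided secret via Remark~\ref{remark:onesidedCDS}, and reduce the general predicate to a structured ``inner-product'' predicate.

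The engine is a \emph{matching vector family}: maps $x\mapsto u_x$ and $y\mapsto v_y$ into $\mathbb{Z}_m^h$, where $m=p_1p_2\cdots$ is a product of distinct small primes, chosen so that the value $\langle u_x,v_y\rangle \bmod m$ lands in a distinguished subset $S\subseteq\mathbb{Z}_m$ exactly when $f(x,y)=1$. The Grolmusz-type construction supplies such families in which the dimension $h$ is sub-polynomial in the number of vectors, and this is precisely what lets the final cost beat $2^{\Omega(n)}$. Given such an embedding, I would build a CDS for the scalar predicate ``$\langle u_x,v_y\rangle\in S$'', lift it coordinate-by-coordinate, and assemble the pieces using the parallel composition of CDS protocols established earlier, masking $s$ with fresh shared randomness so that an unauthorized referee sees a distribution independent of $s$.

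The sub-exponential bound then comes from a balancing argument. Representing $f$ through a degree-$d$ polynomial over $\mathbb{Z}_m$ in the input bits, the referee's non-linear decoder need only evaluate monomials of degree at most $d$, so the communication is of the rough form
\begin{equation}
\Big(\textstyle\sum_{i\le d}\binom{2n}{i}\Big)\cdot 2^{O(n/d)},
\end{equation}
where the first factor counts the relevant monomials and the second is the matching-vector cost paid per monomial. Choosing $d\approx\sqrt{n/\log n}$ balances the two factors and yields $2^{O(\sqrt{n\log n})}$ for both the randomness and the communication, as required.

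The hard part is not this optimization but constructing the non-linear reconstruction primitive together with a correct \emph{security} proof: one must show that whenever $f(x,y)=0$ the joint message distribution is statistically independent of $s$, which hinges on the matching property of the vector family (so that out-of-set inner products carry no information about the masked secret) combined with a careful use of randomness ensuring the decoder succeeds only on authorized inputs. Reconciling this security requirement with correct non-linear decodability, while keeping the dimension $h$ sub-polynomial, is the technical core of the argument.
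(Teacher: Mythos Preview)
The paper does not prove this theorem; it is quoted as an external result of Liu--Vaikuntanathan--Wee and accompanied only by a short contextual sketch. That sketch emphasizes two steps: (i) a reduction of CDS for an arbitrary $f$ to CDS for the single function $INDEX(x,D_y)=D_y[x]$, where Bob precomputes the length-$2^n$ truth-table column $D_y$, and (ii) building CDS for $INDEX$ from large matching-vector families over $\mathbb{Z}_m^{\ell}$ (Grolmusz), whose defining property is $\langle u_i,v_i\rangle=0$ and $\langle u_i,v_j\rangle\in S$ for $i\neq j$.

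Your write-up captures the right ingredients (non-linear reconstruction, matching vectors, Grolmusz, a $d\approx\sqrt{n/\log n}$ balancing), but there is a genuine conceptual gap in how you deploy them. You ask for maps $x\mapsto u_x$, $y\mapsto v_y$ with $\langle u_x,v_y\rangle\in S$ \emph{iff} $f(x,y)=1$. That is not what a matching-vector family provides, and no such object of sub-exponential dimension is known for arbitrary $f$; if it existed, you would have embedded the full truth table of $f$ into the family, which is exactly the exponential cost you are trying to avoid. The actual construction sidesteps this by first reducing to $INDEX$: the matching-vector family is a single, $f$-independent object indexed by $[2^n]$, and the dependence on $f$ enters only through Bob's preprocessing $y\mapsto D_y$. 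The CDS for $INDEX$ then uses the diagonal/off-diagonal structure of the family (not an $f$-dependent inner-product predicate) together with a secret-sharing-style masking; the sub-exponential bound is the dimension $h=2^{O(\sqrt{n\log n})}$ of the Grolmusz family for $N=2^n$ vectors. Your balancing formula is essentially the parameter optimization inside the Grolmusz construction of the family, not a separate layer of the CDS protocol, so it belongs to step~(ii) rather than to an outer ``per-monomial'' CDS composition. Restoring the $INDEX$ reduction and the correct matching-vector axioms fixes the argument and brings it in line with the paper's sketch.
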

Combining this with theorem \ref{thm:CDStoCDQS} we obtain the following corollary. 
\begin{corollary} \label{thm:subexpCDQS}
    There exist CDQS protocols with perfect correctness and secrecy for every function $f:\{0,1\}^{n}\times \{0,1\}^{n}\rightarrow \{0,1\}$ using $2^{O(\sqrt{n\log n})}$ bits of randomness and $2^{O(\sqrt{n\log n})}$ bits of communication, along with a single qubit of communication. 
\end{corollary}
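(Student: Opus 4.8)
The plan is to combine the LVW sub-exponential CDS result (Theorem~\ref{thm:LVW}) with our reduction from classical CDS to quantum CDS (Theorem~\ref{thm:CDStoCDQS}). The statement to prove, Corollary~\ref{thm:subexpCDQS}, asserts a CDQS protocol with \emph{perfect} correctness and secrecy for every $f$ using $2^{O(\sqrt{n\log n})}$ randomness and communication, plus a single qubit of communication. The key observation is that Theorem~\ref{thm:CDStoCDQS} takes a classical CDS hiding $2\log d_Q$ bits and produces a CDQS hiding $\log d_Q$ qubits; for a single qubit secret we need a classical CDS hiding $2$ bits. Since LVW gives a single-bit CDS, I would first use the parallel-repetition lemma (the Lemma following Remark~\ref{remark:onesidedCDS}) to obtain a CDS hiding $2$ bits by running two copies in parallel.

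The steps, in order, would be as follows. First, invoke Theorem~\ref{thm:LVW} to obtain, for the given $f$, a CDS protocol hiding a single bit with perfect correctness and secrecy ($\epsilon=\delta=0$) using $2^{O(\sqrt{n\log n})}$ bits of both randomness and communication. Second, run two independent instances of this CDS in parallel to hide the two bits of the one-time-pad key; by the parallel-composition lemma the resulting CDS hides $2$ bits, remains perfectly correct and secure (since $2\cdot 0 = 0$), and its randomness and communication costs merely double, so they are still $2^{O(\sqrt{n\log n})}$. Third, feed this $2$-bit-hiding CDS into Theorem~\ref{thm:CDStoCDQS} with $d_Q=2$ (a single qubit): the theorem outputs a CDQS hiding one qubit that is $2\sqrt{\epsilon}=0$-correct and $\delta=0$-secure, using the same $n_M = 2^{O(\sqrt{n\log n})}$ classical message bits plus one qubit of message, and $n_E = 2^{O(\sqrt{n\log n})}$ classical bits of randomness.

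The only subtlety — and the place I would be most careful — is verifying that \emph{perfect} correctness and secrecy survive both reductions rather than degrading. For the parallel step this is immediate because the error bounds $m\epsilon$ and $m\delta$ vanish when $\epsilon=\delta=0$. For the CDS-to-CDQS step, the correctness bound in Theorem~\ref{thm:CDStoCDQS} is $2\sqrt{\epsilon}$, which is zero at $\epsilon=0$, and the security bound is exactly $\delta=0$; tracing through the diamond-norm calculation in that proof, every inequality is saturated to $0$ when the underlying classical quantities are exact, so perfection is preserved. The remaining bookkeeping is purely to confirm the resource counts: the randomness stays $2^{O(\sqrt{n\log n})}$, the classical communication stays $2^{O(\sqrt{n\log n})}$, and the quantum communication is the single qubit carrying the one-time-padded secret $Q$. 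I do not anticipate a genuine obstacle here — the corollary is a direct composition of two already-established theorems — so the main task is simply to state the chaining cleanly and confirm that the constant-factor blowups from parallel repetition and from purification are absorbed into the $O(\cdot)$ in the exponent.
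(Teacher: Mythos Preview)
Your proposal is correct and follows essentially the same approach as the paper: invoke Theorem~\ref{thm:LVW} for a perfect single-bit CDS, run two copies in parallel to hide a two-bit one-time-pad key, and then apply Theorem~\ref{thm:CDStoCDQS} to obtain a perfect CDQS for a single qubit with the stated resource bounds. The paper's own proof is the same three-line argument, and your extra care in checking that $\epsilon=\delta=0$ propagates through both reductions is appropriate.
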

\begin{proof}\,
    Recall that CDS protocols for secrets $s_1$, $s_2$ can be run in parallel if using fresh randomness for each instance (see the paragraph after remark \ref{remark:onesidedCDS}). 
    Thus we can create a CDS hiding two bits of secret while still using $2^{O(\sqrt{n\log n})}$ randomness and communication, and then apply theorem \ref{thm:CDStoCDQS} to see that we can perform CDQS on a single qubit. 
\end{proof}

From here, theorem \ref{thm:CDQSandfRouting} leads to the following. 
\begin{corollary}\label{corollary:subexpfroute}
    There exists a perfectly correct $f$-routing protocol for every function $f:\{0,1\}^{n}\times \{0,1\}^{n}\rightarrow \{0,1\}$ using $2^{O(\sqrt{n\log n})}$ qubits of resource system and $2^{O(\sqrt{n\log n})}$ qubits of message. 
\end{corollary}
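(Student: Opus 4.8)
The plan is to obtain the $f$-routing protocol directly from the sub-exponential CDQS protocol of Corollary \ref{thm:subexpCDQS}, by feeding it into the CDQS $\to$ $f$-routing direction of Theorem \ref{thm:CDQSandfRouting}. First I would recall that Corollary \ref{thm:subexpCDQS} already supplies, for an arbitrary $f:\{0,1\}^n\times\{0,1\}^n\to\{0,1\}$, a perfectly correct ($\epsilon=0$) and perfectly secure ($\delta=0$) CDQS protocol hiding a single qubit, using $2^{O(\sqrt{n\log n})}$ bits of shared randomness and $2^{O(\sqrt{n\log n})}$ bits of classical communication, plus one qubit of communication. The only conceptual point beyond citation is that $f$-routing is defined with a quantum (entangled) resource rather than shared randomness, so I would note that each shared random bit can be produced by measuring one EPR pair in the computational basis. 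This lets the CDQS be regarded as using an entangled resource state of $n_E = 2^{O(\sqrt{n\log n})}$ qubits, with message registers of total size $n_M = 2^{O(\sqrt{n\log n})}$ qubits (the classical message bits carried coherently, together with the one secret qubit, which is negligible against the exponential term).

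With the CDQS recast in this fully quantum form, I would apply the second half of Theorem \ref{thm:CDQSandfRouting}: an $\epsilon$-correct, $\delta$-secure CDQS using an $n_E$-qubit resource and $n_M$-qubit messages yields a $\max\{\epsilon,2\sqrt{\delta}\}$-correct $f$-routing protocol using $n_E$ qubits of resource state and $4(n_M+n_E)$ qubits of message. Substituting $\epsilon=\delta=0$ gives $\max\{0,2\sqrt{0}\}=0$, so the resulting protocol is perfectly correct, and substituting the resource bounds gives a resource state of $2^{O(\sqrt{n\log n})}$ qubits and messages of $4\big(2^{O(\sqrt{n\log n})}+2^{O(\sqrt{n\log n})}\big)=2^{O(\sqrt{n\log n})}$ qubits, matching the claimed bounds.

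The argument is essentially a composition of the already-established implications (Theorem \ref{thm:LVW} giving CDS, Theorem \ref{thm:CDStoCDQS} giving CDQS, and Theorem \ref{thm:CDQSandfRouting} giving $f$-routing), so I expect no substantial obstacle; the only care needed is bookkeeping. In particular I would verify that perfectness is preserved along the whole chain, since Theorem \ref{thm:CDStoCDQS} turns $\epsilon$-correct CDS into $2\sqrt{\epsilon}$-correct CDQS and Theorem \ref{thm:CDQSandfRouting} turns $\delta$-secure CDQS into $2\sqrt{\delta}$-correct $f$-routing on the $f=0$ instances; the two square-root losses are harmless only because Theorem \ref{thm:LVW} provides an exactly correct and exactly secure protocol. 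The one place I would double-check is the replacement of classical randomness by EPR pairs, confirming that measuring the resource in a fixed basis reproduces exactly the shared-randomness distribution the CDQS assumes, so that neither correctness nor security degrades.
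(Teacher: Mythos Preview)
Your proposal is correct and follows the same approach as the paper: the paper's proof simply states ``Immediate from corollary \ref{thm:subexpCDQS} and theorem \ref{thm:CDQSandfRouting},'' and you have spelled out exactly this composition with the appropriate bookkeeping on perfectness and resource counts. The additional remark about replacing shared randomness by measured EPR pairs is a sensible clarification the paper leaves implicit.
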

\begin{proof}\,
    Immediate from corollary \ref{thm:subexpCDQS} and theorem \ref{thm:CDQSandfRouting}.
\end{proof}

Before moving on, we give some brief context for the construction in \cite{liu2017conditional} that leads to sub-exponential CDS protocols. 
The reader interested in the construction may refer to the original reference \cite{liu2017conditional} or to the lectures \cite{BIUschool}. 

The construction begins with a reduction from a CDS protocol for a general function $f(x,y)$ to a particular function we denote as $INDEX(x,D_y)$, which takes as input Alice's input $x$ and the string $D_y=f(00...00,y)f(00...01,y)...f(11...11,y)$ of length $2^n$. 
Notice that
\begin{align}
    f(x,y) = INDEX(x,D_y) = D_y[x]
\end{align}
This means in particular that a good CDS protocol for the index function will lead to a good CDS protocol for all functions. 

The construction of a CDS for INDEX begins with a connection to the cryptographic task of \emph{private information retrieval} (PIR). 
In a PIR task, a client interacts with several non-communicating servers to retrieve an item with label $x$ from a database $D$, call the item $D[x]$. 
Security of the PIR requires that the databases not be able to determine the label $x$. 
This primitive has long been noted to be related to CDS, and in fact CDS was first defined in the context of studying PIR schemes \cite{GERTNER2000592}. While it is not known if all PIR schemes induce CDS schemes, techniques used in PIR constructions have led to CDS schemes. 
Theorem \ref{thm:LVW} was proven by applying tools from a sub-exponential PIR scheme presented in \cite{dvir20162} to construct a CDS. 

The PIR scheme developed in \cite{dvir20162} relies on the existence of large \emph{matching vector families}. 
A set of pairs of vectors $\{(\mathbf{u}_i,\mathbf{v}_i)\}_{i=1}^N$ is said to be a $S$-matching vector family if
\begin{align}
    \langle \mathbf{u}_i, \mathbf{v}_i \rangle &= 0 \\
    \langle \mathbf{u}_i, \mathbf{v}_j \rangle &\in S,\ \text{when } i \neq j.
\end{align}
Matching vector families find other applications as well, for instance in the construction of locally decodable codes. 
An outstanding question is how large $N$ can be taken for vectors chosen in a given vector space. 
In \cite{grolmusz2000superpolynomial}, the authors constructed large matching vector families over $\mathbb{Z}_6^\ell$, which lead to efficient PIR schemes.  
Using similar techniques, the same matching vector families lead to the efficient CDS scheme of \cite{liu2017conditional}.

\section{Discussion}\label{sec:discussion}

\vspace{0.2cm}
\noindent \textbf{Collapse of CDQS and PSQM complexity with PR boxes}
\vspace{0.2cm}

A Popescu-Rohrlich box is a hypothetical device, shared by distant parties Alice and Bob, which allows them to satsify the CHSH game with probability one. 
More concretely, given input $x$ on Alice's side and input $y$ on Bob's side, the device returns $a$ to Alice and $b$ to Bob such that $a\oplus b = x\wedge y$. 
Broadbent \cite{broadbent2016popescu} showed that if Alice and Bob share PR boxes, they can implement any unitary as a non-local computation using only linear entanglement and a linear number of uses of a PR box. 
This can be seen as a quantum analogue of a similar collapse that occurs in the setting of classical communication complexity \cite{van2013implausible}. 
Because efficient non-local computation  protocols lead, via theorems \ref{thm:CDQSandfRouting} and \ref{thm:CFEtoPSQM}, to efficient CDQS and PSQM protocols, Broadbent's result similarly leads to a collapse to linear cost for PSQM and CDQS. 

In fact, an even stronger collapse follows for CDQS, PSQM and $f$-routing by applying the result of \cite{van2013implausible} showing the collapse of classical communication complexity in the presence of PR boxes. 
In particular, PR boxes can be used to reduce computing $f(x,y)$ with $x$ held by Alice and $y$ held by Bob to computing $\alpha+\beta$, with $\alpha$ computed from $x$ plus the output of PR box uses, and $\beta$ computed from $y$ along with PR box uses.\footnote{See \cite{kaplan2011non} for results on the number of PR box uses necessary. Note that in our setting we can use the PR boxes sequentially if desired.} 
In the CDS or PSM settings then, we need only execute CDS or PSM on the function $g(\alpha,\beta)=\alpha+\beta$ with the inputs being single bits. 
This can be done with $O(1)$ randomness. 
Via theorems \ref{thm:CDStoCDQS} and \ref{thm:PSMgivesPSQM} then, CDQS can be done with $O(1)$ EPR pairs and PSQM with $O(1)$ shared random bits. 
We can further note that from theorem \ref{thm:CDQSandfRouting} this means $f$-routing can be performed for arbitrary functions using only $O(1)$ EPR pairs when given access to PR boxes. 

\vspace{0.2cm}
\noindent \textbf{Connections to quantum gravity and holography}
\vspace{0.2cm}

In the study of quantum gravity the holographic principle \cite{hooft1993dimensional,susskind1995world} asserts that gravity in $d$ dimensions should have an alternative quantum mechanical description in just $d-1$ dimensions. 
This principle is realized manifestly in the context of the AdS/CFT correspondence \cite{maldacena1999large,witten1998anti}. 
In \cite{may2019quantum}, holography and the AdS/CFT correspondence was related to non-local quantum computation. 
In particular, they argued local interactions in the higher dimensional gravity picture are reproduced as non-local quantum computations in the lower dimensional quantum mechanical picture.
As a consequence, computations in the presence of gravity may be constrained by limits on entanglement in the dual quantum mechanical picture \cite{may2022complexity}, or interactions in the gravity picture may imply more computations can be performed non-locally than we have so far found protocols for. 

In this work, we see that as a consequence of their connections to NLQC, CDQS and PSQM are also related to holography. 
One can also realize CDQS and PSQM protocols directly in holography, using connections similar to the one in~\cite{may2019quantum} or the more recent~\cite{may2023non}. 
This implies that, as with NLQC, constraints on CDQS and PSQM correspond to constraints on bulk interactions. 
Conversely, the holographic picture has been argued \cite{may2020holographic,may2022complexity} to suggest that a larger class of unitaries than is currently known should have efficient non-local implementations.
Importantly, the connection between CDQS and PSQM is so far limited to the 2 input player case, which is also the case that ties to NLQC. 
It may be possible to explore a connection between CDQS and PSQM to holography that is realized more directly, not via NLQC, which could extend the connection to settings with many input players. 

Recalling \cite{may2022complexity}, it was argued that the holographic connection suggests that at least unitaries in BQP should be implementable non-locally. 
From this perspective it is interesting that, from the connection to secret sharing, we now have at least one function outside of P but inside of BQP with an efficient non-local implementation. 

\vspace{0.2cm}
\noindent \textbf{Quantum analogues of recent classical results}
\vspace{0.2cm}

Non-local quantum computation was previously thought to have no (non-trivial) classical analogue: taking the inputs and outputs of a computation to be classical, one can immediately perform the computation in the non-local form of figure \ref{fig:non-localcomputation} without use of shared randomness.\footnote{This amounts to a special case of the impossibility result \cite{chandran2009position}. To see why it is true, consider copying the inputs $x$ and $y$ where they are received and forwarding a copy across the communication channel.}
The connections pointed out in this article give non-trivial classical analogues of non-local computation: CDQS is equivalent to a special case of NLQC, and has a non-trivial classical version (CDS), and similarly to PSM. 

Traditionally, classical analogues are a source of techniques and conjectures in the quantum setting. 
Taking this perspective on CDS and CDQS, two recent results in the CDS literature are natural candidates to revisit in the quantum setting. 

First, in \cite{applebaum2021placing}, the authors relate CDS to various communication complexity scenarios. 
In particular they consider the communication complexity class $AM^{cc}$, defined as follows. 
Alice and Bob hold inputs $x$ and $y$ and share randomness $r$, while a referee holds $(x,y)$. 
The referee will send Alice and Bob a proof $p=p(x,y,r)$ that both Alice and Bob should accept when $f(x,y)=1$, and both should reject if $f(x,y)=0$. 
$AM^{cc}(f)$ is the minimal length of the needed proof, and $AM^{cc}$ is the class of functions for which the proof can be taken to be of polylogarithmic length. 
Relating this to CDS, they show that for some constant $c>0$,
\begin{align}\label{eq:CDSandAM}
    CDS(f) \geq (\max\{AM^{cc}(f),coAM^{cc}(f) \})^c - \text{polylog}(n)
\end{align}
where CDS($f$) is the communication complexity of a CDS protocol for $f$ (allowing for imperfect correctness and imperfect security), and a similar bound differing only by constant factors exists for randomness complexity. 
Unfortunately, there are no explicit functions known to be outside $AM^{cc}\cap coAM^{cc}$, but nonetheless equation \ref{eq:CDSandAM} is an intriguing result. 
A natural question is if a similar inequality holds when considering CDQS and quantum communication complexity classes. 

Second, the related work \cite{applebaum2017private} studied the relationship between zero-knowledge proofs and both CDS and PSM. 
The starting point is a zero-knowledge variant of the class $AM^{cc}$ discussed above, where an additional requirement that the proof $p$ not reveal anything about $(x,y)$ is imposed. 
This is refereed to as the class $ZAM^{cc}$. 
The authors of \cite{applebaum2017private} found that a PSM protocol with perfect correctness and privacy leads to a similarly efficient ZAM protocol, and that a ZAM protocol (which may be approximate) leads to a similarly efficient CDS protocol. 
Again, it is natural to ask for a quantum analogue of these results. 

\vspace{0.2cm}
\noindent \textbf{Classical analogues of further non-local computations}
\vspace{0.2cm}

In this paper we relate two special cases of non-local quantum computation --- $f$-routing and coherent function evaluation --- to other cryptographic tasks, CDQS and PSQM. 
One aspect of these relationships we have emphasized is that while non-local computation naively becomes trivial when considered classically\footnote{In particular we have in mind that a non-local computation with only classical inputs can always be implemented without pre-distributed resources \cite{chandran2009position}.}, PSQM and CDQS have natural classical variants. 
This raises the question as to whether NLQC generally has a good classical analogue, perhaps one exploiting the same communication pattern as CDS and PSM, and employing an appropriate secrecy condition. 
Less ambitiously, we can also ask about classical analogues of other commonly studied non-local quantum computation schemes. 
One commonly studied non-local computation which we have not considered here is the BB84 task \cite{buhrman2014position,tomamichel2013monogamy}, and its extension to $f$-BB84 \cite{bluhm2022single,escolàfarràs2022singlequbit}. 
It would be interesting to understand if $f$-BB84 is related to a classical primitive. 

\vspace{0.2cm}
\noindent \textbf{Acknowledgements:} We thank Adrian Kent and David P\'erez-Garcia for helpful discussions.
Thomas Koutsikos pointed out some errors in our proofs which have now been corrected.
RA and HB were supported by the Dutch Research Council (NWO/OCW), as part of the Quantum Software Consortium programme (project number 024.003.037).
PVL and HB were supported by the Dutch Research Council (NWO/OCW), as part of the NWO Gravitation Programme Networks (project number 024.002.003).
FS was supported by the Dutch Ministry of Economic Affairs and Climate Policy (EZK), as part of the Quantum Delta NL programme.
AM is supported by the Simons Foundation It from Qubit collaboration, a PDF fellowship provided by Canada's National Science and Engineering Research council, by Q-FARM, and the Perimeter Institute of Theoretical Physics.
Research at Perimeter Institute is supported in part by the Government of Canada through the Department of Innovation, Science and Economic Development Canada and by the Province of Ontario through the Ministry of Colleges and Universities.

\appendix 

\bibliographystyle{unsrtnat}
\bibliography{biblio}

\end{document}